\documentclass[a4paper,12pt]{article}

\usepackage[margin=1in,footskip=0.25in]{geometry}
\usepackage{cite}
\usepackage{amsmath,amssymb,amsfonts}
\usepackage{amsthm}
\usepackage{thm-restate}
\usepackage{mathtools}
\usepackage{mathpartir}
\usepackage[inference]{semantic}
\usepackage{subcaption}

\usepackage{graphicx}
\usepackage{xcolor}
\usepackage{tikz}
\usepackage{tikz-cd}

\theoremstyle{plain}
\newtheorem{theorem}{Theorem}
\newtheorem{proposition}{Proposition}
\newtheorem{lemma}{Lemma}
\usepackage{cmll}
\usepackage{ebproof}

\theoremstyle{definition}
\newtheorem{definition}{Definition}
\newtheorem{remark}{Remark}
\newtheorem{example}{Example}

\makeatletter
\newif\if@draft\@drafttrue
\let\commentsize=\footnotesize
\if@draft%
\newcommand{\fo}[1]{{\commentsize\color{teal}[#1 -fo]}}
\newcommand{\udl}[1]{{\commentsize\color{blue}[#1 -udl]}}
\else
\newcommand{\fo}[1]{\ignorespaces}
\newcommand{\udl}[1]{\ignorespaces}
\fi
\makeatother

%DEENDENCIES
\usepackage{bbm}
\usepackage{mathrsfs}
\usepackage{amsmath}
\usepackage{amssymb}
\usepackage{stmaryrd}

\makeatletter
\newcommand{\xmultimap}[2][]{\ext@arrow 0359\multimapfill@{#1}{#2}}
\newcommand*{\multimapfill@}{%
	\arrowfill@\relbar\relbar\multimap}
\makeatother

%CIRCUITS (CRL)
%Sets

%Metavariables
   
% labels or variables

%% (Sakayori) I'm avoiding the use of mathcal because I want to use them for categories

%Distinguished elements

%Language

%Types

%Typing

%Other

%LANGUAGE (PQMR)
%Sets

%Metavariables

%Distinguished elements

%Term Syntax

%Type Syntax

%Index Syntax

%Typing

%Operational Semantics

%Soundness

%Other

% OTHER

%Sets
 
%Metavariables

%Functions

%Util

%LANGUAGES

\makeatletter
% SEMANTICS
%% Semantic Brackets
\newcommand*\sem[1]{\llbracket #1 \rrbracket}

% Categories
%% Categories
\let\category=\mathcal

\newcommand*{\catM}{\category M}

%% Objects

%% Premonoidal

\newcommand*{\@circmonadSymb}{{\mathcal T}_{\catM}}
\newcommand*{\circmonadNoSt@r}[3]{\@circmonadSymb(#1, #2, #3)}
\newcommand*{\circmonadSt@r}{\@circmonadSymb}
\newcommand*{\circmonad}{\@ifstar\circmonadSt@r\circmonadNoSt@r}

%% Effects

%%% preorder

%text

%symbols
\newcommand*{\etax}{\overline{x}}

\newcommand*{\etaty}{\overline{\ty}}

\newcommand*{\eqdef}{\coloneqq}

\newcommand*{\Ob}[1]{\mathrm{ob}(#1)}

\newcommand{\rest}{\mathcal{RT}}

\newcommand{\Of}{\mathbb{O}_f}

%sets

\newcommand*\rTerms{\Lambda_r}

\newcommand*\ract[2]{#1\{#2\}}

\newcommand*\subst[3]{ #1 \subm{#3}{#2}}

\newcommand*{\subm}[2]{\{#1/#2\}}

% substitution with optional argument for \big, \bigg, etc.

% n-linear (non rigid) substitution

%functions
\newcommand*{\abso}[1]{\lvert #1 \rvert}

\newcommand*{\fv}[1]{\mathsf{fv}(#1)}

\newcommand*{\length}[1]{\mathsf{len}(#1)}

\newcommand*{\morph}[1]{\text{hom}(#1)}

\newcommand*{\actr}[2]{{#1} \{#2\}}

\newcommand*{\qt}[1]{\mathsf{qt}\left(#1\right)}

\newcommand*{\intt}[1]{\mathsf{I}(#1)}

\newcommand*{\size}[1]{\mathtt{size}(#1)}

%constructors

%\newcommand*{\digg}[1]{\mathrm{digg}(#1)}

\newcommand*{\seq}[1]{\langle #1 \rangle}

\newcommand*{\seqdots}[3]{\seq{#1_{#2}, \dots, #1_{#3} } }

\newcommand*{\la}[1]{\lambda #1.}

%contexts

%terms

%resource terms

%\reductions

\newcommand*{\toli}{  \Rrightarrow }

\newcommand*{\toexp}{ \Rightarrow }

%\type

\newcommand*{\ty}{a}

\newcommand*{\tyl}{\vec{a}}
\newcommand*{\tysx}{\overline{x}}
\newcommand*{\tysy}{\overline{y}}
\newcommand*{\tysa}{\overline{a}}
\newcommand*{\tysf}{\overline{f}}
\newcommand*{\tysg}{\overline{g}}
\newcommand*{\tysh}{\overline{h}}

%distributors

\newcommand{\idm}{\textsf{id}}

\makeatletter
\newcommand*{\doublerightarrow}[2]{\mathrel{
  \settowidth{\@tempdima}{$\scriptstyle#1$}
  \settowidth{\@tempdimb}{$\scriptstyle#2$}
  \ifdim\@tempdimb>\@tempdima \@tempdima=\@tempdimb\fi
  \mathop{\vcenter{
    \offinterlineskip\ialign{\hbox to\dimexpr\@tempdima+1em{##}\cr
    \rightarrowfill\cr\noalign{\kern.5ex}
    \rightarrowfill\cr}}}\limits^{\!#1}_{\!#2}}}
\newcommand*{\triplerightarrow}[1]{\mathrel{
  \settowidth{\@tempdima}{$\scriptstyle#1$}
  \mathop{\vcenter{
    \offinterlineskip\ialign{\hbox to\dimexpr\@tempdima+1em{##}\cr
    \rightarrowfill\cr\noalign{\kern.5ex}
    \rightarrowfill\cr\noalign{\kern.5ex}
    \rightarrowfill\cr}}}\limits^{\!#1}}}
\newcommand{\colim@}[2]{\vtop{\m@th\ialign{##\cr
    \hfil$#1\operator@font lim$\hfil\cr
    \noalign{\nointerlineskip\kern1.5\ex@}#2\cr
    \noalign{\nointerlineskip\kern-\ex@}\cr}}}
\newcommand{\colim}{%
  \mathop{\mathpalette\colim@{\rightarrowfill@\textstyle}}\nmlimits@
}

\makeatother

\title{Linearization via Rewriting}
\author{Ugo dal Lago and Federico Olimpieri}

\begin{document}

\maketitle

\begin{abstract}
	We introduce the \emph{structural resource $\lambda$-calculus}, a new 
	formalism in which strongly normalizing terms of the $\lambda$-calculus can  
	naturally be represented, and at the same time any type derivation can be 
	internally rewritten to its linearization. The calculus is shown to be 
	normalizing and confluent. Noticeably, \emph{every} strongly normalizable 
	$\lambda$-term can be represented by a type derivation. This is the first example 
	of a system where the linearization process takes place \emph{internally}, while remaining 
	purely \emph{finitary} and \emph{rewrite-based}.
\end{abstract}

\section{Introduction}
Slightly less than a century since the pioneering contributions by Church 
and Gentzen, the study of proof theory, type systems and the $\lambda$-calculus
has produced a multitude of logical and computational formalisms in which  
proofs and programs can be represented and the process of either 
eliminating cuts or evaluating a program by forms of rewriting can be studied 
directly within the system at hand, often giving rise to confluence and 
normalization results (see, e.g.,~\cite{Girard73,Reynolds74,ML80,Parigot92}). 
The considered systems are often very expressive, which is why the aforementioned 
normalization properties become logically non-trivial, and can hardly be 
carried out combinatorially.

Since the mid-eighties, the situation sketched above has somehow changed: the 
advent of linear logic~\cite{Girard87} has allowed a finer structure to be attributed 
to the underlying computation process. By identifying \emph{structural} logical rules, 
and in particular \emph{contraction}, as the bottleneck in the normalization results, 
linear logic gave rise to the introduction of proof and type systems in which 
structural rules are \emph{very severely restricted} or \emph{not allowed at 
all}. As a consequence, normalization properties can be proved by purely 
combinatorial means: rewriting and cut-elimination have the effect of 
strictly reducing the size of the objects at hand. Systems of this kind, which we will 
call \emph{quantitative} --- simply to distinguish them from the somehow 
\emph{qualitative} systems we referred to in the previous 
paragraph --- include not only multiplicative linear logic~\cite{Girard87,DanosRegnier89}, but also 
non-idempotent intersection types~\cite{Gardner94,DeCarvalho18}, the resource 
$\lambda$-calculus~\cite{Boudol93,EhrhardRegnier08}, and certain 
type systems based on restrictions of exponentials such as the so-called
light logics~\cite{GSS92,Girard98,Lafont04}. In quantitative systems, there 
is an aspect of finiteness absent in qualitative systems, which makes 
them particularly suitable to the characterization of 
complexity classes and in general whenever a fine control over the use of 
resources is considered necessary. A proof or program, in quantitative systems, 
is an object that reveals \emph{everything} about its dynamics, \emph{i.e.}, about the evolution it 
will undergo as a result of the evaluation.

Since the early days of linear logic, there have been many attempts to relate 
quantitative and qualitative systems by studying how to transform proofs and 
programs in qualitative systems into equivalent objects in quantitative 
systems. Already in Girard's original work~\cite{Girard87} there is a trace of 
all this in the so-called \emph{approximation theorem}, which states that every 
\emph{cut-free} proof of multiplicative \emph{and exponential} linear logic can 
be approximated, in a very natural sense, by a proof of multiplicative linear 
logic. We should also mention the line of work about
\emph{linearizing} the $\lambda$-calculus~\cite{Kfoury00,AlvesFlorido03,
	AlvesFlorido05,AlvesVentura22}, in which one studies to 
what extent pure $\lambda$-terms can be linearized into linear terms, 
establishing on one hand that this process can be applied to all 
normalizing terms following their reduction sequence, and that on the other 
hand a relationship exists between the emerging process and typeability in non-idempotent 
intersection types.  We should also mention the crucial contribution of Ehrhard 
and Regnier~\cite{EhrhardRegnier08}, who through an inductive and very elegant definition, 
provide a notion of \emph{Taylor expansion} for any pure $\lambda$-term as the 
(infinite) set of its linear approximants. Continuing along the same vein, we 
must certainly mention the results on the extensional collapse of quantitative 
denotational models into qualitative models \cite{er:collapse}. The collapse induces a categorical understanding of the relationship between idempotent (\textit{i.e.}, qualitative) and non-idempotent (\textit{i.e.}, quantitative) intersection types. However, this result does not directly yield an \emph{explicit} program transformation.

In all these attempts, the infinitary aspects of qualitative systems 
are either neglected by considering only cut-free proofs, or reflected 
in the target of the transformation, like in the Taylor expansion. In the case 
of Kfoury's linearization~\cite{Kfoury00}, in particular, the process of computing the linear 
approximant of a pure $\lambda$-term is delegated to a notion of \emph{expansion} 
that, even if formally defined in terms of rewriting, is trivially not 
normalizing, because it consists in increasing the size of the right argument 
bag of an application, thus matching the number of arguments required by the 
related abstraction. In a certain sense, therefore, the linearization process 
remains fundamentally infinitary, similar to what happens in the Taylor 
expansion~\cite{EhrhardRegnier08}. The same holds for the remarkable recent 
contribution by Pautasso and Ronchi~\cite{PautassoRonchi23}, which 
introduces a quantitative version of the simple types in which derivations 
carry the same kind of quantitative information as in 
non-idempotent intersection types. The system turns out to be equivalent 
to the usual simply-typed $\lambda$-calculus in terms of typeability, but the
proof of this fact is carried out externally to the system, and is nontrivial.

Could linearization be made an \emph{integral part} of the underlying system? 
Is there a way to compute a linear version of a program (or proof) so that 
this transformation happens \emph{within} the system? The authors are strongly 
convinced of the interest of these questions and 
think that understanding to what extent all this is possible can help shedding
some light on the relationship between qualitative and quantitative systems.

The main contribution of the present paper is precisely to introduce a language and type 
system in which qualitative systems can be naturally embedded
\emph{and} an essentially quantitative fragment can be isolated. 
More specifically, terms typable in qualitative systems such as the
simply-typed $\lambda$-calculus can be compositionally embedded into the 
calculus. Linearization, this is the real novelty, becomes a
form of \emph{rewriting} turning any term into a 
linear approximation of it. The role of this rewriting process, called 
exponential reduction, is to eliminate the use of structural rules \emph{without} 
performing any $\beta$-reduction, the latter being part 
of the system itself. Crucially, exponential reduction is strongly normalizing.

The calculus thus obtained can be seen as an extension of 
Boudol's resource $\lambda$-calculus~\cite{Boudol93}, and we call it the \emph{structural 
resource $\lambda$-calculus}. Technically, the key ingredients in its 
definition are as follows. At the level of terms, there is both the possibility 
of having a nonlinear use of variables, but also of passing a bag of terms to 
an abstraction, in the style of the resource $\lambda$-calculus. Crucially, 
abstractions are labeled so as to identify the structural rules needed to fire
the corresponding redex. An intersection type assignment system is an integral 
part of the calculus and guarantees the termination of each sequence of 
$\beta$-reduction steps. As already mentioned, the latter is not the only 
available reduction rule, and is complemented in an essential way by exponential
reduction, giving rise to what we call \emph{structural reduction}.

In addition to the definition of the structural resource $\lambda$-calculus, the key 
technical contributions of this paper are proofs of confluence and strong 
normalization for structural reduction. Remarkably, every strongly normalizable
$\lambda$-term can be represented in a natural way by a structural resource
term, the embedding being based on \emph{idempotent} intersection types~\cite{dez:int}.

\section{A Bird's Eye View on Linearization and Intersection Types}\label{sec:birdseye}
The purpose of this section is to informally describe the process of 
linearization in the $\lambda$-calculus, anticipating some of the problems we 
have to face, at the same time introducing the non-specialist reader to the topic. 
We will proceed by analyzing a very 
simple example of a $\lambda$-term, which will also constitute a sort of 
running example in the rest of this manuscript.

Let us consider the $\lambda$-term $MN$, where
\[
M\eqdef \lambda x. w(x(xy))(xy)
\qquad
N\eqdef \lambda z.qzz
\]
To the above term we can attribute simple types in the obvious way:
\begin{equation}
w :o\rightarrow o\rightarrow o,y:o,q:o \rightarrow o\rightarrow o\vdash MN:o
\label{equ:exsimpletypes}
\end{equation}
The term $MN$ is not a linear term at all, however, in particular due to the 
nonlinear use of the variables $x$ and $z$ by the terms $M$ and $N$, 
respectively. Understanding the quantitative aspects of the dynamics of the 
term looking at its typing is not immediate, i.e., it is not possible in a direct way. 
By $\beta$-reducing the term, however, one realizes quite easily that the variable $y$, although 
occurring free only \emph{twice} in the term, will occur free \emph{six} times in its normal 
form. Depending on the use that $w$ and $q$ will make of their arguments, this 
number could grow or decrease.

Would it be possible to account for the calculations above \emph{from
within a type system}? The answer is positive: as is well-known, non-idempotent 
intersection types~\cite{Gardner94,DeCarvalho18} allow us to type 
the same term in the following way:
$$
   w :  \seq{\seq{o} \multimap \seq{o} \multimap o}, y : \seq{o}^6 , q : \seq{\seq{o} \multimap \seq{o} \multimap o}^4      \vdash MN : o
$$
where the sequence $ \seq{\ty_1,\dots, \ty_k}  $ stands for the (commutative but 
not idempotent) intersection type $ \ty_1 \cap \cdots \cap \ty_k $, and
$\seq{a}^n$ stands for $\seq{a,\ldots,a}$, where $ a$ appears exactly $ n$ times. Once a term is typed in non-idempotent intersection types, its type derivation tells us 
\emph{everything} there is to know about the dynamics of the term, including its length. 
If we want the size 
of the term itself to reflect the size of the derivation, that is, for the 
former to become a unique and canonical description of the latter, we must move 
to the so-called resource terms, where the second argument of each application 
is not a single term, but a \emph{bag} of terms. Resource terms correspond to \emph{linear approximations} of ordinary $\lambda $-terms. In our example, we obtain the resource terms $ s :=  \lambda x. w \seq{x \seq{x\seq{y,y}, x\seq{y,y}}} \seq{x\seq{y,y}} $ and $ t := \lambda z. q \seq{z} \seq{z}$, getting the following typing
for the application $s\seq{t}^4$:
\begin{equation}
          w :  \seq{\seq{o} \multimap \seq{o} \multimap o}, y : \seq{o}^6 , q : \seq{\seq{o}  \multimap \seq{o} \multimap o}^4     \vdash  s \seq{t}^4 : o
\label{equ:exresource}
\end{equation}

How can we compute well-behaved linear approximations (\textit{i.e.}, typed 
resource terms) of a given $\lambda$-term? If one is only interested
in the term's semantics, and given the aforementioned connection between 
$ \lambda$-term dynamics and its quantitative approximations, one possibility 
is to first compute the normal form of the term and then reconstruct the 
corresponding intersection type derivation. This can be seen as a rephrasing 
of Girard's \emph{approximation theorem}. However, this answer is not 
completely satisfactory, since it relies on the evaluation of the program. 
We can thus restate our question as: how do we compute linear approximations 
without reducing (to $ \beta$-normal form) the considered program?

Answering this question is the main goal of this paper. Our main contributions are 
precisely a language and type system in which \emph{all} the above-mentioned 
type derivations can live and where we can define an appropriate notion of 
rewriting which performs linearization. Below, we describe informally how the 
transition from (\ref{equ:exsimpletypes}) to (\ref{equ:exresource}) can 
occur \emph{within} the system. First, let us follow a naive proof-search heuristic to 
compute the non-idempotent intersection typing of our term. In order to do so, we first 
\emph{embed} $M $ and $N $ separately in the resource calculus as terms 
$  u = \la{x}  x \seq{ x \seq{y}  } \seq{x \seq{y}}  $ and  $ v = \la{z} q \seq{z} \seq{z}$ with  
\begin{align*}
w :  \seq{\seq{o} \multimap \seq{o} \multimap o}, y : \seq{o}^2 &\vdash u : \seq{\seq{o} \multimap o}^3 \multimap o  \\
%\label{utype}
 q :\seq{\seq{o} \multimap \seq{o} \multimap o} &\vdash v  : \seq{o}^2 \multimap o 
%\label{equ:vype}
\end{align*}
Now, our proof search gets stuck, since the type of the argument
$v$ is not the one the function $u$ requires. This happens because 
$v$ records the presence of \emph{two} occurrences of $z$, while $ u $ 
only demands \emph{one} input. One possible solution would be to extend the 
unification algorithm for simple types to the intersection type setting, 
which is precisely what has been done in \cite{PautassoRonchi23}. However, we 
follow a different path. The types $\seq{o}^2 \multimap o   $ and $ \seq{o} \multimap o $ 
are different but \emph{coherent}, in the sense that their syntactic tree is the same 
up to the number of copies of types occurring inside intersections. In particular, the two are 
connected by a \emph{nested} structural rule, which is defined on the contraction 
$c_o:\seq{o} \to \seq{o,o}. $ We then extend the syntax and the typing of the 
resource calculus in order to account for (nested) structural rules. In particular, 
we shall allow the following type inference:
\[                 
\begin{prooftree}
\hypo{ q :\seq{\seq{o} \multimap \seq{o} \multimap o} \vdash v  : \seq{o}^2 \multimap o } \hypo{ c_o : \seq{o} \to \seq{o,o}}
\infer2{   q :\seq{\seq{o} \multimap \seq{o} \multimap o} \vdash p = \la{z^{c_o}} q \seq{z} \seq{z} : \seq{o } \multimap o}
\end{prooftree}            \]
from which we can clearly infer the typing $    w :  \seq{\seq{o} \multimap \seq{o} \multimap o}, y : \seq{o}^2,  q :\seq{\seq{o} \multimap \seq{o} \multimap o} \vdash u \seq{p}^3 : o .   $ The term $ u\seq{p}^3$ is not a \emph{linear} 
approximation of $ MN$, since we are allowing arbitrary structural rules to be used on bound variables. 
However, we shall recover the linear resource term $ s\seq{t}^4 $ as the \emph{normal form} of $ u\seq{p}^3 $ under 
a novel notion of labelled rewriting, following which we allow for the transformation
\[       
\la{z^{c_o}} q \seq{z} \seq{z} : \seq{o} \multimap o \to_{ c_o \multimap o  } \la{z} q \seq{z} \seq{z}  : \seq{o,o} \multimap o . 
\]
Note that we index the reduction with type morphisms connecting the typing of the two involved 
terms. We exploit an appropriate notion of \emph{transformation} on type derivations, that consists of an action of the morphism $ c_o \multimap o  $ on the term $ u , $ denoted as $   [c_o \multimap o] u $. This transformation will track all the occurrences of the bound variable $ x $ in $ u$, replacing their type $ \seq{o} \multimap o $ with $ \seq{o,o} \multimap o .$ In order to do so, the inputs of $x $ are \emph{duplicated} accordingly, obtaining the term $s $. The process consists then in a \emph{deterministic} and confluent notion of \emph{structural rule} elimination, which is performed only \emph{on-demand}. It is worth noting that the rewriting does not perform any substitution. In particular, if $ s \to_f t$ the two terms $s$ and $t$ are coherent and, hence, are (generally non-linear) approximations of the same ordinary $\lambda $-term.

\section{Some Mathematical Preliminaries}\label{sec:prelim}
Following~\cite{tao:gen}, we see intersection types as \emph{lists} of types. This will allow us to obtain 
a simple and elegant categorical framework, where intersection plays the role of a 
(strict) tensor product, and morphisms are given by structural rules (\textit{i.e.}, permutations, 
copying, and erasing). In order to properly establish the formal framework, we first need 
to introduce categories of integers, from which the list construction naturally arises.

 \subsection{Categories from Natural Numbers}

We consider the category $ \Of^c$ whose objects are natural numbers and in which
$\Of^c(m, n):=[n]^{[m]}$, where $ [n] = \{ 1, \dots, n \} $ for $n \in \mathbb{N}$.
As a consequence,  $[0]=\emptyset$ (so the number $0$ is the initial object, 
$\emptyset$ being the unique map in $[m]^{[0]}$). We generally denote composition of morphisms as either $ g \circ f $ or $ gf$. The category $ \Of^c $ is symmetric \emph{strict} monoidal (cocartesian in particular), 
with tensor product given by addition. We define the subsets 
of \emph{symmetries} of $ \Of^c (m,n) $ as $ \Of^{l}(m, n) = 
\{  \alpha \in \Of (m, n) \mid \alpha \text{ is bijective}     \}$.
Evidently, $ \Of^l (m,n)$ determines a subcategory of $ \Of^c$, the category 
$ \Of^l$ of \emph{symmetries} (or \emph{permutations}). In what follows, 
we work with the parametric category 
$ \Of^\spadesuit$, where $ \spadesuit \in \{ c,l \} . $

% Given integers $  k_1,\dots,k_n , n, m  $ and a function $    \alpha : [m] \to [n]              $ we define a function $ \alpha^\star$ as follows:\[   \alpha^\star : \left[  \sum_{j =1}^m k_{\alpha (j)}  \right]  \to  \left[\sum_{i = 1}^n k_i \right]    \qquad  \qquad {\alpha^\star}\left(\sum_{j =1}^{l-1} k_{\alpha (j)} + p \right) = \sum_{j = 1}^{\alpha(l) -1} k_{j} + p              \] with $ l \in [m] $ and $ 1 \leq p \leq k_{\alpha (l)} .$ The funtion $ \alpha^\star$ will allow us to give a completely explicit definition of structural rule actions on multiple type contexts.

\subsection{Lists}

We often work with (ordered) lists whose elements are taken from a given set $X$.
Such lists are ranged over by metavariables like $ \vec{a}, \vec{b}, \vec{c} \dots$, and
one such list can also be written explicitly as $ \seq{a_1, \dots, a_k} $, where
$k \in \mathbb{N}$ and $ \ty_1,\dots, \ty_k \in X$. The \emph{list concatenation} of $\vec{a}$ and $\vec{b}$ is 
indicated as $ \tyl \oplus \vec{b} . $ We write $\abso{\tyl}  $ to denote the length of $ \tyl$. 

Given a list $ \vec{a} = \seqdots{\ty}{1}{k} $ and a function $ \alpha : [h] \to [k] $,
the expression $ \vec{a}^{[\alpha]}$ stands for the list of length $h$ defined as
$\seqdots{\ty}{\alpha(1)}{\alpha(h)}$. This turns $\alpha$ into a contravariant semigroup
action on $\vec{a}$:  given $ \length{\vec{a}} = n, \alpha : [m] \to [n]  $ and $ \beta : [l] \to [m]  $ we have that
$(\vec{a}^{[\alpha]})^{[\beta]} = \vec{a}^{[\alpha \beta]}$.

%the \emph{right 
%(or contravariant) semigroup action of $ \alpha $ on $ \vec{a} $} (w.r.t.\ composition 
%$\beta \circ \alpha$, or just $ \beta \alpha$, of functions -- monoid action if $k=k'$) is 
%$ \vec{a}^{[\alpha]}  := \seqdots{\ty}{\alpha(1)}{\alpha(h)}$. For $ \alpha : [h] \to [k] $ 
%	and $ \beta : [l] \to [h] $ we have that $      (\vec{a}^{[\alpha]})^{[\beta]} = \vec{a}^{[\alpha \beta]}$.

Given  a small category $ A $, we define its $ \spadesuit $-\emph{monoidal completion}, 
$\oc^{\spadesuit}(A)  $ as follows:
\begin{itemize}
\item $ \Ob{\oc^{\spadesuit}(A)} = \{  \seqdots{\ty}{1}{k} \mid k \in \mathbb{N} \text{ and } \ty_i \in \Ob{A}       \} .$
\item $ \oc^{\spadesuit}(A)(\seqdots{\ty}{1}{k}, \seqdots{b}{1}{l}) =$
 \[\{ \seq{\alpha; f_1, \dots, f_l} \mid \alpha \in \Of^{\spadesuit}(l, k) \text{ and } f_{i}\in A(\ty_{\alpha(i)},b_{i}) \text{ whenever } i \in [l] \}. \]
\end{itemize}
The category $ \oc^l(A) $ is \emph{symmetric} strict monoidal, while $  \oc^c(A) $ 
is \emph{cartesian} strict monoidal. In both cases, the tensor product is given by list concatenation.

\emph{Ground morphisms} are morphisms of the shape $\seq{\alpha; \idm, \dots, \idm} $, thus representing the composition of permutations, weakenings and contractions without nesting. Given a list $ \tyl $, an integer $  k \in \mathbb{N}$ and  a morphism $ \alpha \in \Of^{\spadesuit} ([\length{\tyl}] , [k]) , $ we define the \emph{ground morphism} induced by $\alpha $ as the morphism $  \seq{\alpha; \vec{\idm}} \in  \oc^{\spadesuit}(A)(\tyl, \tyl^{[\alpha]})  $. We shall often abuse the notation and denote $ \seq{\alpha; \vec{\idm}}$ as just $\alpha $. The family $ \alpha_{\tyl}  $ for $\tyl \in  \oc^{\spadesuit}(A) $ is, moreover, a \emph{natural} in $ \tyl$. We use $ \sigma, \tau \dots$ for ground morphisms induced by permutations.

\begin{example} We give some examples of morphisms in $ \oc^c(A)  $, where $   A = \{ a,b,c \}  . $ The following is a diagonal morphism (that we also call a \emph{contraction}) over $A$:
\[     c_\ty =   \seq{   \alpha;  \idm_\ty, \idm_\ty          } : \seq{\ty} \to \seq{  \ty,\ty   }                      \]  where $\alpha$ is the only map from $[2] $ to $[1] $. For all $ n\geq 1 $ we have a $ n$-ary contraction \[ c_{\ty}^{n} : \seq{\ty} \to \overbrace{\seq{\ty, \dots, \ty}}^{n \text{ times}} \] where for $ n = 1 $ we have that $ c^1 = \idm_\ty $. Projections (that we also call \emph{weakenings}) are given as follows:
\[      \pi_{\ty_i}  = \seq{\alpha;  \idm_{\ty_i}      }  : \seq{\ty_1,\ty_2} \to \seq{\ty_i}                        \]
where $ \alpha : [1] \to [2] $ maps $1 $ to $1 $. We denote as $\mathtt{T}_{\tyl} : \tyl \to \seq{} $ the terminal morphism.

\end{example}

Given sequences of lists $ \gamma = \tyl_1, \dots, \tyl_n $ and $ \delta = \vec{b}_1,\dots, \vec{b}_n $ we define their \emph{tensor product} as pointwise list concatenation: $ \gamma \otimes \delta = \tyl_1 \oplus \vec{b}_1, \dots, \tyl_n \oplus \vec{b}_n . $ Typing contexts in our system will consist of sequences of lists. The tensor product between them will model context concatenation. The ground morphism induced by $ \gamma = \tyl_1, \dots, \tyl_n $ consists of the sequence of ground morphisms of $\tyl_i $. Abusing the notation, we will use greek letters to denote also ground morphisms induced by sequences. 

\begin{remark}\label{rem:struct}
Consider the morphism on lists $  \seq{ \alpha ; \vec{f}}  = \seq{  \alpha ; f_1,\dots, f_l   } :  \tyl = \seqdots{\ty}{1}{k} \to \vec{b} = \seqdots{b}{1}{l} .  $ By definition, we have that $   f_i : \ty_{\alpha (i)} \to b_i  $ with $ i \in [l] . $ We remark that we can always factorize $ \seq{  \alpha ; f_1,\dots, f_l   }  $ as follows:
\[   \seq{\alpha; \vec{f}}    =   \vec{f} \circ \alpha_{\tyl} \]
with $  \vec{f}: \tyl^{[\alpha]} \to \vec{b}  $ and $  \alpha_{\tyl} : \tyl \to \tyl^{[\alpha]}$, defined as $ \alpha_{\tyl} = \seq{\alpha; \idm_{\ty_{\alpha(1)}}, \dots, \idm_{\ty_{\alpha(l)}}} $ and $ \vec{f} =   \seq{\idm; f_{1}, \dots, f_{l} }  .$
We call $     \alpha_{\tyl}       $ the \emph{ground fragment} of $ \seq{\alpha; \vec{f}} $, denoted as $ \mathsf{ground}(\seq{\alpha; \vec{f}}) $ and $ \vec{f} $ the \emph{nested fragment} of it, denoted as $ \mathsf{nest}(\seq{\alpha; \vec{f}})$. We can extend the factorization to sequences of morphisms $  \theta = \seq{\alpha_1; \vec{f}_1}, \dots, \seq{\alpha_n; \vec{f}_n}   $ in the natural way.
\end{remark}

\subsection{Labelled Transition Systems} 
In the rewriting systems we deal with in this paper, reduction steps are \emph{labeled}.  A \emph{labelled reduction system} consists of a labelled transition system 
$ \mathsf{R} = (\Lambda, \mathbb{S}, \to) $ whose labels form a partial monoid 
$ \mathbb{S} =  (S, \cdot, 1)  . $\footnote{We consider partial monoids because the labels of our rewriting systems will be morphisms in appropriate categories of types, where the monoidal product we want to consider is the composition of those morphisms. } Given a labelled reduction system $\mathsf{R} = (\Lambda, (S, \cdot, 1) , \to)$, we abuse the notation by also writing $ s \to s' $ to denote the relation $ \to \subseteq \Lambda \times \Lambda $ s.t.  $ s \to s'$ when there exists $  l \in S$ with $ s\to_l s'  $. Given a labelled reduction system $\mathsf{R} = (\Lambda, (S, \cdot, 1) , \to)  $, we say that $\mathsf{R} $ is \emph{normalizing} when, given $ s \in \Lambda ,$ there exists a reduction 
sequence $s = s_0 \to s_1  \to \dots \to s_n   $ such that  $s_n \nrightarrow  $, 
and in this case we call $s_n $ a \emph{normal form} of $s$. We say that $\mathsf{R} $ is 
\emph{strongly normalizing} (or \emph{terminating}) when it
is impossible to form an infinite reduction sequence  
$s_0 \to s_1  \to s_2 \to \ldots$

\begin{definition}[Reflexive and Transitive Extension]
Let $ \mathsf{R} = (\Lambda, (S, \cdot, 1), \to) $ be a labelled reduction system. 
We can thus  build its \emph{transitive and reflexive extension} 
$\mathsf{R}^\ast = ( \Lambda, (S, \cdot,1) , \to^\ast ) $ where 
$ \to^\ast $ is defined inductively by the following rules:
\[       \begin{prooftree}
\hypo{  s \to_h t                   }\hypo{ t \to^\ast_l v  } \hypo{ l \cdot h \downarrow }
\infer3{   s \to^\ast_{l \cdot h} v        } 
\end{prooftree}
       \qquad 
       \begin{prooftree}
       \infer0{    s \to_{1}^\ast   s            }
       \end{prooftree}                         \]
\end{definition}

\newcommand{\from}{\leftarrow}
\begin{definition}[Labelled Confluence]
Let $  \mathsf{R} = (\Lambda, (S, \cdot, 1) , \to)  $ be a labelled reduction system. 
We say that $ \mathsf{R} $ enjoys \emph{confluence} if whenever $t_1\prescript{\ast}{l_1}{\from}\;s \to_{l_2}^\ast t_2 $ 
there exist $ h_1,h_2 \in S $ and $ u\in \Lambda $ 
such that $  t_1\to_{h_1}^\ast u\;\prescript{\ast}{h_2}{\from} t_2$ with $ h_1 \cdot f_1 = h_2 \cdot f_2 $.
If the hypothesis becomes  $t_1\prescript{}{l_1}{\from}\;s \to_{l_2} t_2 $, we say
$\mathsf{R}$ is \emph{locally confluent}.
\end{definition}
The well-known Newman Lemma continues to hold even in presence of labels: 
local confluence together with strong normalization imply confluence, something which can 
be proved exactly as for abstract reduction systems~\cite{Terese03}.

%\begin{theorem}[Newman for labelled confluence]
%Let $  \mathsf{R} = (\Lambda, (L, \cdot, 1) , \to)  $ be a labelled reduction. If $\mathsf{R} $ is strongly normalizing and enjoys local confluence then it also enjoys confluence.
%\end{theorem}
%\begin{proof}
%Let $s \to^\ast_l t_1  $ and $ s \to^\ast_{l_1} t_2 .$ We shall prove that there exist $ t \in \Lambda, h_1,h_2 \in L $ s.t. $   t_i \to_{h_i \circ l_i}^\ast t $ and $ h_1 \cdot l_1 = h_2 \cdot l_2 . $ We proceed by induction on the structure of $ s \to^\ast_{l_2} t_1 .  $ If $ t_1 = s $ and $  s\to^\ast_1 s $ then we can conclude by setting $  t = t_2   $ and $ h_1 =  $
%\end{proof}
\section{Types and Morphisms}
Intersection types are often defined \emph{equationally}: the intersection type constructor $ \ty \cap b $ is given as a monoidal product satisfying some equations: associativity, commutativity and possibly idempotency. In this paper, we take a more refined viewpoint: the intersection connective is still a strictly associative tensor product, but we replace equations by \emph{concrete} operations on types. For instance, commutativity is replaced by appropriate \emph{symmetries} $ \sigma_{a,b} : a \cap b \to b \cap a . $ Idempotency becomes a \emph{copying operation} $ c_{\ty} : \ty \to \ty \cap \ty . $ In order to formalize this idea, we identify the  intersection type $ \ty_1 \cap \cdots \cap \ty_k $ with the list $ \seqdots{\ty}{1}{k} . $ Operations on intersection types become then structural operations on lists, seen as a categorical tensor product (see Section \ref{sec:prelim}). Moreover, intersection types will appear on the left side of an application $ \seqdots{\ty}{1}{k} \multimap \ty $ only, following the well-established tradition of \emph{strict intersection types} \cite{bakel:strict}. This allows us to define a syntax-directed type system.

We define the set of \emph{resource types} ($\rest$) and the set of \emph{intersection types} ($ \oc(\rest)$) by the following inductive grammar:
\begin{align*}
\rest &\ni \ty, b ::= o \mid \tyl \multimap \ty \\
\oc(\rest) &\ni \tyl ::=  \seqdots{\ty}{1}{k} \qquad (k \in \mathbb{N})                       
\end{align*}
We define a  notion of \emph{morphism} of resource types by induction in Figure \ref{fig:rtctx}, parametrically on $\spadesuit \in \{ l,c \} $. \emph{Morphisms} of resource types can be seen as proof-relevant \emph{subtyping}. Composition and identities of these morphisms are defined by induction in a natural way, exploiting the definition of composition induced by the $\oc^{\spadesuit} (-) $ construction. Composition is associative and unitary. We can then define a category $ \rest^{\spadesuit} . $ $\rest^l $ (resp. $ \rest^c$) is said to be the category of \emph{linear} (resp. \emph{cartesian}) resource types in which objects are resource types and morphisms are \emph{linear} (resp. \emph{cartesian}) morphisms of resource types. We remark that the inclusion of categories $\Of^{l} \hookrightarrow \Of^{c}  $ induces an inclusion $ \rest^{l} \hookrightarrow \rest^{c} .    $ We will often denote the identity morphism $ \idm_\ty : \ty \to \ty $ by the type $\ty $ itself, e.g., $ \tyl \multimap \ty := \idm_{\tyl} \multimap \idm_{\ty} .   $

\begin{example}
We have a morphism $   (\seq{o,o, \seq{o} \multimap o} \multimap o) \to  (\seq{o, \seq{o} \multimap o} \multimap o)     $ in the category of cartesian resource types. The morphism is built out of $  c_{o} : \seq{o} \to \seq{o,o}   $, used in a contravariant way, since the intersection type is in negative position.
\end{example}

\section{Resource Terms}
We now introduce the \emph{structural resource $\lambda$-calculus}, which can be thought of as a \emph{non-uniform} counterpart to the standard simply-typed $\lambda$-calculus in which the structural operations are kept \emph{explicit}. Non-uniformity stems from the application rule, whereas a function can be applied to bags (\textit{i.e.}, lists) of inputs of non-uniform shape and type: \textit{e.g.}, given a function $s $, we apply it to a list of arguments
obtaining $ s \seqdots{t}{1}{k} $, where $ t_1,\dots, t_k $ are arbitrary terms. In this way, our calculus becomes a refinement of both Boudol's \cite{Boudol93} and Ehrhard and Regnier's \cite{EhrhardRegnier08} resource calculi. The main difference is that we replace \emph{multisets} of arguments with \emph{lists}, making the operational semantics completely deterministic. In doing so, we build on several previous proposals \cite{tao:gen, mazza:pol,ol:group} where resource calculi based on lists are considered. The novelty of our approach relies on the treatment of \emph{structural rules}. The calculus is built in such a way as to separate the \emph{structural} (and in particular \emph{exponential}, in the sense of linear logic) side of $ \beta$-reduction from the actual substitution operation, which remains linear.  

\begin{definition}[Structural Resource Terms]
We fix a countable set of variables $  \mathcal{V} $. We define the set of \emph{structural resource terms} ($\rTerms^\spadesuit $) by the following inductive grammar, parametrically over  $ \spadesuit \in \{ l, c \}$
\[        \rTerms^{\spadesuit} \ni s ::= x \in \mathcal{V} \mid \la{x^{f}  } s \mid s  \vec{t}   \qquad    \qquad    \vec{t} \in \oc (\rTerms^{\spadesuit} )        \]
where $ f \in \morph{ \rest^{\spadesuit} }$. Resource terms are considered up to renaming of bound variables. Elements of $\oc(\rTerms^{\spadesuit} )  $ are called \emph{bags} of resource terms. Typing contexts, consisting of sequences of variable type declariations, are defined in Figure \ref{fig:rtctx}. 
\end{definition}
A morphism of typing contexts consists of a sequence of morphisms between intersection types. Typing contexts and their morphisms determine a category that we denote as $ \mathsf{Ctx}^{\spadesuit}$. Terms are considered up to renaming of bound variables. The tensor product of contexts consists of the pairwise concatenation of intersection types assigned to the same variable. Typing rules are given in Figure \ref{fig:quasit}. Typing contexts in multiple-hypothesis rules are always assumed to be joinable, \textit{i.e.}, they  contain the same free variables. The type system admits two kinds of judgments: namely $ \vdash $ and $\vdash_b $ . The former infers the typing of resource terms; the latter that of bags of resource terms. We note that bags of resource terms are typed with intersection types and they are not formally a part of the grammar of resource terms. Given a context of the shape $ x_1 : \seq{},\dots, x : \tyl,\dots , x_n : \seq{}  $ we shall often abuse the language and just denote it by $\seq{\alpha; \vec{f}} $.
 
The main novelty of the structural resource calculus is the typing of abstractions: whenever we bound variables, we are allowed to decorate them with morphisms of intersection types. These morphisms express the (nested) structural rules that we are allowing on (the occurrences of) the considered bound variables. 

\begin{remark}
 The annotation of bound variables in a term like $\lambda x^{f} . s$ intuitively retains the information about the (nested) \emph{structural} operations being performed on the occurrences of the variable $x$. For instance, in the term $s = \la{ x^{c_o}} v \seq{x,x}$, where $c_o : \seq{o} \to \seq{o,o}$, we are performing a \emph{contraction} on the two occurrences of $x$. Thus, if we take a term $t$ of type $o$ and apply it to $s$, it will be \emph{duplicated} along the computation. This means that our reduction semantics should allow for $s \seq{t} \to v \seq{t,t}$. In the cartesian world, any structural operation is allowed; for example, we also have weakenings: $\lambda x^{\mathsf{T}_o} z$. In the linear world, however, the only allowed operations are \emph{permutations}.
\end{remark}

\begin{example}[Typing Some Resource Terms]
Consider the ground morphism $ \alpha = ( \seq{\seq{o} \multimap o } \otimes c_o) : \seq{  \seq{o,o} \multimap o,o       } \to \seq{  \seq{o,o} \multimap o, o, o}  ,$ which performs a contraction on the type $o $ while keeping the arrow type fixed. We can type the term $  \la{x^{\alpha}} x \seq{x,x}  $ as $    \vdash \la{x^{\alpha}} x \seq{x,x } :                 \seq{\seq{o,o} \multimap o, o }  \multimap o .  $
\end{example}

Our system induces a strong correspondence between typed terms and their type derivations, as witnessed by the following result.

\begin{proposition}[Uniqueness of Derivations]
Let $ \pi $ be a derivation of $\gamma \vdash s : \ty $ and $ \pi'$ of  $ \gamma \vdash s : \ty' $. Then $ \ty = \ty'$ and $  \pi = \pi'$.
\end{proposition}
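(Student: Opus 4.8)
The plan is to prove both claims simultaneously by a single structural induction on the term $s$, relying on the fact that the type system presented in Figure~\ref{fig:quasit} is \emph{syntax-directed}: each syntactic form of $s$ matches exactly one typing rule, so the last rule applied in any derivation of $\gamma \vdash s : \ty$ is forced by the head constructor of $s$. To make the induction go through smoothly, I would actually strengthen the statement to cover both judgment forms at once, proving in parallel that bags $\vec{t}$ also have unique types and unique derivations under $\vdash_b$; this is necessary because the application case reduces to a statement about the bag of arguments, and the abstraction case reduces to a statement about the body, so the two mutually-recursive judgments must be handled together.

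First I would treat the base case $s = x \in \mathcal{V}$: the only applicable rule is the variable/axiom rule, which in a syntax-directed system fixes both the type $\ty$ (read off from the context $\gamma$, where $x$ is assigned a singleton intersection) and the one-line derivation, so $\ty = \ty'$ and $\pi = \pi'$ are immediate. For the inductive step on an abstraction $s = \la{x^{f}} s_0$, any derivation must end with the abstraction rule, whose premise is a derivation of $\gamma, x : \tyl \vdash s_0 : b$ for the unique context extension determined by the annotation $f$ on the bound variable; by the induction hypothesis this sub-derivation is unique and so is $b$, whence the concluding type $\tyl \multimap b$ and the whole derivation $\pi$ are determined. The key point making this work is that the decoration $f$ is part of the term itself, so the domain of the abstraction is not a free choice but is pinned down by the syntax, removing the usual source of ambiguity.

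For the application case $s = s_0 \vec{t}$, any derivation ends with the application rule applied to a derivation of $s_0$ (yielding a unique arrow type $\tyl \multimap \ty$ by the induction hypothesis) and a derivation of the bag $\vec{t}$ under $\vdash_b$ (yielding a unique intersection type by the strengthened induction hypothesis). Here I would need the \emph{linearity} of the application rule — that the type $\tyl$ demanded by the function must coincide exactly with the type produced by the bag, with no structural adjustment permitted at the application node itself, since all structural manipulation has been relocated into the abstraction annotations. This forces $\ty$ and $\pi$ uniquely. The bag rule in turn combines the derivations of the component terms $t_1,\dots,t_k$ via list concatenation, each unique by induction.

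The main obstacle I anticipate is not any single case but rather the \emph{bookkeeping around the morphism annotations and context splitting}: I must verify that the annotation $f$ together with the syntax-directedness genuinely determines the context extension in the abstraction rule, and that the tensor-product decomposition of the context in the application and bag rules is itself unique (i.e.\ that the way $\gamma$ is split among $s_0$ and the $t_i$ is forced). If contexts could be split nondeterministically, the type would remain unique but the \emph{derivation} might not; the proof therefore hinges on the observation that, because contexts assign a single intersection type per variable and the tensor product is pairwise concatenation recorded \emph{within} the term structure, the splitting is recoverable from the term. Establishing this determinacy of context decomposition is the real content of the argument; everything else is a routine reading-off of the forced last rule.
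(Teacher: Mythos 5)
Your proof is correct and takes essentially the same route as the paper's: a structural induction on $s$ (covering both judgments), where the variable and abstraction cases are forced by syntax-directedness and the term's annotation $f$, and the genuine content lies in the application/bag cases. Your key observation --- that the tensor splitting of the context is determined because contexts record exactly the occurrences of each variable, concatenated in left-to-right order --- is precisely the relevance argument the paper uses to conclude that $\gamma_i = \gamma'_i$ before applying the induction hypothesis.
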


Thanks to this result, we can just work  with typed terms $ \gamma \vdash s : \ty $, without the need of always explicitly referring to the whole type derivation.

%\begin{prooftree}\hypo{ x_1 :   \tyl_1^{[\alpha_1]} ,  \dots,  x_n :  \tyl_n^{[\alpha_n]} \vdash s : b  } \infer1{  x_1^{\alpha_1} : \tyl_1,  \dots,  x_n^{\alpha_n} : \tyl_n \vdash_{\mathsf{sub}} s : b                 }\end{prooftree}

\begin{figure*}[t]
\begin{gather*}
\begin{prooftree}
\infer0{  id_o : o \to o   }
\end{prooftree}
 \qquad
\begin{prooftree}
\hypo{    \seq{\alpha, \vec{f}} : \vec{b} \to \tyl                     } \hypo{    f : \ty \to b    }
\infer2{    (\seq{\alpha, \vec{f}}  \multimap f) : (\tyl \multimap \ty) \to (\vec{b} \multimap b)                       }
\end{prooftree}
\\[0.5em] \begin{prooftree}
\hypo{  \alpha \in \Of^{\spadesuit}([m],[n])    }\hypo{  f_{1} : \ty_{\alpha (1)} \to b_1 \dots    f_{m} : \ty_{\alpha (m)} \to b_m       }
\infer2{          \seq{\alpha; f_1, \dots, f_m} : \seqdots{\ty}{1}{n} \to \seq{b_1, \dots, b_m}          }
\end{prooftree} 
\end{gather*}
\caption{Type Morphisms.}
 \hrule
\end{figure*}

\begin{figure*}
\begin{gather*}
\begin{prooftree}
\infer0{  ()   \text{ context }      }
\end{prooftree} 
\qquad
\begin{prooftree} 
\hypo{   \gamma  \text{ context }       } \hypo{ x \text{ fresh variable }, \tyl \in \oc (  \mathcal{RT} )      }
\infer2{     \gamma, x : \tyl \text{ context}       }
\end{prooftree}
\end{gather*}
\caption{Context Formation.}
\label{fig:rtctx}
\hrulefill
\end{figure*}

\begin{figure*}
\begin{gather*}    \begin{prooftree} \infer0{  x_1  : \seq{}, \dots, x_i : \seq{\ty}, \dots ,  x_n : \seq{} \vdash x_i : \ty } \end{prooftree}    \\[1em]  \begin{prooftree}  \hypo{ \gamma, x : \vec{b} \vdash s : \ty}
\hypo{        f : \tyl \to \vec{b}       } 
\infer2{  \gamma \vdash \la{x^{f}} s : \tyl \multimap \ty }     \end{prooftree} \qquad \begin{prooftree} \hypo{  \gamma_0 \vdash s : \tyl  \multimap b  }\hypo{\gamma_1 \vdash_b \vec{t} : \tyl  } \infer2{   \gamma_0 \otimes \gamma_1 \vdash s \vec{t} : b     } \end{prooftree}  
  \\[1em] 
   \begin{prooftree} \hypo{  \gamma_1 \vdash t_1 : \ty_1 } \hypo{\cdots} \hypo{\gamma_k \vdash t_k : \ty_k   }\infer3{\gamma_1 \otimes \cdots \otimes \gamma_k   \vdash_b  \seq{  t_1, \dots, t_k} : \seqdots{\ty}{1}{k}}  \end{prooftree} 
\end{gather*}
% \begin{prooftree}\hypo{ x_1 :   \tyl_1 ,  \dots,  x_n :  \tyl_n \vdash s : b  } \hypo{  f_i : \vec{b}_i \to \tyl_i } \infer2{  x_1^{f_1} : \vec{b}_1,  \dots,  x_n^{f_n} : \vec{b}_n \vdash_{\mathsf{sub}} s : b                 }\end{prooftree}
\caption{Typing Assignment for Structural Resource Terms.}\label{fig:quasit}
\hrulefill
\end{figure*}

\paragraph*{\textbf{$ \eta$-long forms}}

\begin{figure*}[t]
\scalebox{0.9}{\parbox{1.05\linewidth}{\begin{gather*} 
\begin{prooftree}
\infer0{ x_1 : \seq{}, \dots, x_i : \seq{\overline{\ty }\multimap o } , \dots, x_n : \seq{} \vdash_v x_i : \overline{\ty} \multimap o      }
\end{prooftree} \qquad 
 \begin{prooftree}  \hypo{ \gamma, x_1 : \vec{b}_1 , \dots, x_n : \vec{b}_n \vdash s :  o} \hypo{  f_i : \tyl_i \to \vec{b}_i }
\infer2{  \gamma \vdash \la{(x_1^{f_1} \cdots x_n^{f^n})} s : (\tyl_1 \cdots \tyl_n)  \multimap o }     \end{prooftree}
 \\[1em]
  \begin{prooftree} \hypo{\gamma \vdash_v  x : \overline{\ty} \multimap o } \hypo{ \gamma' \vdash_s \overline{q} : \overline{\ty}    }  \infer2{\gamma \otimes \gamma' \vdash x \overline{q} :  o } \end{prooftree}
\qquad 
    \begin{prooftree} 
\hypo{ \gamma_0 \vdash \la{\overline{x}^{ \overline{f}} }s :  \overline{\ty} \multimap o}  \hypo{\gamma_1 \vdash_s  \overline{t} : \overline{\ty}}
\infer2{ \gamma_0 \otimes \gamma_1 \vdash (\la{\overline{x}^{\overline{f}}}s) \overline{t} : o  }
 \end{prooftree} \\[1em]
  \begin{prooftree}
\hypo{  \gamma_1 \vdash_b \vec{q}_1 : \tyl_1  }\hypo{   \cdots }\hypo{  \gamma_k \vdash_b \vec{q}_n : \tyl_n }
\infer3{   \gamma_1 \otimes \cdots \otimes \gamma_n \vdash_s (\vec{q}_1 \cdots \vec{q}_n) : (\tyl_1 \cdots \tyl_n)                  }
\end{prooftree} \qquad
  \qquad
   \begin{prooftree} \hypo{  \gamma_1 \vdash t_1 : \ty_1 } \hypo{\cdots} \hypo{\gamma_k \vdash t_k : \ty_k   }\infer3{\gamma_1 \otimes \cdots \otimes \gamma_k   \vdash_b  \seq{  t_1, \dots, t_k} : \seqdots{\ty}{1}{k}}  \end{prooftree} 
\end{gather*}}}
\caption{Structural Resource Terms in $ \eta$-long Form.}\label{fig:long} \hrulefill
\end{figure*}

We now introduce a class of structural resource terms, namely the so called $\eta $-long forms.
Focusing on $\eta$-long forms allows for a smoother definition of the rewriting systems we are
going to introduce in the next section. The notion of $\eta$-long term we consider relies on full application: an application $s \vec{t}_1 \cdots \vec{t}_n $ should be of atomic type $o$, 
both when $s$ is a variable and when $s$ is a function. In doing so, we follow~\cite{clair:resc}. 
We first define resource \emph{types} in $ \eta$-long form:
\begin{align*}
\eta(\rest)   \ni \ty   &::= o \mid \overline{\ty} \multimap o \\
\overline{\ty} &::= (\tyl_1 \cdots \tyl_k) \qquad \tyl \in \oc(\eta(\rest))     
\end{align*}
A resource type in $ \eta$-long form can be translated to a standard resource type by setting $  ((\tyl_1 \cdots \tyl_n) \multimap o)^{\ast} = (\tyl_1)^\ast \multimap (  \cdots \multimap (\tyl_n)^\ast\multimap o )   \cdots   )$.  There is also the evident inverse translation, that exploits the fact that any resource type $\ty $ can be written uniquely as a sequence of implications $ \ty = \tyl_1 \multimap ( \cdots \multimap (\tyl_n \multimap o    ) \cdots ) $ for some $  \tyl_i \in \oc (\rest). $

Structural resource \emph{terms} in $ \eta$-long form are those structural resource terms defined by the following inductive grammar:
\begin{align*}
\eta(\Lambda^{\spadesuit}) \ni  s,t &::= x \overline{q} \mid \la{(x_1^{f_1} \cdots x_n^{f_n})} s \mid  (\la{\overline{x}^{\overline{f}}} s )\overline{q} \\
\overline{q} &::= ( \vec{s}_1 \cdots \vec{s}_k )  \qquad \vec{s} \in \oc(\eta(\Lambda^{\spadesuit})) 
\end{align*}
The main difference with the standard grammar is the presence of term sequences $ \overline{q} $, not to be confused with bags. Such a term sequence is in fact a sequence $ \vec{q}_1, \dots, \vec{q}_n $ \emph{of bags}. In the above, $ \overline{f}$ stands for a sequence of type morphisms having the same length as $ \overline{x} $. Typing rules are in Figure \ref{fig:long}. A term in  $\eta $-long form induces a resource term in the evident way. Conversely, any resource term $  \gamma \vdash s : \tyl_1 \multimap (\cdots \multimap (\tyl_n \multimap o  ) \cdots) $ can be easily attributed a $\eta $-long form.

\paragraph*{\textbf{Actions of type morphisms}} Given a type morphism $ f : \ty \to b $ and a typing derivation $  \pi $ of conclusion $  \gamma \vdash s  : a $, it is natural to expect that we can transform $ \pi$ into another derivation $[f]\pi$ of conclusion $\gamma \vdash [f] s  : b     $ for some term $ [f]s$. Something analogous should hold for context morphisms: if we take $ \theta : \delta \to \gamma  $ we should get a new derivation $  \ract{\pi}{\theta}  $ of conclusion $  \delta \vdash \ract{s}{\theta} : \ty$. Intuitively, these transformations consist in applying appropriate structural rules at the level of free and bound variables. Consider, for instance, the following typing \[ x : \seq{  \seq{o} \multimap o }, y : \seq{o}  \vdash x \seq{y} : o  \] 
and the type morphism $(c_{o} \multimap o ) : (  \seq{  o,o  } \multimap o   ) \to (  \seq{o} \multimap o )$. This morphism \emph{replaces} a function that demands \emph{one} input with another that demands \emph{two} inputs. In order to accommodate this request, the bag of inputs must be \emph{changed}, \textit{i.e.}, the variable $ y $ has to be duplicated. In doing so, however, we are obliged to change the typing context, too. The new context cannot be just $ x :  \seq{\seq{  o,o  } \multimap o} , y : \seq{o} $ but rather $ \delta =  x :  \seq{\seq{  o,o  } \multimap o} , y : \seq{o,o}$. Hence, the naive intuition about our transformation is not entirely correct: when we apply $ \theta $ to $ \pi , $ the resulting conclusion will be $  \delta^{ [\nu] } \vdash \ract{s}{\theta} : \ty ,   $ where $ \nu $ is an appropriate ground morphism that depends on $ \pi $ and $ \theta . $ The action of $ \nu $ represents the propagation of the action of $ \theta $ on other variables of the considered term. We shall now give the formal definition of these transformations. Our definition is inspired by the notion of actions of symmetries over linear resource terms, introduced in \cite{tao:gen} and on its generalization presented in \cite{ol:thesis}. 

Given a type derivation $ \pi $ of conclusion $ \gamma \vdash s : \ty $ and morphisms $\theta : \delta \to \gamma $ and $ f : \ty \to b $, we define the \emph{contravariant} and \emph{covariant} actions on $\pi$ induced by $\theta$ and $f$, as follows
\[     \begin{prooftree}
\hypo{\actr{\pi}{\theta} }
\ellipsis{}
{  \delta^{[\nu^{\theta}_s]}  \vdash \actr{s}{\theta} : \ty             }
\end{prooftree}   
\qquad \qquad 
           \begin{prooftree}
\hypo{ [f] \pi }
\ellipsis{}
{  \gamma^{[\mu^f_s]}  \vdash [f]s : b             }
\end{prooftree}                              \]
where $ \mu^f_s $ and $  \nu^{\theta}_s$ are ground morphisms. The actions are defined by induction in Figure \ref{fig:ra} and Figure \ref{fig:la}, respectively. Given $  x_1 : \tyl_1,\dots, x_n : \tyl_n \vdash s : \ty $ and $ \theta : \delta \to (\tyl_1,\dots,\tyl_n) $, we have that $  \nu^{\theta}_s = \nu_1,\dots, \nu_n $ for appropriate ground morphisms $\nu_i $. We denote $  \nu^{\theta}_{x_i,s}  $ the morphism $\nu_i $, that is the one acting on the type of $x_i $. We will often shorten it to $ \nu_{x_i} $, when the remaining information is clear from the context.  We often write just $ \nu, \mu$ keeping the parameters $s,f,\theta$ implicit. In Figure \ref{fig:ra}, we assume that $ \theta  $ is of the shape $ \theta = \seq{\idm; \vec{f}_1} ,\dots , \seq{\idm; \vec{f}_n} $ for some $ n \in \mathbb{N} $. The assumption allows us to decompose the morphism $\theta $ along the multiplicative merging of contexts in the case of multiple-hypothesis rules. We can extend the definition to arbitrary morphisms, simply by setting $    \actr{\pi}{\theta} = \actr{\pi}{  \mathsf{nest}(\theta)}   $ (see Remark \ref{rem:struct}) and $ \nu_s^\theta = \nu^{\mathsf{nest}(\theta)}_s \circ \mathsf{ground}(\theta) $. We have that $ \ract{s}{\idm} = s $ and $ [\idm]s = s  $. We also have compositionality: 

 \begin{proposition}[Compositionality]
 Let $ \gamma \vdash s : \ty $ and $  \theta : \delta \to \gamma, \eta : \delta' \to \delta , f : \ty \to b, g : b \to c$. Then, the following statements hold. 
 \begin{align*}
  &  (\gamma^{[\mu^f_s]})^{[\mu_{[f]s}^g]} \vdash [g]([f]s) : c &  &=&   &\gamma^{[\mu_{s}^{gf}]} \vdash [gf] s : c .&
  \\[0.3em]
   &(\delta'^{[\nu^{\theta}_s]})^{[\nu_{ \actr{s}{\theta} }^{\nu^{\theta}_s \eta}]} \vdash \actr{\actr{s}{ \theta }}{\nu^{\theta}_s \eta} : \ty& &=&  &\delta'^{[ \nu^{\theta \eta}_{s} ] }\vdash \actr{s}{\theta \eta} : \ty.&   
 \end{align*}
 \end{proposition}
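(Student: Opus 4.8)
The plan is to prove both displayed equalities by structural induction on the derivation $\pi$ of $\gamma \vdash s : \ty$, which is unique by uniqueness of derivations, following exactly the case analysis by which the two actions are defined in Figures \ref{fig:ra} and \ref{fig:la}. Each of the two equalities really asserts two things at once: a \emph{term-level} identity ($[g]([f]s) = [gf]s$, resp.\ $\actr{\actr{s}{\theta}}{\nu^\theta_s\eta} = \actr{s}{\theta\eta}$) and a \emph{context-level} identity relating the propagated ground morphisms, namely $\mu^g_{[f]s}\circ\mu^f_s = \mu^{gf}_s$ and $\nu^{\nu^\theta_s\eta}_{\actr{s}{\theta}} = \nu^{\theta\eta}_s$. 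I would carry both components through the induction simultaneously, since the context-level identity for a subterm is precisely what is needed to state the term-level identity one level up. Moreover the two statements are best proved \emph{together}, because the covariant action on an abstraction $\la{x^h}s$ unfolds, via the arrow rule for type morphisms, into a covariant action on the body (from the codomain part of $f$) together with a contravariant action on the body (from the domain part of $f$ precomposed with the annotation $h$); so covariant compositionality in that case depends on contravariant compositionality for the body.

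For the variable base case $s = x_i$ and for the application and bag cases, the argument is a direct unfolding: one applies the inductive definition of the actions, invokes the induction hypothesis on the immediate subderivations, and collapses two successive reindexings into one. The only nontrivial algebraic facts used are the contravariant reindexing law $(\vec a^{[\alpha]})^{[\beta]} = \vec a^{[\alpha\beta]}$ from Section~\ref{sec:prelim} and the associativity and unitality of composition in $\oc^\spadesuit(\rest)$; together these turn the composite $\mu^g_{[f]s}\circ\mu^f_s$ (resp.\ $\nu^{\nu^\theta_s\eta}_{\actr{s}{\theta}}$) into the single morphism indexed by $gf$ (resp.\ $\theta\eta$). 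In the application case one additionally uses that the tensor product of contexts is pointwise, so that along the splitting $\gamma_0 \otimes \gamma_1$ the propagated morphism splits as $\mu = \mu_0 \otimes \mu_1$, each factor being handled by the induction hypothesis.

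The abstraction case is where the real work lies, and I expect it to be the main obstacle. Writing $f = (\seq{\alpha;\vec g} \multimap g')$ and using the factorization of Remark~\ref{rem:struct} of $h\circ\seq{\alpha;\vec g}$ into its ground and nested fragments, applying $[f]$ to $\la{x^h}s$ rewrites the annotation to $\mathsf{nest}(h\circ\seq{\alpha;\vec g})$ while performing on the body both a covariant action (from $g'$) and a contravariant action (from $\mathsf{ground}(h\circ\seq{\alpha;\vec g})$). Applying $[g]$ afterwards repeats this, so the composite performs an \emph{interleaving} of covariant and contravariant actions on the body. The heart of the argument is to reconcile this interleaving with the one-pass action driven by $gf = (\seq{\alpha;\vec g}\circ\seq{\beta;\vec g'} \multimap g''g')$; concretely this needs (i) the induction hypotheses for covariant-after-covariant and contravariant-after-contravariant on the body, and (ii) an \emph{exchange} property stating that a covariant and a contravariant action on the same body commute up to the expected reindexing of the propagated ground morphisms. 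Verifying (ii) — that the ground morphisms produced by the codomain rewriting and by the domain rewriting interact associatively — is the most delicate bookkeeping, and is exactly where the factorization lemma and functoriality of $\oc^\spadesuit(\rest)$ are used most heavily.

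Finally, I would remark that the twist in the contravariant equality, acting by $\nu^\theta_s\eta$ rather than by $\eta$ alone, is forced by typing: since $\actr{s}{\theta}$ is derived in the context $\delta^{[\nu^\theta_s]}$, the only way to postcompose a contravariant action using $\eta : \delta' \to \delta$ is to form $\nu^\theta_s \circ \eta : \delta' \to \delta^{[\nu^\theta_s]}$, and the content of the statement is precisely that this twisted composite reproduces the direct action along $\theta\eta$. Once the abstraction case is settled, the remaining cases assemble into a routine verification, and both context-level identities follow from the same reindexing and functoriality facts used throughout.
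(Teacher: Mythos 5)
Your overall scaffolding --- a mutual structural induction that carries, alongside each term-level identity, the corresponding identity on the propagated ground morphisms, and that collapses successive reindexings via $(\vec a^{[\alpha]})^{[\beta]} = \vec a^{[\alpha\beta]}$ --- has the right shape and matches the paper. But the technical core of your plan rests on a misreading of how the covariant action is defined, and this misdirects the entire case analysis.

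Per Figure~\ref{fig:la}, the covariant action on an abstraction is $[\overline{f} \multimap o](\la{\overline{x}^{\overline{g}}}s) = \la{\overline{x}^{\overline{g}\overline{f}}}s$: the morphism is simply precomposed onto the annotation of the bound variable, and \emph{no action whatsoever is performed on the body} (nor on the context, so the propagated $\mu$ is an identity there). Note also that in $\eta$-long form the codomain of an arrow type is atomic, so there is no ``codomain part $g'$'' to push into the body in the first place. Hence there is no unfolding into a covariant plus a contravariant action on the body, no interleaving to reconcile, and no exchange property needed in the abstraction case: the first identity there is immediate from associativity of composition of type morphisms, which is exactly how the paper handles it. (This laziness is by design: propagating structural rules into the body is the job of exponential reduction, not of the action.) Consequently, the dependency between the two statements runs in the opposite direction from the one you describe. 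It is the \emph{contravariant} action that spawns \emph{covariant} ones: by Figure~\ref{fig:ra}, $(x\overline{q})\{\eta \otimes \theta\} = x\,[\overline{f}](\actr{\overline{q}}{\theta})$, so the case of the second identity for variable-headed applications --- which you classify as ``a direct unfolding'' --- is precisely where the first identity, together with the interchange observation $[f](\actr{s}{\theta}) = \actr{([f]s)}{\mu^f_s\theta}$, must be invoked. The paper accordingly proves the first statement on its own by induction (abstraction case: associativity; bag case: definition of composition in $\oc^{\spadesuit}(\rest)$ plus the induction hypothesis) and then proves the second by induction ``exploiting the former result.'' As written, your plan spends its effort on a case that is trivial under the actual definitions and treats as routine the one case that carries the load, so it would not go through without being reorganized along these lines.
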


\begin{figure*}[t]
\scalebox{0.9}{\parbox{1.05\linewidth}{\begin{align*}
& \left( 
\begin{prooftree}
\infer0{ x : \seq{  \overline{\ty} \multimap o} \vdash_v x : \overline{\ty}   \multimap o}
\end{prooftree}
\right) \{    \seq{\overline{f} \multimap o}            \}
&=&
&\begin{prooftree}
\infer0{ x : \seq{  \overline{b} \multimap o} \vdash_v x : \overline{b}   \multimap o}
\end{prooftree}&
\\[1em]
 &\left(\begin{prooftree}\hypo{ \gamma_1 \vdash_v x : \overline{\ty} \multimap o } \hypo{ \gamma_2 \vdash_s \overline{q} : \overline{\ty}   }  \infer2{  \gamma_1 \otimes \gamma_2 \vdash x \overline{q} : o } \end{prooftree} \right) \{   \eta  \otimes \theta    \} 
 &=& 
&\begin{prooftree}\hypo{\delta_1 \vdash_v x : \overline{b} \multimap o} \hypo{ (\delta_2\thinspace^{[ \nu_2 ]})^{[\mu]}  \vdash_s [\overline{f}](\actr{\bar{q}}{\theta})  : \overline{b}    }  \infer2{ \delta_1 \otimes (\delta_2\thinspace^{ [\nu_2  \mu  ]}) \vdash x [\overline{f}](\actr{\overline{q}}{\theta}) : o } \end{prooftree} &
 \\[1em]  
 &\left(\begin{prooftree}  \hypo{ \gamma, \overline{y} : \overline{b} \vdash s : o} \hypo{      \overline{f} : \overline{\ty} \to \overline{b}    }	
\infer2{  \gamma \vdash \la{\overline{y}^{\overline{f}}} s : \overline{\ty} \multimap o }   \end{prooftree} \right) \{ \theta  \} 
 &=& 
 &\begin{prooftree}  
 \hypo{ \delta^{[ \nu ]}, \overline{y} :  \overline{b}^{[\nu_x]} \vdash \actr{s}{\theta, \overline{b}} : o}
 \hypo{     \nu_x \overline{f} : \overline{\ty} \to  \overline{b}^{[\nu_x]}        }
\infer2{  \delta^{[ \nu ]} \vdash \la{\overline{y}^{ \nu_x  \overline{f } }} \actr{s}{\theta, \overline{b}} : \overline{\ty} \multimap o }     \end{prooftree} & 
 \\[1em]
&\left(\begin{prooftree} 
\hypo{ \gamma_0 \vdash \la{\overline{x}^{\overline{f}} }s :  \overline{\ty} \multimap o}  \hypo{\gamma_1 \vdash \overline{t} :  \overline{\ty}}
\infer2{ \gamma_0 \otimes \gamma_1 \vdash (\la{\overline{x}^{\overline{f}}}s)  \overline{t} : o  }
 \end{prooftree}  \right) \{ \theta_0 \otimes \theta_1  \}
 &=& 
 &\begin{prooftree} 
\hypo{ \delta_0^{ [ \nu_0 ] } \vdash \actr{(\la{\overline{x}^{\overline{f}}}s)}{\theta_0} : \overline{\ty} \multimap o} \hypo{\delta_1^{ [ \nu_1 ]} \vdash_s \actr{\overline{t}}{\theta_1} : \overline{\ty}}
\infer2{ (\delta_0 \otimes \delta_1)^{ [\nu_0 \otimes \nu_1 ]} \vdash \actr{(\la{\overline{x}^{\overline{f}}}s)}{\theta_0} \actr{\overline{t}}{\theta_1} : 0  }
 \end{prooftree} &
  \\[1em] 
 &\left( \begin{prooftree} \hypo{  (\gamma_i \vdash \vec{q}_i : \tyl_i)_{i =1}^n } \infer1{ \bigotimes \gamma_i \vdash_s  (\vec{q}_1 \cdots \vec{q}_n ) : (\tyl_1 \cdots \tyl_n )}  \end{prooftree} \right) \{ \bigotimes \theta_i \} 
    &=& 
  &   \begin{prooftree} \hypo{  (\delta^{[\nu_i]}_i \vdash \actr{\vec{q}_i}{\theta_i} : \tyl_i)_{i =1}^n } \infer1{ (\bigotimes \delta_i)^{ [\bigotimes \nu_i]} \vdash_s  (\actr{\vec{q}_1}{\theta_1} \cdots  \actr{\vec{q}_n}{\theta_n}) : (\tyl_i)}  \end{prooftree} &
  \\[1em] 
 &\left( \begin{prooftree} \hypo{  (\gamma_i \vdash t_i : \ty_i)_{i =1}^k } \infer1{ \bigotimes \gamma_i \vdash_b  \seq{  t_1, \dots, t_k} : \seqdots{\ty}{1}{k}}  \end{prooftree} \right) \{  \bigotimes \theta_i  \} 
    &=& 
  &   \begin{prooftree} \hypo{  (\delta_i^{ [ \nu_i ]} \vdash t_i \{ \theta_i \} : \ty_i)_{i = 1}^k } \infer1{(\bigotimes \delta_i)^{ [\bigotimes \nu_i]}   \vdash_b  \seq{  \actr{t_1}{\theta_1}, \dots, \actr{t_k}{\theta_k}} : \seq{\ty_i}}  \end{prooftree} &
\end{align*}}}
\caption{Contravariant action on resource terms.  We omit the explicit typing of morphisms to improve readability. In the variable rule we assume that $ \overline{f} : \overline{\ty} \to \overline{b}. $ In the application rule for the variable, we assume that the value of $ \eta$ on $ x$ is $ \overline{f} \multimap o .$  }\label{fig:ra} \hrulefill
\end{figure*}

\begin{figure*}[t]
\scalebox{0.9}{\parbox{1.05\linewidth}{\begin{align*}    
& [\overline{f} \multimap o ] \left( \begin{prooftree}
\infer0{  x : \seq{\overline{\ty} \multimap o} \vdash_v x : \overline{\ty} \multimap o       }
\end{prooftree} \right)
& = &
&\begin{prooftree}
\infer0{   x  : \seq{\overline{b} \multimap o} \vdash_v x : \overline{b} \multimap o       }
\end{prooftree}& \\[1em]
 &[id_o]\left(\begin{prooftree} 
\hypo{ \gamma_0 \vdash s :  \overline{\ty} \multimap o}  \hypo{\gamma_1 \vdash \overline{t} :  \overline{\ty}}
\infer2{ \gamma_0 \otimes \gamma_1 \vdash s \overline{t} : o  }
 \end{prooftree}  \right) 
  &=& 
&\begin{prooftree} 
\hypo{ \gamma_0 \vdash  s :  \overline{\ty} \multimap o}  \hypo{\gamma_1 \vdash \overline{t} :  \overline{\ty}}
\infer2{ \gamma_0 \otimes \gamma_1 \vdash s  \overline{t} : o  }
 \end{prooftree}  & \\[1em]
&[ \overline{f} \multimap o]\left(\begin{prooftree}  \hypo{ \gamma, \overline{x} : \overline{c} \vdash s : o} \hypo{  \overline{g} : \overline{\ty} \to \overline{c} }
\infer2{  \gamma \vdash \la{\overline{x}^{\overline{g}}} s : \overline{\ty} \multimap o }   \end{prooftree} \right) 
 &=& 
 &\begin{prooftree}  \hypo{ \gamma, \overline{x} : \overline{b}  \vdash s : o}\hypo{   \overline{gf} : \overline{b} \to \overline{c}     }
\infer2{  \gamma \vdash \la{\overline{x}^{\overline{gf}  }} s : \overline{b} \multimap o }   \end{prooftree} & 
  \\[1em] 
 &[(\seq{\alpha_i;\vec{f}_i})_{i = 1}^n]\left( \begin{prooftree} \hypo{  (\gamma_i \vdash \vec{q}_i : \tyl_i)_{i =1}^n } \infer1{ \bigotimes \gamma_i \vdash_s  (\vec{q}_1 \cdots \vec{q}_n) : (\tyl_1 \cdots \tyl_n)}  \end{prooftree} \right)  
    &=& 
  &  \begin{prooftree} \hypo{  (\gamma^{[\mu_i]}_i \vdash [\seq{\alpha_i; \vec{f}_i}]\vec{q}_i : \vec{b}_i)_{i =1}^n } \infer1{ \bigotimes \gamma^{[ \bigotimes \mu_i]}_i \vdash_s ([\seq{\alpha_1; \vec{f}_1}]\vec{q}_1 \cdots [\seq{\alpha_n; \vec{f}_n}]\vec{q}_n) : (\vec{b}_1 \cdots \vec{b}_n)}  \end{prooftree}&
  \\[1em] 
 &[\seq{\alpha,f_i}_{i= 1}^l]\left( \begin{prooftree} \hypo{  (\gamma_i \vdash t_i : \ty_i)_{i =1}^k } \infer1{ \bigotimes \gamma_i \vdash_b  \seq{  t_1, \dots, t_k} : \seqdots{\ty}{1}{k}}  \end{prooftree} \right) 
    &=& 
  &  \begin{prooftree} \hypo{  (\gamma^{[\mu_j]}_{\alpha(j)} \vdash [f_j] t_{\alpha (j)} : b_j)_{j =1}^l } \infer1{ (\bigotimes \gamma_{j})^{[ (\bigotimes \mu_j )\alpha^\star]} \vdash_b  \seq{  [f_1]t_{\alpha (1)}, \dots, [f_l]t_{\alpha(l)}} : \seqdots{b}{1}{l}}  \end{prooftree} &
\end{align*}}}
\caption{Covariant action on resource terms. We omit the explicit typing of morphisms to improve readability. We assume that $ \overline{f} : \overline{b} \to \overline{\ty}$ . The ground morphism $\alpha^\star$ is the canonical one of type $ \alpha^\star : \bigotimes \gamma_i \to \bigotimes \gamma_{\alpha(j)}$.}\label{fig:la} \hrulefill
\end{figure*}
\section{The Two Rewrite Relations}
We now discuss the reduction semantics of our calculus, which we call \emph{structural reduction}. Structural reduction is defined as the union of two disjoint rewrite relations, called \emph{linear} and \emph{exponential} reduction. While the former mimics $\beta$-reduction, the second should be understood as a process of structural rule elimination. We will show how the two interact, proving that they commute in a certain weak sense. We also prove the strong normalization and confluence of both exponential and linear reduction, from which we obtain the same result for structural reduction as a whole.

\begin{figure*}[t]
\scalebox{0.8}{\parbox{1.05\linewidth}{\begin{align*}    
   \left( \begin{prooftree} \infer0{   x : \seq{\ty}  \vdash x : \ty} \end{prooftree}\right) \{   \seq{t}  / x_i            \}  
& \qquad =  \qquad \gamma \vdash t : \ty                                 
 \\[1em]
\left(\begin{prooftree}  \hypo{ \gamma, x : \tyl, y : \vec{b} \vdash s : b}  \hypo{     f : \tyl \to \vec{b}              }
\infer2{  \gamma, x : \tyl \vdash \la{y^{f}} s : \tyl \multimap b }     \end{prooftree} \right) \{    \vec{t}  / x         \}
 & \qquad = \qquad 
 \begin{prooftree}  \hypo{ (\gamma \otimes \delta)^{[\sigma]} , y :(\vec{b})^{[\sigma']} \vdash \subst{s}{x}{\vec{t}} : b} \hypo{  \sigma' f   }
\infer2{ (\gamma \otimes \delta)^{[\sigma]} \vdash \la{y^{ \sigma' f }} (\subst{s}{x}{\vec{t}}) : \tyl \multimap b }     \end{prooftree}
\\[1em]
\left(    \begin{prooftree} \hypo{  \gamma_0, x : \tyl_0 \vdash s : \tyl  \multimap b  }\hypo{\gamma_1, x : \tyl_1 \vdash \vec{q} : \tyl  } \infer2{   \gamma_0 \otimes \gamma_1, x : \tyl_0 \oplus \tyl_1 \vdash s \vec{q} : b     } \end{prooftree} \right) \{ \vec{t}_0 \oplus \vec{t}_1         \} & \qquad = \qquad 
 \begin{prooftree} \hypo{  (\gamma_0 \otimes \delta_0)^{[\sigma_0]} \vdash \subst{s}{x}{\vec{t}_0} : \tyl  \multimap b  }\hypo{(\gamma_1 \otimes \delta_1)^{[\sigma_1]} \vdash \subst{\vec{q}}{x}{\vec{t}_1} : \tyl  } \infer2{   ((\gamma_0 \otimes \gamma_1) \otimes (\delta_0 \otimes \delta_1))^{[  (\sigma_1 \oplus \sigma_2) \tau]} \vdash \subst{s}{x}{\vec{t}_0} \subst{\vec{q}}{x}{\vec{t}_1} : b     } \end{prooftree} 
 \\[1em]
 \left( \begin{prooftree}
 \hypo{      (\gamma_i, x : \tyl_i \vdash s_i : b_i)_{i =1}^k          }
 \infer1{   \bigotimes \gamma_i, x : \bigoplus \tyl_i \vdash_b \seqdots{s}{1}{k} : \vec{b}  }
 \end{prooftree} \right) \{  \bigoplus \vec{t}_i  / x                 \} & \qquad = \qquad  \begin{prooftree}
 \hypo{      ( (\gamma_i \otimes \delta_i)^{[\sigma_i]} \vdash \subst{s_i}{x}{\vec{t}_i} : b_i)_{i =1}^k          }
 \infer1{   (\bigotimes \gamma_i \otimes \bigotimes \delta_i)^{[ (\bigotimes \sigma_i) \tau ]} \vdash_b \seq{  \subst{s_1}{x}{\vec{t}_1} , \dots, \subst{s_k}{x}{\vec{t}_k} } : \vec{b}      }
 \end{prooftree} 
 \end{align*}}}
\caption{Substitution operation on typed resource terms. In the application and list rules, the list terms $\vec{t}_i $ are of type $\tyl_i $. They correspond to the decomposition of the list $ \vec{t}$, following the multiplicative decomposition of the context. The permutation $ \tau$ is the evident one of type $ \tau : (\bigotimes \gamma_i) \otimes (\bigotimes \delta_i) \to \bigotimes (\gamma_i \otimes \delta_i)  $. }
\hrulefill
\label{fig:linsub}
\end{figure*}

The reduction of structural resource terms relies on \emph{linear substitution}. Given $  \gamma, x : \tyl \vdash s : b  $ and $ \delta \vdash_b \vec{t} : \tyl $, we define a permutation $ \sigma_{s, \vec{t}}$ and a term $ (\gamma \otimes \delta)^{[\sigma_{s, \vec{t}}]} \vdash \subst{s}{x}{\vec{t}} : b $, called the linear substitution of $x $ in $s $ by $ \vec{t} $, by induction, in Figure~\ref{fig:linsub}.

\subsection{Structural Reduction}

We define a relation 
$$
{\to} \subseteq {\Lambda^c  \times (  \morph{ \mathsf{Ctx}^c  } \times \morph{\rest^c } ) \times \Lambda^c},
$$
called \emph{structural reduction} by induction, see Figure \ref{fig:colla}. The relation $\to $ naturally induces a labelled reduction system, simply by equipping $ (  \morph{ \mathsf{Ctx}^c  } \times \morph{\rest^c } )$ with the partial monoid structure induced by composition. We have that if $  (\gamma \vdash s : \ty) \to_{\theta;f}  (\gamma' \vdash s' : \ty')   $ then $   \theta : \gamma \to \gamma'               $ and $ f : \ty' \to \ty .  $ The \emph{exponential} (resp. \emph{linear}) reduction ${\toexp (\text{ resp. }\toli )}$ is defined as the contextual closure of the left-hand-side ( resp. right-hand-side) ground step. We observe that $ {\to} = {\toexp} \cup {\toli}  $. In the following we shall often write $ s \to_{\theta;f} t $, assuming the reduction well-typed and keeping the typing implicit.

\begin{remark} Exponential reduction does not strictly speaking enjoy subject reduction, since types can indeed change. However, the typings of the redex and reduct are related through the type morphisms labeling the reduction step. Typing is not preserved under linear substitution, either. This is due to the possible permutations of variable occurrences. For instance, consider $ s =   ( (\la{y^{\seq{o}}} \la{x^{\seq{\seq{o} \multimap o}}}    xy ) \seq{z}) \seq{z}   $ with the following typing: 
\[ z : \seq{o, \seq{o} \multimap o} \vdash         ((\la{y^{\seq{o}}}       \la{x^{ \seq{ \seq{o} \multimap o}}}   xy ) \seq{z} ) \seq{z}  : o\]
Now, $ s \to z \seq{z} .$ However, the typing context must be permuted, becoming $z : \seq{\seq{o}\multimap o, o} \vdash z\seq{z} : o$. As for exponential reduction, the
evident permutation $ \seq{o, \seq{o}\multimap o} \to \seq{\seq{o}\multimap o, o}$ is recorded as part of the underlying label.
\end{remark} 
 
 \begin{example}
We give an example of structural reduction. Consider the terms from Section~\ref{sec:birdseye}, namely
\begin{align*}
  u &= \la{x^{\seq{\seq{o} \multimap o}^3 }} w \seq{ x \seq{xy}  }  \seq{x \seq{y}};\\
 v &= \la{z^{c_o}} q \seq{z} \seq{z};\\ 
  s &= \la{x^{\seq{\seq{o} \multimap o}^4}} w \seq{ x \seq{ x \seq{y,y} , x \seq{y,y}  }   } \seq{x \seq{y,y}};\\ 
  t &= \la{z^{\seq{o,o}}} q \seq{z,z}.
\end{align*}
 We can normalize $    u\seq{v}^3    $ into $ s \seq{t}^4  $. Since 
$  q : \seq{ (\seq{o} \cdot \seq{o} ) \multimap o } \vdash \la{z^{c_o}} q \seq{z} \seq{z} : \seq{o} \multimap o  )  \to_{\idm; c_o \multimap o} (  q : \seq{ (\seq{o} \cdot \seq{o} ) \multimap o } \vdash \la{z^{\seq{o}^2}} q \seq{z} \seq{z} : \seq{o,o} \multimap o  )   $. Then we have 
\begin{align*}
[c_o \multimap o]  ( w : \seq{(\seq{o}\cdot \seq{o}) \multimap o }  &\vdash s : \seq{o}^3 \multimap o )  \ \to_{\idm; (c_o \oplus \idm) \multimap o} \\
 w : \seq{(\seq{o}\cdot \seq{o}) \multimap o } &\vdash v :                     \seq{o}^4 \multimap o 
 \end{align*}
Hence, we can conclude that $(w : \seq{ (\seq{o} \cdot \seq{o}) \multimap o}, q : \seq{  (\seq{o} \cdot \seq{o}) \multimap o }  \vdash u\seq{v}^3 : o ) \to_{\idm; \idm}   (w : \seq{ (\seq{o} \cdot \seq{o}) \multimap o}, q : \seq{  (\seq{o} \cdot \seq{o}) \multimap o }  \vdash s\seq{t}^4 : o )  .          $
\end{example}

\begin{figure*}
\text{Ground Steps:}

\scalebox{0.9}{\parbox{1.05\linewidth}{\begin{gather*}
\begin{prooftree}
\hypo{           \gamma, \etax : \etaty \vdash s : o \qquad \qquad         \overline{f}  : \overline{b} \to \overline{\ty}   \quad \overline{f} \text{ not identity}               }
\infer1{   (\gamma \vdash \la{\overline{x}^{ \overline{f} }  } s : \overline{b} \multimap o       ) \toexp_{\nu; \nu_x \multimap o} (   \gamma^{[\nu]} \vdash  \la{  \overline{x}^{\overline{b}^{\nu'}}  } \ract{s}{\gamma, \overline{f}}                       )    : \overline{b}^{[\nu']} \multimap o )                    }
\end{prooftree} \\[1em]
\begin{prooftree}
\hypo{  \gamma, \overline{x} : \overline{\ty} \vdash s : o    } \hypo{ \delta\vdash \overline{t} : \overline{b}  }\hypo{  \tau \text{ a permutation} }
\infer3{  ( \gamma \otimes \delta \vdash (\la{\overline{x}^\tau}s) \overline{t} : o ) \toli_{\sigma_{s, [\tau]\overline{t}}; o}    (  (\gamma \otimes \delta)^{[  \sigma_{s; [\tau]\overline{t}} ]} \vdash \subst{s}{\overline{x}}{ [\tau]\overline{t} : o    }    )                     }
\end{prooftree}
\end{gather*}}}
\text{Contextual Closure:}

\scalebox{0.9}{\parbox{1.05\linewidth}{\begin{gather*}
\begin{prooftree}
\hypo{    (\gamma, x : \overline{\ty} \vdash  s :o) \to_{ \theta, \overline{f}; o}  (\gamma', x : \overline{\ty}' \vdash s' : o )  }\hypo{g : \overline{b} \to \overline{\ty}}
\infer2{ (\gamma \vdash   \la{x^{\overline g }} s  : \overline{b} \multimap o ) \to_{ \theta; \overline{b} \multimap o} (\gamma' \vdash \la{x^{\overline{fg}}} s' : \overline{b} \multimap o  ) }
\end{prooftree}
\qquad \\[1em]
\begin{prooftree}
\hypo{  (\gamma  \vdash s : \overline{\ty}\multimap o) \to_{ \theta; \bar{f} \multimap o}  (\gamma'  \vdash s' : \overline{\ty}' \multimap o )}  \hypo{  \delta \vdash_s   \overline{t} : \overline{\ty} }
\infer2{ ( \gamma \otimes \delta \vdash s \overline{t} : o ) \to_{{\theta} \thinspace \otimes \thinspace {\mu^{f}_{\overline{t}}} ; o} (\gamma' \otimes \delta^{[\mu^{f}_{\overline{t}}]}  \vdash  s' [\overline{f}] \overline{t} : o)  }
\end{prooftree} \\[1em]
\begin{prooftree}
\hypo{ (\delta \vdash \overline{t} : \overline{\ty}) \to_{\theta; \overline{f} } (\delta' \vdash  \overline{t}' : \overline{\ty}' )} \hypo{  \gamma \vdash s : \overline{\ty} \multimap o }
\infer2{ (\gamma \otimes \delta \vdash s \overline{t} : o)\to_{  \mu_{ s}^{\overline{f} \multimap o}  \otimes   \theta ; o } ( \gamma^{\mu^{[\overline{f} \multimap o]}_{s}} \otimes \delta' \vdash ([ \overline{f} \multimap o]s ) \overline{t}' : o) }
\end{prooftree} \\[1em]
\begin{prooftree}
\hypo{  (\gamma_1 \vdash s_1 : \ty_1) \to_{\theta_1; f_1 }  (\gamma'_1 \vdash s'_1 : \ty_1')            } \hypo{\cdots}\hypo{ (\gamma_k \vdash s_k : \ty_k) \to_{\theta_k; f_k }  (\gamma'_k \vdash s'_k : \ty'_k)   }  
\infer3{ (\bigotimes \gamma_i \vdash_b \seqdots{s}{1}{k})  \to_{ \bigotimes \theta_i ; \seq{ id; f_1,\dots, f_k } } ( \bigotimes \gamma'_i \vdash_b \seqdots{s'}{1}{k} : \seqdots{\ty'}{1}{k})    }
\end{prooftree} \\[1em]
\begin{prooftree}
\hypo{    (\gamma \vdash_b \vec{q} : \tyl ) \to_{\theta; \seq{\alpha; \vec{f}}} (\gamma' \vdash \vec{q'} : \tyl')      } \hypo{  \delta_1 \vdash_s \overline{q}_1 : \overline{\ty}_1  }\hypo{ \delta_2 \vdash_s  \overline{q}_2 : \overline{\ty}_2  }
\infer3{  (\delta_1 \otimes \gamma \otimes \delta_2   \vdash_s (\overline{q}_1 \cdot \vec{q} \cdot \overline{q}_2) : (\overline{\ty}_1 \cdot \tyl \cdot \overline{\ty}_2) ) \to_{\delta_1 \otimes \theta \otimes \delta_2; \overline{\ty}_1 \cdot \seq{\alpha; \vec{f}} \cdot \delta_2  }  (\delta_1 \otimes \gamma' \otimes \delta_2   \vdash_s (\overline{q}_1 \cdot \vec{q'} \cdot \overline{q}_2 ): (\overline{\ty}_1 \cdot \tyl' \cdot \overline{\ty}_2) )    }
\end{prooftree}
\end{gather*}}}
\caption{Structural reduction.} 
\hrulefill\label{fig:colla}
\end{figure*}

\subsection{Termination and Confluence}

We are interested in proving strong normalization and confluence of the reduction relations introduced in the last section.

We first consider strong normalization of exponential reduction. This will be proved by means of a reducibility argument~\cite{Tait67}. Given that structural reduction is a labelled reduction system, we have to somehow take type morphisms into account. The crucial lemma will then be to prove that if $ \gamma \vdash s : \ty $ and $ \theta : \delta \to \gamma  ,$ then $ \ract{s}{\theta} $ is reducible for the type $\ty$. In order to do so, we need to adjust the classic notion of saturated set of terms. Normally, we would ask for a set to be closed by antireduction: if $ \subst{s}{x}{\vec{t}} \in X $, then $ (\la{x}s)\vec{t} \in X $. We will do the same in our case, taking into account the action of morphisms instead of substitution.

We set $\mathsf{SN} = \{ s \mid \ s \text{ is strongly normalizable for $\toexp$. }  \} .$ A set $X $ of structural resource terms is said to be \emph{saturated} when, given terms $  \gamma, x : \tyl \vdash s : b    $ and $ \delta \vdash \overline{q} : \overline{b} $ with morphism $ \overline{g} : \overline{b} \to \overline{\ty}  $ the following condition holds:  if $ \overline{q} \in \mathsf{SN} $ and for all $ \overline{f}$ s.t.  $ \overline{q} \to_{\theta; \overline{f}} \overline{q'} $ we have that
    $   \ract{s}{\idm;  \overline{gf}}\in X   $, then $   (\la{x^{\overline{g}}} s) \overline{q} \in X .       $ 

\begin{proposition}\label{prop:sn_sat}
$ \mathsf{SN}$ is saturated.
\end{proposition}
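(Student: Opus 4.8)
The plan is to establish saturation by the usual reducibility method. Unfolding the definition, it suffices to show that the redex $M=(\la{x^{\overline g}}s)\overline q$ is strongly normalizing for $\toexp$ whenever $\overline q\in\mathsf{SN}$ and, for every reduct $\overline q\to^{\ast}_{\overline f}\overline{q'}$, the contravariant action $\ract{s}{\idm;\overline{gf}}$ lies in $\mathsf{SN}$. (I read the body condition over the reflexive–transitive closure of $\to$: the zero-step instance $\overline f=\idm$ yields $\ract{s}{\idm;\overline g}\in\mathsf{SN}$, and closure under further reduction of $\overline q$ is exactly what the induction consumes.) I use the accessibility characterization of $\mathsf{SN}$ — a term lies in $\mathsf{SN}$ iff all of its one-step $\toexp$-reducts do — and, for $t\in\mathsf{SN}$, write $h(t)$ for its ordinal height. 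Two auxiliary properties of the morphism actions are needed, both proved separately by induction on Figures~\ref{fig:ra} and~\ref{fig:la}: first, that the actions preserve strong normalization, so $\ract{t}{\theta}\in\mathsf{SN}$ and $[f]t\in\mathsf{SN}$ whenever $t\in\mathsf{SN}$; and second, a commutation (simulation) property, stating that a step $t\toexp t'$ is reflected through the action as $\ract{t}{\theta}\toexp^{\ast}\ract{t'}{\theta'}$ along the relabelling dictated by Compositionality. Both rest on the fact that $\toexp$ performs no substitution, so that the copies created by a contraction reduce independently.

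With these in hand I argue by well-founded induction on a measure fixed below, exhausting the three shapes of one-step $\toexp$-reduct $M'$ of $M$ read off from Figure~\ref{fig:colla}. (1)~A reduction in the argument, $\overline q\to_{\overline f}\overline{q'}$: the left application rule transforms the function by the covariant action $[\overline f\multimap o]$, which by Figure~\ref{fig:la} merely precomposes the binder, giving $M'=(\la{x^{\overline{gf}}}s)\overline{q'}$; the body condition for $\overline{q'}$ with binder $\overline{gf}$ follows from the one for $\overline q$ by Compositionality (composite reduction labels), and $\overline{q'}\in\mathsf{SN}$, so $M'$ again satisfies the saturation hypotheses and falls under the inductive hypothesis. (2)~When $\overline g\neq\idm$, the head exponential step fires: by the ground rule $M'=(\la{x^{\idm}}\ract{s}{\idm;\overline g})\,[\nu_x]\overline q$, whose body lies in $\mathsf{SN}$ by hypothesis and whose argument lies in $\mathsf{SN}$ by the preservation property, and whose binder is now trivial. (3)~A reduction in the body, $s\to_{\overline f}s'$: the abstraction rule leaves the argument fixed and yields $M'=(\la{x^{\overline{fg}}}s')\overline q$, and by the commutation property this lifts to a nonempty reduction $\ract{s}{\idm;\overline g}\toexp^{+}\ract{s'}{\idm;\overline{fg}}$. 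In each case $M'\in\mathsf{SN}$, whence $M\in\mathsf{SN}$ by accessibility.

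The delicate point, and the main obstacle, is the choice of a well-founded measure that decreases in all three cases at once, because the structural content of $\overline g$ makes cases (1) and (2) pull in opposite directions: firing the binder in (2) may \emph{duplicate} the argument through the ground morphism $\nu_x$, so that $h([\nu_x]\overline q)$ can exceed $h(\overline q)$, whereas reducing the argument in (1) may \emph{enlarge} the binder, since $\overline{gf}$ is more complex than $\overline g$. A plain lexicographic order on $(h(\ract{s}{\idm;\overline g}),\,h(\overline q),\,c(\overline g))$ therefore does not work directly. The fix is to take $h(\ract{s}{\idm;\overline g})$ as the primary component — invariant under firing (the body-after-binder is unchanged) and non-increasing under argument reduction (by the commutation property), while strictly decreasing in case (3) — and, as the secondary component, a measure of the argument that already absorbs the ground content of the binder, namely the height of $\overline q$ weighted by the duplications prescribed by $\overline g$. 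This weighted measure is invariant under firing (which only realizes those duplications) and strictly decreases when $\overline q$ reduces, leaving the residual binder complexity $c(\overline g)$ as the tertiary component that drops in case (2). Making this weighting precise, and proving the commutation property at a level of generality that tracks both the duplications of contractions and the erasures of weakenings, is the technical heart of the argument; once it is available the three cases above close the induction.
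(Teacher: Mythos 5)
Your skeleton is the same as the paper's: the three shapes of one-step reduct of $(\la{x^{\overline{g}}}s)\overline{q}$, the reading of the body condition that makes a zero-step instance available (so that $\ract{s}{\idm;\overline{g}}\in\mathsf{SN}$), and the two auxiliary facts about actions --- preservation of $\mathsf{SN}$ and simulation of reduction steps through actions --- which are exactly what the paper packages as Lemma~\ref{lem:actred}. Where you diverge is in how the interleaving of the three kinds of steps is discharged. The paper builds no termination measure at all: it uses Lemma~\ref{lem:actred} to reflect infinite chains through the action (from the hypothesis $\ract{s}{\idm;\overline{gf}}\in\mathsf{SN}$ it extracts $s\in\mathsf{SN}$, hence $\la{x^{\overline{g}}}s\in\mathsf{SN}$), and then argues that every remaining reduction of the application originates in $\overline{q}\in\mathsf{SN}$; crucially, the saturation hypothesis already quantifies over \emph{all} argument-reduction labels $\overline{f}$, so your case (1) is handled by the hypothesis itself rather than by a decreasing measure.

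The genuine gap is exactly where you locate it, and it is not a benign omission: the weighted measure is never defined, and the monotonicity you ascribe to it is doubtful. In your case (1) the reduct is $(\la{x^{\overline{gf}}}s)\overline{q'}$, so the primary component of the measure becomes $h(\ract{s}{\idm;\overline{gf}})$; but the label $\overline{f}$ of an exponential step can itself carry contractions --- for instance $c_o\multimap o$ labels the step eliminating a contraction binder --- so $\ract{s}{\idm;\overline{gf}}$ may duplicate arguments of occurrences of $x$ that $\ract{s}{\idm;\overline{g}}$ leaves untouched, and its height can strictly \emph{increase}. Dually, when $\overline{g}$ contains weakenings, a body step can be erased by the action, so the claimed strict decrease of the primary component in case (3) can also fail. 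Hence the component you designate as primary is neither non-increasing in case (1) nor strictly decreasing in case (3) without further argument, and the secondary ``weighted height'' that is supposed to absorb both phenomena is precisely the part you leave unconstructed. As it stands your text is a proof plan, not a proof: to close it you must either actually define the weighting and establish its invariance/decrease properties under all three kinds of steps, or drop the measure altogether and argue as the paper does, projecting a hypothetical infinite reduction of $(\la{x^{\overline{g}}}s)\overline{q}$ onto an infinite reduction of $\ract{s}{\idm;\overline{g}\cdot}$ or of $\overline{q}$ via the simulation lemma, exploiting the saturation hypothesis for all labels $\overline{f}$ at once.
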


Given a resource type $ \ty ,$ we define the \emph{reducibility candidates} for $ \ty  $, $\mathsf{Red}(\ty) \subseteq \rTerms , $ by induction on the structure of $ \ty$, as follows:
\begin{align*}
\mathsf{Red}(o)&=\mathsf{SN};\\
\mathsf{Red}(   \seqdots{\ty}{1}{k}        ) &= \{ \seqdots{t}{1}{k} \mid t_i \in \mathsf{Red}(\ty_i) \} ;\\
     \mathsf{Red}(\overline{\ty} \multimap o) &= \{  s \in \rTerms \mid \text{ for all } \overline{q} \in \mathsf{Red}(\overline{\ty}), s \overline{q} \in \mathsf{Red}(o)\}   ;\\
   \mathsf{Red}((\tyl_1 \cdots \tyl_n)) &= \{  (\vec{t}_1 \cdots \vec{t}_n)  \mid \vec{t}_i \in \mathsf{Red}(\tyl_i)       \}                              
\end{align*} 

\begin{lemma}
We have that $  \mathsf{Red}(\ty) \subseteq \mathsf{SN} .  $
\end{lemma}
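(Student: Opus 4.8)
The plan is to run a Tait--Girard reducibility argument, proceeding by induction on the structure of the type. Since $\mathsf{SN}$ is defined with respect to exponential reduction only, all the reasoning below concerns $\toexp$, and I read $\mathsf{SN}$ and reduction componentwise on intersection and $\eta$-long sequence types, so that a bag or a sequence is in $\mathsf{SN}$ exactly when each of its components is. The inclusion $\mathsf{Red}(\ty)\subseteq\mathsf{SN}$ cannot be established in isolation: in the arrow case one must feed a reducible argument to a reducible function, so I would strengthen the statement and prove simultaneously, by induction on the type, the two properties: (CR1) every element of $\mathsf{Red}$ of the given type lies in $\mathsf{SN}$; and (VAR) for each variable $x$ of the given type its $\eta$-long form $x^\eta$ (the canonical neutral term of that type) belongs to $\mathsf{Red}$ of that type, so that in particular each candidate is nonempty.

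For (CR1) the base type is immediate, since $\mathsf{Red}(o)=\mathsf{SN}$. For a bag $\seq{t_1,\dots,t_k}\in\mathsf{Red}(\seq{\ty_1,\dots,\ty_k})$ each $t_i$ lies in $\mathsf{Red}(\ty_i)\subseteq\mathsf{SN}$ by the induction hypothesis, and because the bag and sequence contextual-closure rules of Figure \ref{fig:colla} reduce componentwise, any infinite reduction of the bag forces an infinite reduction of some fixed component; hence the bag is in $\mathsf{SN}$, and similarly for sequence types. The interesting case is the arrow type $\overline{\ty}\multimap o$. Given $s\in\mathsf{Red}(\overline{\ty}\multimap o)$, I use (VAR) at the strictly smaller component types of $\overline{\ty}$ to produce a witness $\overline{q}\in\mathsf{Red}(\overline{\ty})$; by definition of the candidate, $s\,\overline{q}\in\mathsf{Red}(o)=\mathsf{SN}$. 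It then remains to transport strong normalization from $s\,\overline{q}$ back to $s$: any infinite $\toexp$-sequence out of $s$ lifts, through the contextual-closure rule of Figure \ref{fig:colla} that reduces the function of an application, to an infinite $\toexp$-sequence out of $s\,\overline{q}$. The point to check is that this lifting stays within exponential reduction and that the accompanying action $[\overline{f}]\overline{q}$ on the argument does not shorten the sequence: it merely transforms the argument while each function step remains a genuine step. From $s\,\overline{q}\in\mathsf{SN}$ we conclude $s\in\mathsf{SN}$.

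For (VAR), the $\eta$-long form of a base-type variable is the variable itself, which is exponentially normal and hence lies in $\mathsf{SN}=\mathsf{Red}(o)$. For a variable of arrow type $\overline{\ty}\multimap o$, its $\eta$-long form is an abstraction $\la{\overline{y}^{\,\overline{\idm}}}(x\,\overline{y}^{\,\eta})$ whose binding annotations are identities and whose body applies $x$ to the $\eta$-long forms of fresh variables, reducible by the induction hypothesis. To see it is reducible, apply it to an arbitrary $\overline{q}\in\mathsf{Red}(\overline{\ty})$: since the annotation is the identity this is not an exponential redex, so the only $\toexp$-steps available occur inside $\overline{q}$ (the subterms $\overline{y}^{\,\eta}$ being normal), and these are bounded because $\overline{q}\in\mathsf{SN}$ by (CR1) at the smaller component types. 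Hence the application lies in $\mathsf{SN}=\mathsf{Red}(o)$, so $x^\eta\in\mathsf{Red}(\overline{\ty}\multimap o)$. Both recursive calls in (CR1) and (VAR) are made at the component types of $\overline{\ty}$, which are strictly smaller, so the induction is well-founded.

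The main obstacle is the arrow case of (CR1) together with its reliance on (VAR): one needs the candidates to be nonempty, and one must verify that the lifting of a divergent function reduction to the whole application is compatible with the labelled, type-changing nature of exponential reduction --- that is, that the type-morphism actions decorating the steps (and the fact that $\mathsf{SN}$ tracks only $\toexp$) never turn an infinite sequence into a finite one. I would handle this carefully using the contextual-closure rules of Figure \ref{fig:colla}, but conceptually it is the standard ``reducible implies normalizing'' step, adapted to the presence of structural-rule labels.
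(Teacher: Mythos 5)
Your overall skeleton is the one the paper uses: induction on the structure of the type, base case by definition, bags and sequences componentwise, and the arrow case by applying $s$ to a reducible argument $\overline{q}$ and lifting any infinite $\toexp$-sequence out of $s$ to one out of $s\,\overline{q}$ via the function-position contextual rule. That part of your argument, including the remark that the action $[\overline{f}]$ on the argument does not destroy steps, is fine. You also correctly spot something the paper's own one-line proof leaves tacit: the arrow case needs a \emph{witness} $\overline{q}\in\mathsf{Red}(\overline{\ty})$, i.e.\ nonemptiness of the candidates, which is why you add (VAR). The problem is that your verification of (VAR) is wrong.

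Concretely, the claim ``since the annotation is the identity this is not an exponential redex, so the only $\toexp$-steps available occur inside $\overline{q}$'' fails in this calculus. By the contextual-closure rule for argument-position steps (Figure~\ref{fig:colla}), a step $\overline{q}\to_{\theta;\overline{f}}\overline{q}'$ rewrites $s\,\overline{q}$ to $([\overline{f}\multimap o]s)\,\overline{q}'$, and by the covariant action on abstractions (Figure~\ref{fig:la}) the morphism is composed into the binder annotation: $[\overline{f}\multimap o]\bigl(\la{\overline{y}^{\overline{\idm}}}(x\,\overline{y}^{\eta})\bigr)=\la{\overline{y}^{\overline{f}}}(x\,\overline{y}^{\eta})$. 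So as soon as one step of $\overline{q}$ carries a non-identity $\overline{f}$, your function acquires a root exponential redex; firing it pushes the action $\ract{-}{\overline{f}}$ into the body, which duplicates, erases and re-annotates the $\eta$-long subterms $\overline{y}^{\eta}$, creating further redexes below. Hence strong normalization of $(\la{\overline{y}^{\overline{\idm}}}(x\,\overline{y}^{\eta}))\,\overline{q}$ is \emph{not} bounded by the $\mathsf{SN}$-ness of $\overline{q}$ alone, and (VAR) --- on which your arrow case of (CR1) rests --- is unproved. This interaction between argument steps and the function is precisely what the saturation property (Proposition~\ref{prop:sn_sat}) is formulated to handle: its hypothesis quantifies over all morphisms $\overline{f}$ labelling steps of $\overline{q}$ and requires the corresponding actions $\ract{s}{\idm;\overline{gf}}$ on the body to stay in the set. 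The repair is to prove (VAR) by invoking saturation with body $x\,\overline{y}^{\eta}$, which in turn requires showing (by a further induction on types) that $\ract{(x\,\overline{y}^{\eta})}{\idm;\overline{f}}\in\mathsf{SN}$ for all such $\overline{f}$; your proposal never uses saturation and skips this step entirely.
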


\begin{lemma} 
 Let $ \gamma \vdash s : \ty $ and $ \theta : \delta \to \gamma, f : \ty \to b . $ Then $    \ract{s}{\theta} \in \mathsf{Red}(\ty)    $ and $ [f]s \in \mathsf{Red}(b) . $ 
\end{lemma}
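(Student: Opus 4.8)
The plan is to prove both assertions simultaneously by induction on the type derivation of $\gamma \vdash s : \ty$, which, by Uniqueness of Derivations, is the same as induction on the structure of $s$. It is cleanest to establish the apparently stronger statement that for \emph{all} compatible morphisms $\theta : \delta \to \gamma$ and $f : \ty \to b$ one has $[f]\ract{s}{\theta} \in \mathsf{Red}(b)$; the two claims of the lemma are then the instances $f = \idm$ and $\theta = \idm$, using $[\idm]s = s$ and $\ract{s}{\idm} = s$. The point of this formulation is that a combined action is precisely the inductive object, so that at every node the combined action on $s$ decomposes, through Figures~\ref{fig:ra} and~\ref{fig:la}, into combined actions on the \emph{immediate subterms}; in particular the mixed covariant–contravariant composites that appear (e.g.\ $[\bar f]\ract{\bar q}{\theta'}$ below) are themselves direct induction hypotheses rather than something to be disentangled by hand. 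Throughout I will freely use the two facts already available: $\mathsf{Red}(\ty) \subseteq \mathsf{SN}$, and the saturation of $\mathsf{SN}$ (Proposition~\ref{prop:sn_sat}).

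The neutral cases reduce to brute membership in $\mathsf{SN}$. For a variable application $x\bar q$, which is necessarily of type $o$ so that the covariant part is forced to be $\id_o$ and thus trivial, the action rewrites it to $x\,[\bar f]\ract{\bar q}{\theta'}$; the inner term is a combined action on the subsequence $\bar q$, hence in the appropriate $\mathsf{Red}$ by the induction hypothesis and so componentwise in $\mathsf{SN}$. It then suffices to observe, as a small auxiliary fact, that a variable-headed application of a bag with strongly normalizing entries is itself strongly normalizing: no exponential redex is ever created at the head (exponential steps fire only at abstractions carrying a non-identity label), so every reduction happens inside the bag, possibly relabelling $x$ but never consuming it. For a redex $(\lambda\bar x^{\bar f}.s)\bar t$, again of type $o$, the action distributes over function and argument; the induction hypothesis gives $\ract{(\lambda\bar x^{\bar f}.s)}{\theta_0} \in \mathsf{Red}(\bar a \multimap o)$ and $\ract{\bar t}{\theta_1} \in \mathsf{Red}(\bar a)$, and the definition of $\mathsf{Red}$ at an arrow type places the application in $\mathsf{Red}(o) = \mathsf{SN}$. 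The bag and term-sequence cases are purely structural: the combined action distributes componentwise, the covariant part reindexing and possibly duplicating or erasing entries through its underlying $\alpha \in \Of^{\spadesuit}$, each resulting entry is a combined action on a subterm and hence reducible by the induction hypothesis, and membership of the whole bag or sequence in $\mathsf{Red}$ follows by definition.

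The genuine work, and the step I expect to be the main obstacle, is the abstraction case. Here the combined action produces an abstraction $\lambda\bar x^{\bar g'}.s'$ with body $s' = \ract{s}{\theta,\bar b}$ (the covariant part does not touch the body) and with a relabelled binding morphism $\bar g'$ that absorbs both the ground morphism $\nu_x$ generated by $\theta$ and the covariant $f$. To show $\lambda\bar x^{\bar g'}.s' \in \mathsf{Red}(\bar a \multimap o)$ I must, by definition, feed it an \emph{arbitrary} $\bar q \in \mathsf{Red}(\bar a)$ and prove $(\lambda\bar x^{\bar g'}.s')\,\bar q \in \mathsf{SN}$, which is exactly what saturation (Proposition~\ref{prop:sn_sat}) delivers: its hypotheses are that $\bar q \in \mathsf{SN}$, which holds since $\mathsf{Red}(\bar a) \subseteq \mathsf{SN}$, and that for every reduction $\bar q \to_{\theta';\bar f'} \bar q'$ the body action $\ract{s'}{\idm;\bar g'\bar f'}$ lies in $\mathsf{SN}$. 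The delicate point is that $\bar q$ is external to the term under induction, so the induction hypothesis is \emph{not} available for it; $\bar q$ may be used only through its strong normalization and its reduction behaviour. To discharge the second hypothesis I would collapse the iterated contravariant action $\ract{\ract{s}{\theta,\bar b}}{\idm;\bar g'\bar f'}$ into a single action $\ract{s}{\theta''}$ by the Compositionality proposition, and then invoke the induction hypothesis on the body $s$, a proper subderivation of type $o$, to conclude $\ract{s}{\theta''} \in \mathsf{Red}(o) = \mathsf{SN}$. The purely covariant subcase (where $\theta = \idm$ and $f$ merely post-composes the binding morphism) is identical up to which composite decorates the bound variable. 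The only genuinely laborious, though routine, part is threading the ground morphisms $\nu,\mu$ through the actions so that the collapsed $\theta''$ has exactly the type needed to apply Compositionality and the induction hypothesis; this is where essentially all of the calculation sits.
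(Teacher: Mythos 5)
Your proof is correct, and it rests on the same pillars as the paper's: induction on the structure of $s$, saturation of $\mathsf{SN}$ (Proposition~\ref{prop:sn_sat}) to handle abstractions, and $\mathsf{Red}(\ty)\subseteq\mathsf{SN}$ everywhere else. Where you genuinely differ is the shape of the induction. The paper keeps the two claims separate and proves them by \emph{mutual} induction: the covariant statement's abstraction case invokes the contravariant IH on the body (which the covariant action $[\bar f\multimap o]$ never touches, since it only recomposes the binding label), and the contravariant statement in turn leans on the covariant one; in its application case $x\bar q$ the paper only treats the situation where $\theta$ acts on the head variable alone, so that the subterm produced is $[\bar f]\bar q$ and not a mixed composite. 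Your strengthened single statement $[f]\ract{s}{\theta}\in\mathsf{Red}(b)$ absorbs exactly this gap: the mixed composites $[\bar f]\ract{\bar q}{\theta'}$ generated at application nodes under a general $\theta$ become literal instances of the IH, rather than requiring either the observation that contravariant actions preserve term structure (so that point 1 may legitimately be applied to $\ract{\bar q}{\theta'}$, which is not a subterm) or an explicit appeal to compositionality. The price is paid in your abstraction case, where the saturation hypothesis concerns the iterated action $\ract{\ract{p}{\theta,\bar b}}{\idm;\bar g'\bar f'}$ and you must invoke the Compositionality proposition (Lemma~\ref{lem:comp}) to collapse it into a single $\ract{p}{\theta''}$ before the IH on $p$ applies --- a step the paper's covariant abstraction case avoids entirely, but which its contravariant abstraction case (omitted from the appendix's ``interesting cases'') would need in precisely the way you describe. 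Net effect: your organization is more uniform, covers the general-$\theta$ application case that the paper leaves implicit, and makes compositionality an explicitly load-bearing ingredient, whereas the paper's split keeps each of the two inductions individually lighter.
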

\begin{proof}
By mutual induction on the structure of $ s$, exploiting the saturation property in the abstraction cases.
\end{proof}

\begin{theorem}[Adequacy]
If $\gamma \vdash s : \ty $ then $ s \in \mathsf{Red}(\ty). $
\end{theorem}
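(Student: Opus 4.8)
The plan is to obtain this theorem as an immediate corollary of the preceding Lemma, which plays the role of the fundamental lemma of the logical relation. The crucial observation is that the reducibility candidates $\mathsf{Red}(\ty)$ are plain, context-free sets of terms indexed only by a resource type: the right-hand side of the adequacy statement makes no reference to $\gamma$. Consequently, the only thing to do is to specialize the Lemma—whose statement quantifies over arbitrary context morphisms $\theta : \delta \to \gamma$ and type morphisms $f : \ty \to b$—to the identity morphisms, where the actions collapse to the identity on terms.

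Concretely, given a derivation $\gamma \vdash s : \ty$, I would instantiate the preceding Lemma with $\delta = \gamma$ and $\theta = \idm_\gamma : \gamma \to \gamma$; the covariant component $f$ is irrelevant here and may be taken to be $\idm_\ty$. The Lemma then yields $\ract{s}{\idm_\gamma} \in \mathsf{Red}(\ty)$. Invoking the identity law $\ract{s}{\idm} = s$ recorded just before the Compositionality proposition, we conclude $s \in \mathsf{Red}(\ty)$, which is exactly the claim. (Combined with the earlier inclusion $\mathsf{Red}(\ty) \subseteq \mathsf{SN}$, this is what delivers strong normalization of exponential reduction for every typable term.)

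Since the theorem is thus a one-line consequence of the Lemma, there is no genuine obstacle at this level of the argument; all the difficulty has already been absorbed into the Lemma. For the record, the hard part there is the abstraction case, where one cannot reason by ordinary anti-reduction—there is no substitution to undo—but must instead use the \emph{saturation} property: the fact that $\mathsf{SN}$, and more generally each $\mathsf{Red}(\ty)$, is closed under the appropriate anti-step phrased in terms of the contravariant action of morphisms rather than substitution, as guaranteed by Proposition~\ref{prop:sn_sat}. The simultaneous treatment of the covariant and contravariant actions by mutual induction, together with the bookkeeping of the induced ground morphisms $\nu$ and $\mu$ that propagate structural operations to the other free variables, is what makes that argument delicate. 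The adequacy theorem itself merely harvests the Lemma's conclusion at the identity.
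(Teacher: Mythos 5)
Your proposal is correct and follows exactly the paper's own argument: the paper proves Adequacy as a one-line corollary of the preceding Lemma by instantiating $\theta = \idm_{\gamma}$ and using the identity law $\ract{s}{\idm} = s$, with all the real work (saturation in the abstraction cases, mutual induction over the two actions) already absorbed into the Lemma, just as you describe.
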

\begin{proof}
Corollary of the former lemma, by taking $  \theta = \idm_{\gamma}$.
\end{proof}

\begin{theorem}
Exponential reduction is strongly normalizing.
\end{theorem}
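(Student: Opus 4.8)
The plan is to obtain the theorem as an immediate corollary of the preceding Adequacy theorem together with the lemma asserting $\mathsf{Red}(\ty) \subseteq \mathsf{SN}$. First I would unfold the definition of $\mathsf{SN}$: by construction it is exactly the set of structural resource terms admitting no infinite $\toexp$-reduction sequence. Hence the assertion that exponential reduction is strongly normalizing is equivalent to the statement that every term occurring as the source of an exponential step lies in $\mathsf{SN}$. Since exponential reduction is defined only on well-typed judgments $\gamma \vdash s : \ty$, it suffices to show that every such $s$ belongs to $\mathsf{SN}$.

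So, given a well-typed term $\gamma \vdash s : \ty$, I would invoke the Adequacy theorem to get $s \in \mathsf{Red}(\ty)$, and then apply the containment lemma $\mathsf{Red}(\ty) \subseteq \mathsf{SN}$ to conclude $s \in \mathsf{SN}$, i.e.\ that $s$ is strongly normalizing for $\toexp$. As this holds uniformly over all well-typed terms, no infinite chain of exponential steps can issue from any term of the calculus, which is precisely the claimed strong normalization. The only bookkeeping point I would double-check is the match between the domain of $\toexp$ and the class of terms to which Adequacy applies: namely that the contextual closure of the ground exponential step never leaves the well-typed fragment. This is guaranteed by the fact that every rule in Figure~\ref{fig:colla} carries along its source and target judgments together with the labelling morphisms, so reducts of typed terms are again typed.

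The substance of the argument is entirely carried by the reducibility development preceding the statement, so at this point there is no genuine obstacle remaining. The hard part has already been discharged in the main adequacy lemma, proved by mutual induction on $s$ and relying on the saturation of $\mathsf{SN}$ (Proposition~\ref{prop:sn_sat}) in the two abstraction cases; there one must verify that the antireduction clause defining a saturated set is respected when the morphism actions $\ract{s}{\theta}$ and $[f]s$ take the place of ordinary substitution. Consequently, the present theorem is essentially a packaging step: combine Adequacy with the containment $\mathsf{Red}(\ty) \subseteq \mathsf{SN}$ and read off strong normalization.
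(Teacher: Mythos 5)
Your proposal matches the paper's intended argument exactly: the theorem is obtained as an immediate corollary of Adequacy together with the containment $\mathsf{Red}(\ty) \subseteq \mathsf{SN}$, with all the real work residing in the reducibility lemma and the saturation of $\mathsf{SN}$, just as you describe. Your bookkeeping remark about exponential reduction being defined on well-typed judgments is also consistent with the paper's setup, so nothing is missing.
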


Strong normalization of linear reduction can be proved in a combinatorial way, since the size of derivation strictly \emph{decreases} under linear reduction.

\begin{proposition}
If $ s \toli^\ast s' $ then $ \size{s'} < \size{s} $. As a consequence, linear reduction is strongly normalizing.
\end{proposition}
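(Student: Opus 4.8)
The plan is to exhibit a strictly decreasing numerical measure. I take $\size{\cdot}$ to be the natural size counting constructors: $\size{x}=1$, $\size{\la{\overline{x}^{\overline{f}}}s}=\size{s}+1$, $\size{s\,\overline{t}}=\size{s}+\size{\overline{t}}+1$, and the size of a bag or of a term sequence is the sum of the sizes of its components. Since this is a label-independent sum over the syntax tree, replacing a proper subterm by one of strictly smaller size strictly decreases the whole; this is the monotonicity I will invoke for the contextual closure.

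First I would prove the central arithmetic fact about linear substitution: for $\gamma, x:\tyl \vdash s:b$ and $\delta \vdash_b \vec{t}:\tyl$ one has $\size{\subst{s}{x}{\vec{t}}} = \size{s} - \occ{s}{x} + \size{\vec{t}} \le \size{s}+\size{\vec{t}}$, where $\occ{s}{x}$ is the number of free occurrences of $x$. This follows by a routine induction on $s$ along the clauses of Figure~\ref{fig:linsub}: the variable clause gives $\subst{x}{x}{\seq{t}}=t$, and every other clause splits both the term and the bag multiplicatively, so the additive identity is preserved. The essential point is that linear substitution never copies an argument---each bag element lands on exactly one occurrence of $x$---so no size blow-up can occur. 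Only the displayed inequality is actually needed below.

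The main obstacle, and the step requiring care, is the contextual closure of Figure~\ref{fig:colla}, because its application clauses rewrite the surrounding subterm by a covariant action $[\overline{f}]\,(-)$, and such an action could in principle duplicate material through nested contractions. The key observation that removes this danger is that \emph{every linear step carries an identity type label}: the linear ground step is labelled $\sigma_{s,[\tau]\overline{t}};o$, whose type component is $\idm_o$, and one checks by induction on the closure derivation that this identity is preserved. Indeed, the abstraction clause turns $\idm_o$ into $\idm_{\overline{b} \multimap o}$, the bag and sequence clauses assemble identity components into $\seq{\idm;\idm,\dots,\idm}=\idm$, and the two application clauses, having an identity sub-label, satisfy $\overline{f}=\idm$, so that $[\overline{f}]\overline{t}=\overline{t}$ and $[\overline{f} \multimap o]s=s$ with trivial induced ground morphisms. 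Consequently the only term transformations performed along a linear step are the permutation action $[\tau]$ in the ground rule and the linear substitution itself, both of which preserve (resp.\ do not increase) size, since $\tau$ is required to be a permutation.

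With these two facts in hand I would prove the single-step statement $s \toli s' \Rightarrow \size{s'}<\size{s}$ by induction on the derivation of the step. In the base case, the redex $(\la{\overline{x}^{\tau}}s)\overline{t}$ has size $\size{s}+\size{\overline{t}}+2$, while the reduct $\subst{s}{\overline{x}}{[\tau]\overline{t}}$ has size $\le \size{s}+\size{[\tau]\overline{t}}=\size{s}+\size{\overline{t}}$ by the substitution lemma together with the size-invariance of $[\tau]$; the two discarded constructors give a strict drop. In each contextual case the surrounding subterms are left unchanged up to the size-preserving actions isolated above, so by the induction hypothesis the reduced component strictly decreases and monotonicity propagates the strict inequality to the whole term. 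Finally, a single step strictly decreases $\size{\cdot}\in\N$, so iterating yields $\size{s'}<\size{s}$ for every nonempty $\toli$-reduction, and since $\N$ is well-founded no infinite $\toli$-sequence can exist; hence linear reduction is strongly normalizing.
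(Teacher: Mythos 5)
Your proposal is correct and takes essentially the same route as the paper, which justifies this proposition with exactly the combinatorial observation you formalize: the size of the term (equivalently, of its derivation) strictly decreases under each linear step, whence strong normalization. The one delicate point you handle---that the covariant actions $[\overline{f}]$ in the contextual application clauses cannot duplicate material along a linear step, because linear steps carry identity type labels and ground-permutation context labels---is precisely the content of the paper's Remark~\ref{rem:linsubj}, so your invariant coincides with the paper's own.
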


\begin{remark}
Normal forms for the exponential reduction are a special kind of linear resource terms, the \emph{planar} ones. Normal forms for linear reduction, instead, are not linear in general, since non-linear redexes are never fired. Normal forms for general structural reduction are exactly the planar terms in $\beta $-normal form.
\end{remark}

\begin{remark}
Exponential steps can generate linear redexes. For instance, $    (\la{x^{c_o}}w \seq{x}\seq{x}) \seq{y}  $ is a linear normal form; however, if we perform a step of exponential reduction we get the term $ (\la{x^{\seq{o,o}}}  w \seq{x} \seq{x}) \seq{y,y} $ that is now a linear redex. Linear steps too can generate exponential redexes. This is due to the permutations generated by substitution. If you take $ s = \la{z^{\seq{o, \seq{o}\multimap o}}}   ( \la{x}\la{y} y \seq{x} ) \seq{z,z}    $, the term $s $ is an exponential normal form. However, by performing the leftmost linear reduction to its normal form, we obtain the term $  t = \la{ z^{\sigma} } z \seq{z} $, since the substitution forces the permutation of the two occurrences of $z  $. Now $t $ is not an exponential normal form anymore.
Hence, the two reductions do not strictly commute with each other. 
\end{remark}
\begin{figure*}[t]
\scalebox{0.9}{\parbox{1.05\linewidth}{\begin{gather*}
\begin{prooftree}
\infer0{(x : \seq{\ty} \vdash x : \ty  ) \cong_{\seq{\ty}}  (x : \seq{\ty} \vdash x : \ty  )} 
\end{prooftree}
  \qquad
  \begin{prooftree}
  \hypo{  \sigma : \gamma \to \gamma^{[\sigma]}  \text{ is a permutation}              }
  \infer1{ (   \gamma \vdash s : \ty    ) \cong_{\sigma}  (  \gamma^{[\sigma]} \vdash s : \ty ) }
  \end{prooftree} 
  \\[1em]
  \begin{prooftree}
  \hypo{  (\gamma, x : \tyl \vdash s : \ty) \cong_{\theta, \sigma } (\gamma', x : \tyl' \vdash s' : \ty')     } \hypo{ f : \vec{b} \to \tyl }
  \infer2{   (\gamma  \vdash  \la{x^f} s : \vec{b} \multimap \ty) \cong_{\theta } (\gamma'  \vdash \la{ x^{\sigma f} } s' :  \vec{b} \multimap \ty)    }
  \end{prooftree}
  \\[1em]
  \begin{prooftree}
  \hypo{  ( \gamma_0 \vdash s : \tyl \multimap \ty  ) \cong_{\theta_0} ( \gamma'_0 \vdash s' : \tyl \multimap \ty )                }\hypo{   (\gamma_1 \vdash \vec{t} : \tyl) \cong_{\theta_1} (\gamma'_1 \vdash \vec{t}' : \tyl)  }
  \infer2{ (\gamma_0 \otimes \gamma_1  \vdash s\vec{t} : \ty  ) \cong_{\theta_0 \otimes \theta_1}   (\gamma'_0 \otimes \gamma'_1  \vdash s'\vec{t}' : \ty  )}
  \end{prooftree}
  \\[1em]
  \begin{prooftree}
  \hypo{        (  (\gamma_i \vdash t_i : \ty_i) \cong_{\theta_i}  (\gamma' \vdash t'_i : \ty_i ) )_{i = 1}^k                                 }
\infer1{     (\bigotimes \gamma_i \vdash \seqdots{t}{1}{k} : \seqdots{\ty}{1}{k} ) \cong_{\bigotimes\theta_i}      (\bigotimes \gamma'_i \vdash \seqdots{t'}{1}{k} : \seqdots{\ty}{1}{k} )                        }  
  \end{prooftree}
\end{gather*}}}
\caption{Isomorphism of Resource Terms.}
\hrulefill
\label{fig:iso}
\end{figure*}

We are now going to prove a form of weak commutation, mediated by an isomorphism. We define $ {\cong} \subseteq {( \rTerms \times \morph{\mathsf{Ctx}} \times \rTerms  )}  $ called the \emph{isomorphism relation} between resource terms, by induction in Figure \ref{fig:iso}. We note that if $ s \cong_{\theta} s'  $ then $ \theta$ is a permutation. We write $s\cong s'$ iff
$s\cong_\theta s'$ for some $\theta$.

\begin{lemma}
Let $ s\cong s' $ and $   s \toexp p   $ (resp. $ s \toli p$). Then there exists a term $ p' $ s.t. $ s' \toexp p'  $ (resp. $s' \toli p' $) and $ p \cong p' $.
\end{lemma}

\begin{theorem}[Commutation]\label{theo:commutation}
Let $  s \toli^{\ast} t \toexp^\ast t'  $ Then there exist terms $  u, u' $ s.t. $ s \toexp^\ast u \toli^\ast u'  $ with $ t' \cong u'  $. 
\end{theorem}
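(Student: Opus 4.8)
The plan is to reduce the statement to an \emph{elementary} commutation between a single linear step and a single exponential step, and then to lift it by two nested inductions: an outer induction on the number $m$ of exponential steps in $t \toexp^\ast t'$, and an inner induction on the number of linear steps that one exponential step must be moved across. Concretely, I would first isolate the following \emph{local commutation} property: whenever $a \toli b \toexp c$, there exist $d,e$ with $a \toexp^\ast d \toli^\ast e$ and $c \cong e$, where the exponential prefix $a \toexp^\ast d$ consists of \emph{at most one} step. Bounding the exponential prefix by one step is what keeps the induction under control: the single fired exponential redex is relocated to the front, while only the \emph{linear} steps are allowed to proliferate.

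For the local commutation itself I would distinguish two cases according to the origin of the exponential redex fired in $b \toexp c$. If this redex is a \emph{residual} of an exponential redex already present in $a$ — i.e.\ the structural annotation it eliminates is not created by the linear step — then I fire that residual in $a$ first, obtaining $a \toexp d$, and afterwards fire the (possibly duplicated) residuals of the linear redex, obtaining $d \toli^\ast e$; the relation $c \cong e$ is then read off from the definitions of the covariant and contravariant actions in Figures~\ref{fig:ra} and~\ref{fig:la}, together with the way linear substitution (Figure~\ref{fig:linsub}) interacts with those actions. When the eliminated annotation is a contraction, the linear redex sitting inside the duplicated argument is copied, which is exactly why the right-hand reduction is $\toli^\ast$ rather than a single step. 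In the complementary case the exponential redex is \emph{created} by the linear step; as the Remarks preceding Figure~\ref{fig:iso} illustrate, such a created annotation is always a \emph{permutation}, since linear substitution only reorders occurrences and never duplicates them. A permutation annotation is an \emph{isomorphism} in the sense of Figure~\ref{fig:iso}, so here $b \cong c$ and we may take zero exponential steps, absorbing the would-be exponential step into $\cong$.

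I would then promote local commutation to a \emph{single-step bubbling} lemma: if $a \toli^\ast b \toexp c$, then $a \toexp^{\leq 1} d \toli^\ast e$ with $c \cong e$. This is proved by induction on the length of $a \toli^\ast b$, applying local commutation to the rightmost adjacency $\toli \cdot \toexp$ at each stage and using the preceding lemma (preservation of $\cong$ under both $\toexp$ and $\toli$) to push the accumulated isomorphism to the end. The key observation is that the single exponential step only ever has to cross the linear steps lying to its \emph{left}, whose number strictly decreases, whereas the fresh linear steps produced by duplication appear to its \emph{right} and do not impede further bubbling. The theorem then follows by induction on $m$: writing $t \toexp^\ast t'$ as $t \toexp b_1 \toexp^\ast t'$, I apply the bubbling lemma to $s \toli^\ast t \toexp b_1$ to move one exponential step to the front (or eliminate it as an isomorphism), transport the remaining $\toexp^\ast$ along the resulting isomorphism via preservation of $\cong$, and conclude by the induction hypothesis applied to the residual sequence, which now contains one fewer exponential step. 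Since the outer measure counts \emph{only} exponential steps, the proliferation of linear steps during bubbling is harmless for termination.

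The main obstacle is the local commutation lemma, and within it the duplication case. The delicate point is not the existence of the rearrangement but the precise bookkeeping of the labels: each reduction step is decorated with a context morphism and a type morphism, and after swapping the two steps one must verify that the induced ground morphisms $\nu$ and $\mu$ recombine so that the two resulting terms are related by exactly the permutation witnessing $c \cong e$. This reduces to showing that the contravariant action $\ract{(-)}{\theta}$ commutes with linear substitution up to the canonical permutations $\sigma_{s,\vec{t}}$ of Figure~\ref{fig:linsub}, and that firing a contraction before versus after a linear redex yields the same ordered family of residuals up to a symmetry. Establishing this coherence — essentially a naturality statement for the actions of Figures~\ref{fig:ra} and~\ref{fig:la} with respect to substitution — is where the technical weight of the argument lies.
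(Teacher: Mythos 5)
Your overall strategy---postponing linear steps past exponential ones via a local swap, and absorbing the permutation redexes created by substitution into $\cong$---has the right general shape, and your second case is sound: annotations created ex nihilo by linear substitution are ground permutations, and firing them yields a term isomorphic in the sense of Figure~\ref{fig:iso}. The genuine gap is the quantitative claim on which your entire induction rests: in your local commutation, the exponential prefix is \emph{not} bounded by one step. The culprit is the contextual closure rule for abstraction in Figure~\ref{fig:colla}: when an exponential step whose context label contains a \emph{contraction} on $x$ is performed inside the body of a linear redex $(\la{x^{\tau}}s)\vec{t}$, that contraction is post-composed onto $\tau$, so the outer redex ceases to be a linear redex and needs an additional exponential step before it can be fired. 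Concretely, take $a=(\la{x^{\seq{o}}}(\la{z^{c_o}}w\seq{z}\seq{z})\seq{x})\seq{v}$. Then $a \toli b \toexp c$ with $b=(\la{z^{c_o}}w\seq{z}\seq{z})\seq{v}$ and $c=(\la{z^{\seq{o,o}}}w\seq{z}\seq{z})\seq{v,v}$; but the only exponential redex of $a$ is the inner one, and firing it gives $d=(\la{x^{c_o}}(\la{z^{\seq{o,o}}}w\seq{z}\seq{z})\seq{x,x})\seq{v}$, whose outer annotation is now the contraction $c_o$: the outer redex is not linearly fireable, a second exponential step is forced, and no term reachable from $a$ with at most one exponential step is isomorphic to $c$ (the outermost bag sizes already disagree, and $\cong$ relates bags pointwise, hence preserves their length). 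Nesting this pattern makes the required prefix arbitrarily long. Once the prefix is $\toexp^{\ast}$ rather than $\toexp^{\leq 1}$, your induction scheme breaks down: bubbling one exponential step leftwards across linear steps now spawns several exponential steps, each of which must again cross the linear steps to its left, so your outer measure, the number of exponential steps, does not decrease.

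The paper escapes exactly this by orienting the induction the other way and formulating the key lemma at the level of the \emph{substitution} rather than of residuals of an already-fired redex: if $\subst{s}{x}{\vec{t}} \toexp u$, then $(\la{x}s)\vec{t} \toexp^{\ast} (\la{x^{g}}p)\vec{q}$ with $\subst{\ract{p}{g}}{x}{[\nu^{g}]\vec{q}} \cong u$. Here the unbounded part is the exponential prefix, which performs all duplication inside the bag $\vec{q}$ and clears the accumulated annotation $g$, while the \emph{linear} component remains a single $\beta$-step; dually to your setup, it is the linear count that never grows. The theorem is then proved by induction on the linear reduction $s \toli^{\ast} t$, pushing exponential steps back one at a time, and using the lemmas on actions under reduction and preservation of $\cong$ for the contextual cases. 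To repair your argument you would have to weaken your local commutation to an $\toexp^{\ast}$ prefix and then find a measure that still decreases---effectively either reconstructing the paper's substitution-level lemma, or appealing to a well-founded argument based on the strong normalization of $\toexp$, which is available at this point in the paper. As written, however, the bound $\toexp^{\leq 1}$ is false and the proof does not go through.
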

From these commutation results, we can lift strong normalization of $\toexp $ and $\toli $ to the whole structural reduction:
\begin{theorem}\label{th:snstruct}
Structural reduction is strongly normalizing.
\end{theorem}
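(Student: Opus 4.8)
The plan is to derive the result from the strong normalization of the two component relations together with the commutation property of Theorem~\ref{theo:commutation}. Since ${\to} = {\toexp} \cup {\toli}$, and since we already know that $\toexp$ is strongly normalizing and that $\toli$ is strongly normalizing (the latter because $\size{\cdot}$ strictly decreases along $\toli$), it suffices to rule out infinite alternating reductions. I would argue by contradiction: assuming an infinite structural reduction $t_0 \to t_1 \to t_2 \to \cdots$, I would use commutation to manufacture an infinite \emph{purely exponential} reduction, which is impossible.

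First I would observe that such an infinite sequence must contain infinitely many exponential steps: if only finitely many $\toexp$-steps occurred, then some suffix $t_N \to t_{N+1} \to \cdots$ would consist entirely of $\toli$-steps, contradicting the strong normalization of linear reduction. Next I would extract an infinite exponential reduction by recursion, maintaining the invariant that $a_i$ admits an infinite $\to$-reduction containing infinitely many exponential steps (with $a_0 = t_0$). From $a_i$, I split the reduction into its maximal initial block of linear steps followed by the first exponential step, $a_i \toli^\ast b_i \toexp c_i$, where the tail from $c_i$ still contains infinitely many exponential steps. Applying Theorem~\ref{theo:commutation} to $a_i \toli^\ast b_i \toexp c_i$ yields $u_i$ and $c_i'$ with $a_i \toexp^\ast u_i \toli^\ast c_i'$ and $c_i \cong c_i'$. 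Using the lemma stating that $\cong$ is preserved by both $\toexp$ and $\toli$, I transport the whole infinite tail along $c_i \cong c_i'$, so $c_i'$ again admits an infinite $\to$-reduction with infinitely many exponential steps. Taking $a_{i+1}$ to be the target of the first exponential step in $a_i \toexp^\ast u_i$, we have $a_{i+1} \toexp^\ast u_i \toli^\ast c_i'$, whence $a_{i+1}$ inherits the invariant. The recursion thus produces a genuine chain $a_0 \toexp a_1 \toexp a_2 \toexp \cdots$, contradicting the strong normalization of exponential reduction.

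The main obstacle is to guarantee that the leading $\toexp^\ast$ delivered by commutation is \emph{nonempty}, i.e.\ that the exponential step fired after the linear block cannot be \emph{absorbed} into linear steps when reorganized. This is exactly what upgrades the bare commutation of Theorem~\ref{theo:commutation} into the quasi-commutation needed to lift termination: plain commutation of two terminating relations does \emph{not} by itself imply termination of their union, so the nonemptiness is essential. I would establish this non-absorption by inspecting the proof of Theorem~\ref{theo:commutation}: the reorganization permutes steps rather than deleting them, and since the mediating equivalence $\cong$ consists solely of permutations---which neither create nor erase the structural-rule content eliminated by an exponential step---the count of exponential steps is invariant under the reorganization. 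Hence the exponential step present in $a_i \toli^\ast b_i \toexp c_i$ survives as a genuine exponential step in $a_i \toexp^\ast u_i \toli^\ast c_i'$. The remaining work---transporting infinite reductions along $\cong$ and verifying that step types are preserved---is routine given the bisimulation lemma.
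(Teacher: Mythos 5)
Your overall strategy is the one the paper itself uses for Theorem~\ref{th:snstruct}: deduce termination of ${\to} = {\toexp}\cup{\toli}$ from termination of each component, using Theorem~\ref{theo:commutation} to turn a hypothetical infinite mixed sequence into an infinite purely exponential one (the extraction of infinitely many $\toexp$-steps, and the transport of infinite tails along $\cong$ via the bisimulation lemma, are fine). You are also right --- and here you are more careful than the paper's own one-line proof --- that bare commutation of two terminating relations does not imply termination of their union, so everything hinges on the leading $\toexp^\ast$ produced by the reorganization being \emph{nonempty}.

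The genuine gap is that the claim you use to secure nonemptiness is false. You argue that the reorganization of Theorem~\ref{theo:commutation} permutes steps rather than deleting them, because $\cong$ ``consists solely of permutations, which neither create nor erase the structural-rule content eliminated by an exponential step.'' But exponential steps do not only eliminate contractions and weakenings: the ground exponential rule fires \emph{any} non-identity label, including a bare permutation, and linear substitution can create such permutation labels ex nihilo. The paper's own Remark on the interaction of the two reductions exhibits the counterexample: $s = \la{z^{\seq{o,\seq{o}\multimap o}}}(\la{x}\la{y}y\seq{x})\seq{z,z}$ is an exponential \emph{normal form}, yet its leftmost linear reduct is $t = \la{z^{\sigma}}z\seq{z}$, carrying the non-identity permutation $\sigma$ forced by the substitution, hence an exponential redex. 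So $s \toli^\ast t \toexp t'$ with $s$ exponentially normal: when Theorem~\ref{theo:commutation} reorganizes this sequence, the exponential prefix is necessarily empty and the exponential step is absorbed by the mediating isomorphism $\cong$ (which is precisely why the theorem is stated up to $\cong$ rather than as a strict commutation). This is exactly the situation in which your recursion cannot define $a_{i+1}$: the invariant ``infinitely many exponential steps in the tail'' does not guarantee that any of them survives the reorganization, so no infinite $\toexp$-chain is produced. A repair would have to distinguish permutation-firing exponential steps (invisible modulo $\cong$, and unable to occur infinitely often consecutively since $\toexp$ is strongly normalizing) from contraction/weakening-firing ones, and prove that the latter are preserved by the reorganization; that statement requires its own induction and is not delivered by Theorem~\ref{theo:commutation} as stated.
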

\begin{proof}
By Theorem~\ref{theo:commutation}, given an infinite reduction sequence for structural reduction, we could build an infinite one for either exponential or linear reductions, which are instead known to be strongly normalizing.
\end{proof}

Let us now consider confluence. To prove that, we employ a standard method: first we prove local confluence, from which we infer general confluence as a corollary of Newman's Lemma, exploiting the fact that structural reduction terminates.
\begin{theorem}\label{th:confluence}
	Structural reduction is confluent.
\end{theorem}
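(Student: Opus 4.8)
The plan is to follow exactly the route announced just before the statement: reduce to local confluence via Newman's Lemma and then establish local confluence by a critical-pair analysis. Since the excerpt records that Newman's Lemma survives in the labelled setting (local confluence together with strong normalization imply confluence), and since Theorem~\ref{th:snstruct} already gives strong normalization of the whole structural reduction, it suffices to prove that $\to = {\toexp} \cup {\toli}$ is locally confluent, \emph{i.e.}, that every peak $t_1\,\prescript{}{l_1}{\from}\,s \to_{l_2} t_2$ built from two single steps can be closed by $t_1 \to^\ast_{h_1} u\,\prescript{\ast}{h_2}{\from}\,t_2$ satisfying the label-coherence condition $h_1 \cdot l_1 = h_2 \cdot l_2$.

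First I would classify each peak according to the relative positions of the two contracted redexes and their kinds (exponential or linear). When the redexes occupy disjoint subterms, the steps commute on the nose: firing the other redex in each branch yields a common reduct, and the equation on labels follows from the Compositionality proposition for the covariant and contravariant actions, which guarantees that the accumulated context- and type-morphisms compose coherently irrespective of order. The same-kind overlapping cases are then treated separately. For two exponential redexes, the action of a type morphism on a derivation is uniquely determined (here Uniqueness of Derivations is essential), so nested exponential steps are essentially functional and close deterministically, again using compositionality of the actions. For two linear redexes, the analysis mirrors the confluence argument for $\beta$-reduction in a resource calculus: the only genuine critical pairs arise from nested linear substitutions, and they are reconciled through the properties of the linear substitution operation of Figure~\ref{fig:linsub}, with the permutation labels $\sigma$ matched on both branches.

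The delicate case, and the one I expect to be the main obstacle, is a peak mixing an exponential and a linear step on overlapping redexes. As the Remarks preceding the commutation theorem make explicit, $\toexp$ and $\toli$ do \emph{not} commute strictly: a linear substitution can force a permutation of variable occurrences, so a mixed peak generally closes only \emph{up to} the permutation isomorphism $\cong$ of Figure~\ref{fig:iso}. To convert a closure-up-to-$\cong$ into genuine local confluence I would use the transport lemma stated just before Theorem~\ref{theo:commutation} --- if $s \cong s'$ and $s$ reduces, then $s'$ reduces to a $\cong$-equivalent term --- to push the residual reductions of one branch across the isomorphism, so that both branches reach literally the same term. One must simultaneously verify the label-coherence equation: since $\cong_\theta$ is always carried by an \emph{invertible} permutation $\theta$ that is recorded inside the context-morphism component of the labels, the permutation can be absorbed into $h_1$ and $h_2$, forcing the two composite labels to agree.

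Having established local confluence with matching labels in every case, I would invoke Newman's Lemma in its labelled form together with Theorem~\ref{th:snstruct} to conclude full confluence, as in the statement. The only real subtlety throughout is keeping the label bookkeeping consistent in the mixed case; the compositionality of the two actions and the invertibility of the permutations underlying $\cong$ are precisely what make the labels line up.
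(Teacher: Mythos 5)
Your overall scaffolding --- labelled Newman's Lemma plus strong normalization (Theorem~\ref{th:snstruct}), reduced to a case analysis of one-step peaks --- is exactly the paper's route, and your treatment of the disjoint and linear--linear cases matches the paper's (the latter rests ultimately on associativity of linear substitution, Lemma~\ref{lem:subassoc}). But there are two genuine gaps. First, your exponential--exponential case is not carried by the ingredients you cite. When the ground exponential step fires at an abstraction, $\la{\overline{x}^{\overline{f}}}p \toexp \la{\overline{x}}\ract{p}{\idm,\overline{f}}$, while the other step contracts a redex \emph{inside} $p$ (giving $\la{\overline{x}^{\overline{gf}}}p'$), closing the peak requires knowing how a reduction step of $p$ survives the action $\ract{(-)}{\idm,\overline{f}}$ applied to the whole body. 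Uniqueness of Derivations concerns typing derivations, not reduction, and Compositionality ($[g]([f]s)=[gf]s$ and its contravariant analogue) concerns composing two actions, not commuting an action with a reduction step. What is actually needed is the dedicated Lemma~\ref{lem:actred} (``Actions under exp'', with linear analogue Lemma~\ref{lem:actredlin}), proved by its own long induction; its conclusion is not a ``deterministic'' closure but a joinability diagram requiring further exponential steps on \emph{both} branches, together with a nontrivial coherence equation between the accumulated morphisms. Omitting this lemma omits the technical core of the paper's proof.

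Second, your repair of the mixed case is a non-sequitur as stated. You propose to close the peak up to $\cong$ and then use the transport lemma (if $s \cong s'$ and $s$ reduces, then $s'$ reduces to a $\cong$-equivalent term) to make both branches reach ``literally the same term'', absorbing the permutation into the labels. Transport lets you push \emph{further} reduction steps across an isomorphism; it cannot eliminate the residual isomorphism at the end of a diagram. Since $\cong$ relates judgments whose contexts differ by a permutation and whose abstraction labels differ by composition with permutations, two $\cong$-related reducts --- in particular two isomorphic normal forms --- need not be joinable by reduction. This is visible in the paper's own development: the peak of a root linear step against a step inside the function or the argument is closed by Lemma~\ref{lem:morpsubs} only up to $\cong$, and accordingly the commutation result (Theorem~\ref{theo:commutation}) is itself stated only up to $\cong$; nowhere is that isomorphism discharged by transport. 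So your argument establishes at best local confluence modulo $\cong$ --- which is what the paper's lemmas genuinely deliver --- and the further claim that the two branches meet on the nose would need an additional argument (e.g., that the isomorphisms arising in these diagrams are identities, or a Newman-modulo argument), which neither you nor the cited lemmas provide.
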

 
 %\begin{proposition} Let $ ( \gamma \vdash s : \ty) \to_{\theta; f} ( \gamma' \vdash s' : \ty') $ and $ \theta' : \delta \to \gamma , h : \ty \to b$. The following statements hold.(i) There exist morphisms $ \zeta : \delta \to \delta', g : c \to \ty $ and a term $  \delta' \vdash t : c$ s.t. $ \ract{s}{\theta'} \to_{\theta\zeta; g } t  $ and $  \ract{s'}{\theta\theta'} \to_{\zeta; gf} t.  $ (ii) There exist morphism $ \zeta_1 : \gamma \to \delta, \zeta_2 : \gamma' \to \delta, g : c \to b $ and a term $ \delta \vdash t : c $ s.t. $ [h]s \to_{\zeta_1;g} t  $ and $ [hf]s' \to_{\zeta_2;g} t $ with $  \zeta_2\theta = \zeta_1   $.\end{proposition}

%\section{Confluence and Termination}
\section{On Approximations, Collapse, and all That}
\subsection{Linearizing Type Systems}

The structural resource $\lambda$-calculus can be seen as a calculus of approximations of ordinary $\lambda $-terms, generalizing what happens in the Taylor expansion \cite{EhrhardRegnier08, barbaro:tay}. We define an \emph{approximation relation} $ {\lhd} \subseteq { \rTerms^c \times \Lambda } $ by induction in Figure \ref{fig:approx}. In general, approximations are \emph{not} linear. However, we can \emph{compute} a linear approximation starting from a non-linear one, just by normalizing through exponential reduction. We shall use this fact to get linearizations of every strongly normalizing $\lambda $-terms.

In this section we will assume that a typed term $\Gamma \vdash M : A $ is in $ \eta$-long form. We remark that any term, typed with either simple or intersection types, has a $ \eta$-long form.

\begin{figure*}
\begin{gather*}
\begin{prooftree}
\infer0{  x \lhd x               }
\end{prooftree} \qquad \begin{prooftree}
\hypo{ s \lhd M  }
\infer1{ \la{x^f} s \lhd \la{x} M  }
\end{prooftree} \qquad
\begin{prooftree}
\hypo{  s \lhd M    } \hypo{ \vec{t} \lhd N }
\infer2{   s \vec{t} \lhd MN                      }
\end{prooftree}
\qquad
\begin{prooftree}
\hypo{ t_1 \lhd M \cdots t_k \lhd M  }
\infer1{  \seqdots{t}{1}{k} \lhd M       }
\end{prooftree}
\end{gather*}
  \caption{Approximation Relation for Ordinary $\lambda $-terms.}
  \label{fig:approx}
\hrulefill
\end{figure*}

\begin{figure}
\[  \mathsf{Idem} \ni A ::= o \mid \tilde{A} \Rightarrow B \qquad \tilde{A} := A_1 \cap \cdots \cap A_k  \ (k \neq 0)                  \]
\begin{gather*}
\begin{prooftree}
\hypo{ A \in \tilde{A}_i   }
\infer1{ x_1 : \tilde{A}_1 ,\dots, x_i : \tilde{A}_i, \dots, x_n : \tilde{A}_n \vdash x_i : A               }
\end{prooftree} \qquad 
\begin{prooftree}
\hypo{  \Gamma, x : \tilde{A} \vdash M : B }
\infer1{       \Gamma \vdash \la{x} M  : \tilde{A} \Rightarrow B         }
\end{prooftree}
\\[1em]
\begin{prooftree}
\hypo{\Gamma \vdash M : \tilde{A} \Rightarrow B} \hypo{  \Gamma \vdash_\cap N : \tilde{A} }
\infer2{   \Gamma \vdash MN : B } 
\end{prooftree}
\qquad
\begin{prooftree}
\hypo{    (\Gamma \vdash M : A_i)_{i=1}^k       }
\infer1{      \Gamma \vdash_\cap M :  A_1 \cap \cdots \cap A_k   }
\end{prooftree}
\end{gather*}
\caption{Idempotent Intersection Type Assignment.}
\label{fig:idem}
\hrulefill
\end{figure}

\paragraph*{Qualitative Fragment and Uniformity}

We can define a fragment of the structural resource calculus into which we shall \emph{embed} every strongly normalizing $\lambda $-term. First, we formally define a \emph{coherence relation } on resource terms $ s \coh s' $, that is a direct generalization of coherence as defined in \cite{EhrhardRegnier08}:
\begin{gather*}  \begin{prooftree}
\infer0{ x \coh x  }
\end{prooftree}      \qquad \begin{prooftree}
\hypo{  s \coh s'  }
\infer1{   \la{x^f } s \coh \la{x^g} s' } 
\end{prooftree}     
\qquad 
\begin{prooftree}
\hypo{ s \coh s'  } \hypo{\vec{t} \coh \vec{t'}}
\infer2{ s\vec{t }  \coh s' \vec{t'}}
\end{prooftree} 
\\[0.5em]
\begin{prooftree}
\hypo{    t_i \coh t'_j \quad \text{ for all } i \in [n],  j \in[m]            }
\infer1{       \seqdots{t}{1}{n} \coh \seqdots{t'}{1}{m}                        }
\end{prooftree}                  
\end{gather*} 
Given an ordinary $ \lambda$-term $ M$, we have that $ s,s'\lhd M$ iff $  s \coh s'$. If $ s \coh s$, we say that $ s $ is \emph{uniform}. We remark that if we have an exponential step $ s \toexp s' $, then $s \coh s' $, while substitution notoriously breaks coherence. We extend the notion of coherence and uniformity to resource type, in the natural way:
\[                  \begin{prooftree}
\infer0{  o \coh o  }
\end{prooftree}     \qquad \begin{prooftree}  \hypo{ \tyl \coh \vec{b} } \hypo{  \ty \coh b }\infer2{  \tyl \multimap \ty \coh \vec{b} \multimap b } \end{prooftree} \qquad \begin{prooftree}
\hypo{    \ty_i \coh b_j \quad \text{ for all } i \in [n],  j \in[m]            }
\infer1{       \seqdots{\ty}{1}{n} \coh \seqdots{b}{1}{m}                        }
\end{prooftree}    \]
 We define the \emph{uniform fragment} of the cartesian resource calculus by induction as follows:
\[   \Lambda^c_{\mathsf{qual}} \ni s ::= x \mid \la{x^\alpha} s \mid s \seq{t_1,\dots, t_k} \qquad (\text{ where } t_i \coh t_j, \forall i, j \in [k], k \neq 0 )    \]
Where $ \alpha $ is a ground morphism. A uniform term is, in particular, uniform in the sense of the coherence relation.  We say that a uniform term is \emph{strongly uniform} when every bag in his body contains copies of the same resource term. We can give a formal definition of strongly uniform terms with the following grammar:
\[   \Lambda^c_{\mathsf{qual}} \ni s ::= x \mid \la{x^\alpha} s \mid s \seq{t_1,\dots, t_k} \qquad (\text{ where } t_i = t_j, \forall i, j \in [k], k \neq 0 )    \]
We extend these notions to types in the natrual way. A typed term $\gamma \vdash s : \ty $ is uniform whenver $ s$ is uniform. Given a strongly uniform term $ s $ with typing $ \gamma \vdash s : \ty$, we call it \emph{qualitative} if $ \gamma$ is strongly uniform and the bags occurring in $s$ and $ \ty$ are singletons. In what follows, we give embeddings of strongly normalizing $ \lambda$-terms into cartesian resource $\lambda$-terms which targets the uniform fragment.

\paragraph*{Simple types}\label{sec:simply} We can easily embed the simply-typed $\lambda$-calculus into the structural resource calculus. We set $ \intt{\ast} = o$ and $\intt{A \Rightarrow B} = \seq{\intt{A}} \multimap \intt{B} $. Given $ n \in \mathbb{N} $ and $A$ a simple type, we define $  \mathsf{cart}_{A}^n : \seq{\intt{A}} \to \seq{\intt{A}}^n  $ as follows: 
\[       
\mathsf{cart}_{A}^0 = \mathsf{T}_{\intt{A}} 
\qquad            
\mathsf{cart}_{A}^{n+1}  = c_{\intt{A}}^{n}                               
\]
Given a simply-typed term $ x_1 : A_1, \dots, x_n : A_n \vdash M  : A$, we associate to it a resource term $ x_1 :  \seq{\intt{ A_1}^{n_1^M}}, \dots, x_n : \seq{\intt{A_n}^{n_n^M}} \vdash \qt{M}^{\Gamma}_A : \intt{A} $ with $ n_i ^M\in \mathbb{N} $, called its \emph{coarse approximation}, by induction, as shown in Figure \ref{fig:coarse}. We often keep the labels $\Gamma, A $ implicit, wiriting just $  \qt{M} $, when the typing can be reconstruced by the context.

\begin{figure*}[t]
\scalebox{0.9}{\parbox{1.05\linewidth}{\begin{gather*}
 \qt{ \begin{prooftree} \infer0{x_1 : A_1, \dots, x_i : A_i, \dots, x_n : A_n \vdash x_i : A_i  }    \end{prooftree} } \quad   = \quad \begin{prooftree} \infer0{x_1 : \seq{}, \dots, x_i : \seq{\intt{A_i}}, \dots, x_n : \seq{} \vdash x_i : \intt{A_i}  }    \end{prooftree} 
       \\[1em]
   \qt{\begin{prooftree}  \hypo{\Gamma, x : A \vdash M : B}\infer1{\Gamma \vdash \la{x^A} M : A \Rightarrow B }    \end{prooftree}} \quad = \quad 
   \begin{prooftree}  \hypo{\intt{\Gamma}^{\overline{n}}, x : \seq{\intt{A}}^{m} \vdash \qt{M}_{B}^{\Gamma,A} : \intt{B}} \hypo{  \mathsf{cart}^m_{A} : \seq{\intt A} \to \seq{\intt{A}}^m    }
   \infer2{\intt{\Gamma}^{\overline{n}} \vdash \la{x^{\mathsf{cart}^m_{A}}} \qt{M}_B^{\Gamma,A} : \seq{\intt{A}} \multimap \intt{B} }   \end{prooftree}
    \\[1em]
     \qt{\begin{prooftree} \hypo{\Gamma \vdash M : A \Rightarrow B} \hypo{\Gamma \vdash N : A}\infer2{\Gamma \vdash MN : B}   \end{prooftree}} 
     \quad = \quad  
      \begin{prooftree} \hypo{\intt{\Gamma}^{\overline{n}} \vdash \qt{M}_{A \Rightarrow B}^\Gamma : \seq{\intt{A}} \multimap \intt{B}} \hypo{\intt{\Gamma}^{\overline{m}} \vdash \qt{N}_A^\Gamma : \intt{A}}\infer2{\intt{\Gamma}^{\overline n} \otimes \intt{\Gamma}^{\overline m} \vdash \qt{M}\seq{\qt{N}} : \intt{B}}   \end{prooftree} 
\end{gather*}}}
\caption{Translation of simply typed terms into cartesian resource terms.}
\hrulefill
\label{fig:coarse} 
\end{figure*}

\begin{figure*}[t]
\scalebox{0.9}{\parbox{1.05\linewidth}{\begin{gather*}
 \qt{ \begin{prooftree} \infer0{x_1 : A_1, \dots, x_i : A_i, \dots, x_n : A_n \vdash x_i : A_i  }    \end{prooftree} } \quad   = \quad \begin{prooftree} \infer0{x_1 : \seq{}, \dots, x_i : \seq{\intt{A_i}}, \dots, x_n : \seq{} \vdash x_i : \intt{A_i}  }    \end{prooftree} 
       \\[1em]
   \qt{\begin{prooftree}  \hypo{\Gamma, x : (\bigcap A_i) \vdash M : B}\infer1{\Gamma \vdash \la{x} M : (\bigcap A_i) \Rightarrow B }    \end{prooftree}} \quad = \quad 
   \begin{prooftree}  \hypo{\intt{\Gamma}^{\overline{n}}, x : \seq{\intt{A_{\sigma(i)}}^{n_{\sigma(i)}}} \vdash \qt{M}_{B}^{\Gamma,\bigcap A_i} : \intt{B}} \hypo{  \mathsf{cart}^{n_{i}}_{A_i} : \seq{\intt A_i} \to \seq{\intt{A_i}}^{n_i}    }
   \infer2{\intt{\Gamma}^{\overline{n}} \vdash \la{x^{ \bigoplus \mathsf{cart}^{n_{\sigma(i)}}_{A_{\sigma(i)}} } } \qt{M}_{B}^{\Gamma, \bigcap A_i} : (\bigcap \intt{A_{\sigma(i)}}) \multimap \intt{B} }   \end{prooftree}
    \\[1em]
     \qt{\begin{prooftree} \hypo{\Gamma \vdash M : \tilde{A} \Rightarrow B} \hypo{\Gamma \vdash N : \tilde{A}}\infer2{\Gamma \vdash MN : B}   \end{prooftree}} 
     \quad = \quad  
      \begin{prooftree} \hypo{\intt{\Gamma}^{\overline{n}} \vdash \qt{M}_{\tilde{A} \Rightarrow B}^\Gamma : \intt{\tilde{A}} \multimap \intt{B}} \hypo{\intt{\Gamma}^{\overline{m}} \vdash \qt{N}_{\tilde{A}}^\Gamma : \intt{\tilde{A}}}\infer2{\intt{\Gamma}^{\overline n} \otimes \intt{\Gamma}^{\overline m} \vdash \qt{M}^\Gamma_{\tilde{A} \Rightarrow B}\qt{N}_{\tilde{A}}^\Gamma : \intt{B}}   \end{prooftree} 
          \\[1em]
\qt{ \begin{prooftree}  \hypo{ \Gamma \vdash M : A_1 \cdots \Gamma \vdash M : A_k}\infer1{ \Gamma \vdash M : (A_1 \cap \dots \cap A_k)}  \end{prooftree}} = \begin{prooftree} \hypo{\intt{\Gamma}^{\overline{n}_1} \vdash \qt{M}_{A_1}^\Gamma : \intt{A_1}   \dots\intt{\Gamma}^{\overline{n}_k} \vdash \qt{M}_{A_k}^\Gamma : \intt{A_k} }
\infer1{      \intt{\Gamma}^{\overline{n}_1} \otimes \cdots \otimes \intt{\Gamma}^{\overline{n}_k} \vdash \seq{\qt{M}_{A_1}^\Gamma,\dots, \qt{M}_{A_k}^\Gamma} : \intt{A_1} \cap \cdots \cap \intt{A_k}           }
  \end{prooftree}
\end{gather*}}}
\caption{Translation of idempotent intersection type derivations into cartesian resource terms.}
\hrulefill
\label{fig:coarseint} 
\end{figure*}

We prove that the embedding preserves $\beta $-reduction, \emph{factorizing it} into exponential and linear steps:

\begin{theorem}
Let $ M \to_\beta N $ then there exists a  cartesian resource term $ t$  s.t. $   \qt{M} \toexp t \toli \qt{N} $.
\end{theorem}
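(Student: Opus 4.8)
The plan is to trace a single $\beta$-step through the translation $\qt{-}$ of Figure~\ref{fig:coarse}, showing that the structural operation hidden in the contraction label of the relevant abstraction is exactly what one exponential step unfolds, and that the ensuing linear step realizes the meta-level substitution. Since $\qt{-}$ is defined by induction on the typing derivation and both $\toexp$ and $\toli$ are closed under term contexts, I would first reduce to the case in which the contracted redex $(\lambda x.P)\,Q$ sits at the root, so that $M = (\lambda x.P)Q$ and $N = P[Q/x]$; the remaining cases (abstraction, and application on either side) are discharged by the matching contextual-closure rules of Figure~\ref{fig:colla}, which carry the surrounding translated context along unchanged. Writing $m$ for the number of occurrences of $x$ in $P$ (equivalently, the multiplicity with which $x$ is declared in $\qt{P}$, namely $\seq{\intt A}^m$), the translation gives
\[
\qt{M} = \bigl(\la{x^{\mathsf{cart}_A^m}} \qt{P}\bigr)\,\seq{\qt{Q}},
\]
so that the label $\mathsf{cart}_A^m : \seq{\intt A} \to \seq{\intt A}^m$ is a genuine structural rule whenever $m \neq 1$, making the abstraction an exponential redex.

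For the exponential phase I would fire the ground exponential rule on this abstraction: it removes the contraction from the label, turning the interface from $\seq{\intt A}\multimap\intt B$ (one shared input) into $\seq{\intt A}^m\multimap\intt B$ ($m$ independent inputs), and replaces the body by its contravariant action $\ract{\qt P}{\mathsf{cart}_A^m}$. By the contextual-closure rule for a reducing function applied to a bag, the argument $\seq{\qt Q}$ is simultaneously transported by the covariant action $[\mathsf{cart}_A^m]\seq{\qt Q}$, which, since $\mathsf{cart}_A^m$ is the $m$-ary diagonal (the terminal map when $m=0$), duplicates $\qt Q$ into $m$ identical copies $\seq{\qt Q}^m$. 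This yields the intermediate term
\[
t = \bigl(\la{x^{\idm}} \ract{\qt P}{\mathsf{cart}_A^m}\bigr)\,\seq{\qt Q}^m,
\]
whose abstraction now carries a ground (identity) label, so that $t$ is a \emph{linear} redex of exactly the matching arity $m$. In the degenerate case $m=1$ the label is already the identity, the exponential phase is vacuous, and one takes $t = \qt{M}$.

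The linear step then fires, producing $\subst{\bigl(\ract{\qt P}{\mathsf{cart}_A^m}\bigr)}{x}{\seq{\qt Q}^m}$, and the crux of the proof is the identity
\[
\subst{\bigl(\ract{\qt P}{\mathsf{cart}_A^m}\bigr)}{x}{\seq{\qt Q}^m} = \qt{P[Q/x]}.
\]
I would establish this as a \emph{substitution lemma} for the coarse approximation, by induction on $P$: at each of the $m$ occurrences of $x$ one copy of $\qt Q$ is planted, and the multiplicity of every other free variable $y$ in the result --- namely (occurrences of $y$ in $P$) plus $m$ times (occurrences of $y$ in $Q$) --- is precisely the multiplicity assigned to $y$ by $\qt{P[Q/x]}$, since $\qt{-}$ merely accumulates occurrence counts through the tensor product of contexts. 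The contravariant action $\ract{\qt P}{\mathsf{cart}_A^m}$ is exactly the bookkeeping that assigns the $m$ argument-copies to the $m$ occurrences linearly, so the two sides agree leaf-by-leaf.

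I expect the main obstacle to be this substitution lemma, and specifically the reconciliation of the \emph{permutation} $\sigma_{s,\vec t}$ that linear substitution records (Figure~\ref{fig:linsub}) with the canonical ordering of the context produced by $\qt{P[Q/x]}$: the equality above holds on the nose only up to the reindexing introduced when the occurrences of $x$ and the free occurrences of $Q$'s variables are interleaved. I would therefore either fix an occurrence-ordering convention in $\qt{-}$ that forces $\sigma_{s,\vec t}$ to be the identity, or state and use the lemma up to the isomorphism $\cong$ of Figure~\ref{fig:iso}, which transports cleanly along both reductions by the compatibility of $\cong$ with $\toexp$ and $\toli$. A secondary, but routine, point is the capture-free interaction of $\ract{-}{\mathsf{cart}_A^m}$ with binders occurring inside $P$, handled by the inductive hypothesis together with the compositionality of the actions.
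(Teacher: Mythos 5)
Your proposal is correct and follows essentially the paper's own proof: induction on the $\beta$-step with the congruence cases discharged by contextual closure, and, for the root redex, one exponential step whose covariant action $[\mathsf{cart}_A^m]$ duplicates the argument bag into $\seq{\qt{Q}}^m$ (note that in fact $\ract{\qt{P}}{\mathsf{cart}_A^m}=\qt{P}$, since the nested fragment of a ground morphism is the identity, so the body is literally unchanged), followed by one linear step and the substitution lemma $\subst{\qt{P}}{x}{\seq{\qt{Q}}^m}=\qt{\subst{P}{x}{Q}}$ (the paper's Lemma~\ref{lemma:subemsi}), proved by induction on $P$. The permutation mismatch you flag is settled by the paper along the lines of your first option, and it comes for free rather than by a convention: since every occurrence of a variable is typed with the same $\intt{A}$, one has $\sigma\,\mathsf{cart}^{n}_{\intt{A}}=\mathsf{cart}^{n}_{\intt{A}}$ for every permutation $\sigma$, so the permutations recorded by linear substitution are absorbed into the cart labels and the lemma holds as a strict equality (your up-to-$\cong$ alternative is exactly what the paper falls back on for the idempotent intersection-type embedding, Lemma~\ref{lemma:subemint}, where the types of distinct copies genuinely differ).
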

From the former theorem and strong normalization of structural reduction (Theorem \ref{th:snstruct}), we obtain strong normalization of simply-typed terms as a corollary.
We observe that if $M$ is simply-typed, then $ \qt{M} $ is a strongly uniform resource term. From this, we obtain the following characterization: 
\begin{proposition}
$  \gamma \vdash s : \ty$ is qualitative iff there exists a simply typed term $  \Gamma \vdash M : A $ s.t. $  \qt{\Gamma \vdash M : A} = \gamma \vdash s : \ty $.
\end{proposition}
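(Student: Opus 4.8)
The plan is to prove the two inclusions separately: that the image of $\qt{-}$ consists of qualitative terms, and conversely that every qualitative term is in the image. For the forward direction I would take a simply-typed derivation $\Gamma \vdash M : A$ and show $\qt{\Gamma \vdash M : A}$ is qualitative by induction on $M$, checking the conditions that define qualitativity. Three of them are immediate from the translation: every argument bag produced by the application rule has the form $\seq{\qt{N}}$, hence is a singleton (so the strong-uniformity condition on $\qt{M}$ holds vacuously), and $\intt{A \Rightarrow B} = \seq{\intt{A}} \multimap \intt{B}$ shows every bag in $\intt{A}$ is a singleton. For the strong uniformity of the produced context, each variable $x_i$ receives a type $\seq{\intt{A_i}}^{n_i}$, and the context concatenation in the application rule only appends further copies of the very same $\intt{A_i}$, so strong uniformity is preserved up the derivation.

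For the converse I would first fix an inverse type translation on resource types whose bags are all singletons, sending $o$ to $\ast$ and $\seq{\ty} \multimap b$ to $\intt{\ty}^{-1} \Rightarrow \intt{b}^{-1}$ (writing $\intt{\cdot}^{-1}$ for this map); it is a two-sided inverse of $\intt{-}$ on such types. The key preliminary is a \emph{singleton-bag propagation} lemma: if $\gamma \vdash s : \ty$ is qualitative, then every type occurring anywhere in its (unique) derivation has only singleton bags, and every intermediate context is strongly uniform. This upgrades the definition, which constrains only $s$, $\ty$ and $\gamma$, to the entire derivation, and it is exactly here that the $\eta$-long restriction is essential: a head variable $x$ is fully applied as $x(\vec{q}_1 \cdots \vec{q}_n)$, so its type $(\tyl_1 \cdots \tyl_n) \multimap o$ has each $\tyl_i$ equal to the type of the singleton bag $\vec{q}_i$, hence a singleton, with the element types singleton-bag by the induction hypothesis; and for an abstraction $\la{x^\alpha} s'$ the morphism $\alpha$ is ground with singleton domain $\seq{\ty_0}$, which forces its codomain, the type of $x$ in the body, to be $\seq{\ty_0}^m$ and thus strongly uniform.

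Granting the lemma, I would build the simply-typed term by structural recursion on $s$: a variable $x_i$ maps to $x_i$; an abstraction $\la{x^\alpha} s'$ maps to $\la{x} M'$ with $M'$ the image of $s'$; and an application (a singleton, $\eta$-long spine by the lemma) reconstructs from its immediate subterms by re-currying. I set $A = \intt{\ty}^{-1}$ and, for $\gamma(x_i) = \seq{\ty_i}^{n_i}$, let $\Gamma(x_i) = \intt{\ty_i}^{-1}$ (choosing an arbitrary simple type when $n_i = 0$, since then $x_i$ does not occur and the choice is irrelevant). I would then prove $\qt{\Gamma \vdash M : A} = \gamma \vdash s : \ty$ by induction, which by the Uniqueness of Derivations proposition amounts to matching term, context and type. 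In the application case the split $\gamma = \gamma_0 \otimes \gamma_1$ supplied by the derivation of $s$ produces strongly uniform $\gamma_0,\gamma_1$ (any sublist of a list whose entries are all equal again has all entries equal), so the induction hypothesis applies and the two context concatenations agree; in the abstraction case the ground morphism $\alpha : \seq{\ty_0} \to \seq{\ty_0}^m$ is unique because the underlying function $[m] \to [1]$ is unique, so $\alpha = \mathsf{cart}^m_{\intt{\ty_0}^{-1}}$, precisely the morphism inserted by the translation, with $m$ the number of leaf occurrences of the bound variable.

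The main obstacle is the singleton-bag propagation lemma together with the verification that the multiplicities $n_i$ recomputed by $\qt{-}$ coincide with those recorded in $\gamma$; both ultimately rest on the facts that, in $\eta$-long form with singleton argument bags, the occurrence counting performed by context concatenation in $\qt{-}$ is exactly the counting carried by the qualitative term's context, and that a ground morphism out of a singleton list is uniquely determined. A subsidiary technical point worth isolating is that coherent resource types all of whose bags are singletons must be \emph{equal} (a short induction: $o$ is coherent only with $o$, and two coherent singleton arrow types have pairwise coherent, hence by the induction hypothesis identical, components); this ensures that the entries $\ty_i$ of a strongly uniform context genuinely collapse to a single simple type, so that $\intt{\cdot}^{-1}$ is unambiguous and the reconstruction is well defined.
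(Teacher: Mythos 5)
Your overall strategy coincides with the paper's: the forward direction is the paper's observation that $\qt{M}$ is strongly uniform with singleton bags, and the converse is an induction on $s$ whose key ingredient is inverting the type translation (your $\intt{\cdot}^{-1}$ is exactly the paper's preparatory lemma that a resource type all of whose bags are singletons equals $\intt{A}$ for some simple type $A$), supplemented by the correct observations that a ground morphism out of a singleton list is unique (hence equal to $\mathsf{cart}^m$) and that relevance of the contexts makes the multiplicities match.

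However, your singleton-bag propagation lemma --- the step carrying the whole converse --- is false under the reading of ``qualitative'' that you yourself adopt, namely that the definition ``constrains only $s$, $\ty$ and $\gamma$'', with ``bags occurring in $s$'' meaning the argument bags. Consider the ($\eta$-long) judgment
\[
z : \seq{\seq{\seq{o,o}\multimap o}\multimap o},\ x : \seq{\seq{o}\multimap\seq{o}\multimap o}\ \vdash\ z\,\seq{\la{y^{\idm_{\seq{o,o}}}} x\seq{y}\seq{y}} : o .
\]
All argument bags are singletons, the term is strongly uniform (the label $\idm_{\seq{o,o}}$ is a ground morphism), the type is $o$, and the context is strongly uniform (all entries are singleton lists), so this judgment is qualitative in your sense; yet the subterm $\la{y^{\idm_{\seq{o,o}}}} x\seq{y}\seq{y}$ has type $\seq{o,o}\multimap o$, which refutes the propagation lemma, and the judgment is not of the form $\qt{\Gamma\vdash M : A}$, because the translation only ever produces abstraction labels with singleton domains and context types whose nested lists are all singletons. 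The precise point where your induction breaks is the abstraction case: you assert that the label ``is ground with singleton domain'', but for an abstraction buried inside an argument bag nothing in the root judgment forces this (and likewise nothing controls the lists nested inside the types recorded in $\gamma$, which your head-variable case silently needs --- note the counterexample is $\eta$-long, so the $\eta$-long restriction does not rescue the argument). Consequently no proof of the propagation lemma can exist at this level of generality; the conclusion to draw is that ``qualitative'' must be read (as the paper's own two-line sketch implicitly does) so that the singleton condition also covers the domains of all abstraction labels occurring in $s$ and the lists nested in the types of $\gamma$. With that strengthened definition your induction, and the rest of your reconstruction, go through essentially as the paper intends.
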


\paragraph*{Idempotent Intersection Types}\label{sec:idem} We shall now show that every strongly normalizing  $\lambda$-term can be represented as a cartesian resource term, by exploiting a classic result about idempotent intersection types. We  consider the system of idempotent intersection types in Figure \ref{fig:idem}, which corresponds to the one from \cite{bakel:strict}. Intersection types can be seen as a finite set of types.  The system enjoys both subject reduction and expansion, from which one can prove that any strongly normalizing $ \lambda$-term can be typed in it (see, e.g., \cite{kri:lambda}). We fix an enumeration of types as the bijection $\emph{enum} : \mathsf{idem}\to\mathbb{N}$. Then we set 
\begin{gather*}\intt{\ast} = o \qquad \intt{ \tilde{A} \Rightarrow B} =  \intt{\tilde{A}} \multimap \intt{B}  \\[0.3em] \intt{ A_1 \cap \cdots \cap A_k} = \seq{\intt{A_{\sigma (1)}},\dots, \intt{A_{\sigma(k)}}} 
\end{gather*} 
where $\sigma $ is the unique permutation s.t. $ \sigma(i) \leq \sigma(j) $ whenever $ \mathsf{enum}(A_i)\leq \mathsf{enum}(A_j)$. Given a typing $ x_1 : \tilde{A}_1, \dots, x_n : \tilde{A}_n \vdash M  : A$ we associate to it a resource term $ x_1 :  \intt{ \tilde{A}_1}^{\vec{n}_1^M}, \dots, x_n : \intt{\tilde{A}_n}^{\vec{n}_n^M} \vdash \qt{M}_A^\Gamma : \intt{A} $ with $ \vec{n}_i^M $ being list of integers of the same length as  $ \intt{\tilde{A}_i}$, called its \emph{coarse approximation}, defined in Figure \ref{fig:coarseint}. We often keep the labels $\Gamma, A $ implicit, wiriting just $  \qt{M} $, when the typing can be reconstruced by the context.

The embedding, again, preserves $ \beta$-reduction. We then get strong normalization of intersection typed terms as a corollary. 

We observe that if $M $ is typed in the idempotent system, we get that $\qt{M} $ is a uniform term. However, $ \qt{M}$ is not strongly uniform as in the simply-typed case. We then obtain the following characterization: 
\begin{proposition}
$\gamma \vdash s : \ty $ is a uniform term iff there exists an idempotent intersection typing  $ \Gamma \vdash M : A $ s.t. $ \qt{\Gamma \vdash M : A} = \gamma \vdash s : \ty   $.
\end{proposition}

\subsection{The Extensional Collapse, Operationally} Our rewriting system allows to explicitly connect linear and non-linear approximations of ordinary $\lambda $-terms. Another way of establishing the connection has been pursued in the context of the \emph{relational semantics} of linear logic by, e.g., Ehrhard \cite{er:collapse}. 
It is well-established that, in this setting, the interpretation of a $ \lambda$-term can be presented by means of \emph{multitypes}, whereas the intersection connective is seen as a multiset $ [A_1,\dots, A_k] $. In particular, we can present both the traditional relational semantics and the Scott semantics in this way (cfr. \cite{er:collapse}). Given an untyped term $ M$ with $ \fv{M} \subseteq \vec{x} $, its denotation is a monotonic relation ${\sem{M}_{\clubsuit}} \subseteq {    D^{\length{\vec{x}}} \times D   }  $ defined as
\[  \sem{M}_{\clubsuit} = \{     (\Gamma, A) \mid \Gamma \vdash_{\clubsuit}  M : A      \}  \]
for $ \clubsuit \in \{ \mathsf{scott}, \mathsf{rel}  \}  $ and $ D$ is (the underlying carrier of) an appropriate preorder on multitypes. The judgment $ \Gamma \vdash_{\clubsuit} M : A $ is obtained through an intersection type system, that is idempotent (resp. non-idempotent) in the Scott semantics (resp. relational semantics). The \emph{extensional collapse of non-idempotent intersection types} is the equality 
\[  \sem{M}_{\mathsf{scott}} = \downarrow\sem{M}_{\mathsf{rel}} := \{ (\Gamma,A) \mid \text{ there exists } (\Gamma',B) \in \sem{M}_{\mathsf{rel}}\] \[  \text{ s.t. } \Gamma \leq_{\mathsf{scott}} \Gamma', B \leq_{\mathsf{scott}} A   \} \]
 where $ A \leq_{\mathsf{scott}} B  $ is a preorder on intersection types generated by the rules: $ [A]\leq [A,A] , \ [A_1, A_2] \leq [A_i] $ and $ [A_1, \dots, A_k] \leq [] $. The preoder $ \leq_{\mathsf{scott}} $ is the standard one used to obtain the idempotency of the intersection type. The result has been proved by Ehrhard semantically \cite{er:collapse}. We can here give an operational proof. If we endow the structural resource $\lambda$-calculus with the following subtyping rule:
\[           \begin{prooftree} 
\hypo{   x_1 : \tyl_1,\dots, x_n : \tyl_n \vdash s : \ty       } \hypo{ f_i : \vec{b}_i \to \tyl_i    }
\infer2{    x_1^{f_1} : \vec{b}_1,\dots, x_n^{f_n} : \vec{b}_n  \vdash_{\mathsf{sub}} s : \ty             }
\end{prooftree}   \] 
This subtyping rule is needed in order to capture the suptying on context induced by the preoder $  \leq_{\mathsf{scott}}  $. We can characterize both forms of semantics by exploiting our calculus. We defined
   the approximation relation as 
\begin{equation}\label{eq:rel}
\sem{M}_{\clubsuit} = \{ ( \underline{\gamma}, \underline{\ty}      ) \mid \gamma \vdash_{\spadesuit}  s : \ty \text{ for some } s \lhd M \}  
\end{equation}
where $ \spadesuit = c   $ (resp. $ l$) if $ \clubsuit = \mathsf{scott} $ (resp. $\mathsf{rel} $) and $ \underline{\ty} $ is obteined from $\ty $ by replacing any list with the corresponding multiset. We can  then rephrase Ehrhard's result into our setting, obtaining an operational and proof-relevant version of the collapse, bridging categorical constructions and our rewriting system. We can perform the collapse for both typed and untyped $\lambda $-calculi. Given a $ \lambda$-term $M $, we set 
\[ \mathsf{Appr}(M)_{\spadesuit}(\gamma;\ty) = \{ s \in \rTerms \mid s \lhd M  \text{ and } \gamma \vdash_{\spadesuit} s : \ty\}.     
\]
Given a typed resource term $ s$, we denote as $\mathsf{nf}^e(s) $  its normal form for the exponential reduction. Thanks to the normalization and confluence result for our exponential reduction, we have that the map $ s\mapsto \mathsf{nf}^e(s) $ defines the function:
\begin{equation*}\label{eq:collapse}
\mathsf{Appr}(M)_{c}^\eta(\gamma;\ty) \to \sum_{\delta,b} \mathsf{Appr}(M)_l^\eta (\delta;b) \times \mathsf{Ctx}^c(\gamma, \delta) \times \mathsf{\rest}^c(b,a)                              
\end{equation*} 
where $ \mathsf{Appr}(M)_{\spadesuit}^\eta(\gamma;\ty)$ denotes the set of $ \eta$-long forms of approximations of $ M$. We call this function the \emph{proof-relevant collapse} of linear intersection types into cartesian ones. From this, we can give an alternative proof of the semantic collapse.

\begin{theorem}
Let $ M $ be a $\lambda $-term. We have that
\[ { \sem{M}_{\mathsf{scott}}} =    { \downarrow\sem{M}_{\mathsf{rel}}}.\]
\end{theorem}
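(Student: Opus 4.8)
The plan is to prove the equality by the two inclusions $\sem{M}_{\mathsf{scott}} \subseteq \downarrow\sem{M}_{\mathsf{rel}}$ and $\downarrow\sem{M}_{\mathsf{rel}} \subseteq \sem{M}_{\mathsf{scott}}$, using the proof-relevant collapse $s \mapsto \mathsf{nf}^e(s)$ for the first inclusion and the actions of type morphisms (Figures~\ref{fig:ra} and~\ref{fig:la}) together with the $\vdash_{\mathsf{sub}}$ rule for the second. The bridge between the two worlds is a correspondence between the generators of $\leq_{\mathsf{scott}}$ and the structural morphisms of $\oc^c$: the three generating inequalities $[A]\leq[A,A]$, $[A_1,A_2]\leq[A_i]$ and $[A_1,\dots,A_k]\leq[\,]$ are realized, respectively, by a contraction, a projection (weakening), and a terminal morphism. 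First I would record this as a lemma: a morphism $a \to b$ exists in $\rest^c$ (resp. $\gamma \to \delta$ in $\mathsf{Ctx}^c$) if and only if, after replacing lists by multisets, $\underline{a} \leq_{\mathsf{scott}} \underline{b}$ (resp. $\underline{\gamma} \leq_{\mathsf{scott}} \underline{\delta}$). One direction is immediate since every morphism factors through the generators; the converse requires decomposing a Scott inequality into a chain of generating steps and composing the corresponding morphisms.

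For the inclusion $\sem{M}_{\mathsf{scott}} \subseteq \downarrow\sem{M}_{\mathsf{rel}}$, take $(\underline{\gamma},\underline{a}) \in \sem{M}_{\mathsf{scott}}$, witnessed by a cartesian approximation $\gamma \vdash_c s : a$ with $s \lhd M$, using~(\ref{eq:rel}). Normalizing $s$ under $\toexp$ yields a planar, hence linear, term $\mathsf{nf}^e(s)$ together with a linear typing $\delta \vdash_l \mathsf{nf}^e(s) : b$ and morphisms $\theta \in \mathsf{Ctx}^c(\gamma,\delta)$ and $f \in \rest^c(b,a)$, exactly as recorded by the proof-relevant collapse function (which is well defined thanks to strong normalization and confluence of exponential reduction). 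Since exponential reduction relates only coherent terms, $\mathsf{nf}^e(s) \lhd M$, so $(\underline{\delta},\underline{b}) \in \sem{M}_{\mathsf{rel}}$. By the correspondence lemma, $\theta$ gives $\underline{\gamma} \leq_{\mathsf{scott}} \underline{\delta}$ and $f$ gives $\underline{b} \leq_{\mathsf{scott}} \underline{a}$, whence $(\underline{\gamma},\underline{a}) \in \downarrow\sem{M}_{\mathsf{rel}}$.

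For the inclusion $\downarrow\sem{M}_{\mathsf{rel}} \subseteq \sem{M}_{\mathsf{scott}}$, take $(\Gamma,A) \in \downarrow\sem{M}_{\mathsf{rel}}$, so there is a linear $\delta \vdash_l s : b$ with $s \lhd M$, $\underline{\delta} = \Gamma'$, $\underline{b} = B$, and $\Gamma \leq_{\mathsf{scott}} \Gamma'$, $B \leq_{\mathsf{scott}} A$. Choose representatives $\gamma, a$ with $\underline{\gamma} = \Gamma$, $\underline{a} = A$ and, by the lemma, structural morphisms $\theta : \gamma \to \delta$ in $\mathsf{Ctx}^c$ and $g : b \to a$ in $\rest^c$. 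Viewing the linear typing as a cartesian one via $\rest^l \hookrightarrow \rest^c$, I would apply the covariant action $[g]$, obtaining $\delta^{[\mu^g_s]} \vdash_c [g]s : a$, and then apply the subtyping rule $\vdash_{\mathsf{sub}}$ on the context side to descend from $\delta^{[\mu^g_s]}$ to $\gamma$. Because $[g]$ merely applies structural rules, $[g]s \lhd M$ is preserved, and at the level of multisets the resulting judgment is $(\Gamma,A)$, so $(\Gamma,A) \in \sem{M}_{\mathsf{scott}}$.

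The hard part will be the bookkeeping of the ground morphism $\mu^g_s$ (and $\nu^\theta_s$) produced by the actions: the covariant action does not land on $\delta$ on the nose but on the reindexed context $\delta^{[\mu^g_s]}$. I would discharge this by showing that $\mu^g_s$, being built from permutations, contractions and weakenings, satisfies $\underline{\delta} \leq_{\mathsf{scott}} \underline{\delta^{[\mu^g_s]}}$, so that $\Gamma \leq_{\mathsf{scott}} \Gamma' = \underline{\delta} \leq_{\mathsf{scott}} \underline{\delta^{[\mu^g_s]}}$ and a morphism $\gamma \to \delta^{[\mu^g_s]}$ exists to feed into $\vdash_{\mathsf{sub}}$; compositionality of the actions keeps this coherent. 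The genuine obstacle is therefore twofold: establishing the tight \emph{completeness} half of the correspondence lemma (every Scott inequality is realized by a structural morphism, for both contexts and output types), and verifying that the multiset quotient $\underline{(\cdot)}$ absorbs precisely these propagation ground morphisms, so that the monotone relations $\sem{M}_{\spadesuit}$ of~(\ref{eq:rel}) are honestly closed under the relevant orders.
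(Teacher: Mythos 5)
Your argument is correct, and for the $(\subseteq)$ inclusion it is essentially the paper's own proof: normalize a cartesian witness under $\toexp$ and read the Scott inequalities off the labels $\theta;f$ of the normalizing reduction, using the soundness direction of the morphism-to-inequality correspondence. The genuine difference is in the $(\supseteq)$ inclusion. The paper's proof of that direction is much shorter and, taken literally, proves less: it picks $(\Gamma,A)\in\sem{M}_{\mathsf{rel}}$ rather than in $\downarrow\sem{M}_{\mathsf{rel}}$, observes that a linear typing is in particular a cartesian one via $\rest^{l}\hookrightarrow\rest^{c}$, and stops; the downward closure --- the entire content of the $\downarrow$ operator --- is silently delegated to the subtyping rule $\vdash_{\mathsf{sub}}$ introduced just before the theorem and to the asserted monotonicity of $\sem{M}_{\mathsf{scott}}$, neither of which is actually invoked in the proof text. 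Your version discharges exactly this residual step: you state the completeness half of the correspondence lemma (every $\leq_{\mathsf{scott}}$ inequality is realized by a structural morphism of $\mathsf{Ctx}^{c}$ or $\rest^{c}$ --- the paper only ever uses the converse), raise the output type with the covariant action $[g]$, lower the context with $\vdash_{\mathsf{sub}}$, and check that the approximation relation $\lhd$ and the multiset quotient absorb the propagation morphisms $\mu^{g}_{s}$ produced along the way. So your route is heavier --- it needs the correspondence lemma and the ground-morphism bookkeeping --- but it is self-contained, whereas the paper's is a two-line sketch that leaves the $\downarrow$-closure to the reader; your write-up is essentially what a complete version of the paper's argument would have to contain.
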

\begin{proof}
($ \subseteq$) Let $  (\Gamma, A)  \in \sem{M}_{\mathsf{scott}} .   $ By $ (\ref{eq:rel}) $, there exists $ \gamma \vdash_c s : \ty $ with $\Gamma = \underline{\gamma} , A = \underline{\ty} .$ We consider a normalization rewriting $   (\gamma \vdash_{l} s : \ty ) \to_{\theta; f} ( \gamma' \vdash_l \mathsf{nf}^e (s) : \ty')        $. Now, again by $ (\ref{eq:rel}) $ we have that $ (\underline{\gamma}, \underline{\ty}) \in \sem{M}_{\mathsf{rel}}  $. Since $ \theta : \gamma \to \gamma' $ and $ f : \ty' \to \ty $ we can infer that $  \underline{\gamma} \leq_{\mathsf{scott}} \underline{\gamma'} $ and $ \underline{\ty'} \leq_{\mathsf{scott}} \underline{\ty} $. Then we can conclude that $  (\Gamma, A) \in   \downarrow\sem{M}_{\mathsf{rel}}.$

($\supseteq$) Let $  (\Gamma, A)  \in \sem{M}_{\mathsf{rel}} $. By $ (\ref{eq:rel}) $, there exists $ \gamma \vdash_l s : \ty $ with $\Gamma = \underline{\gamma} , A = \underline{\ty} .$ Since $ \gamma \vdash_l s : \ty$, in particular $ \gamma \vdash_c s : \ty$. Then, by definition, $  (\underline{\gamma}, \underline{\ty} ) = (\Gamma, A) \in \sem{M}_{\mathsf{scott}}  $.
\end{proof}
\section{Related Work}
Linearization --- seen as a program transformation aiming at bringing $\lambda$-terms into a linear form --- was first considered by Kfoury~\cite{Kfoury00}, whose notion of linearization does
not eliminate structural rules \emph{on demand} but somehow \emph{blindly}, resulting in a notion of reduction which is neither strongly normalizing nor confluent. This contrasts with the structural resource $\lambda$-calculus, in which instead linearization and $\beta$-reduction commute (cfr. Theorem ~\ref{theo:commutation}).
Alves and Florido~\cite{AlvesFlorido03,AlvesFlorido05} consider a \emph{weakly} linear fragment of the $\lambda$-calculus and a transformation of all terms of the ordinary $\lambda$-calculus into this fragment which preserves normal forms; the transformation, however, does not occur internally to the calculus, being based on Levy's legal paths. More recently, Alves and Ventura~\cite{AlvesVentura22} revisit weakly linear terms by characterizing them in terms of non-idempotent intersection types, this way simplifying the translation. Non-idempotent typing being itself a quantitative system, the bottleneck of the translation is somehow transferred to typing.

The work most closely related to ours is certainly the recent one by Pautasso and Ronchi~\cite{PautassoRonchi23}. They present a unification algorithm that, given a simply typed $ \lambda$-term $M$, produces a typing for $ M$ in a restricted (uniform) non-idempotent intersection type system. We obtain a similar result as a corollary of our work, as shown in Section \ref{sec:simply}. However, our result is rewriting-based, and stems from structural rule elimination, not unification. While the typing they obtain is affine, ours is linear, by normalization of the exponential reduction. Moreover, while they need strong normalization of simply-typed terms to prove the correctness of their algorithm, we prove it independently, as a corollary of strong normalization. 

The Taylor expansion as defined by Ehrhard and Regnier~\cite{EhrhardRegnier08} can itself be seen as a form of linearization. It consists of assigning to every $ \lambda$-term, seen as an \emph{analytic function}, its Taylor expansion, i.e., the \emph{infinite} sum of its linear approximations, defined as resource terms. Only a finite number of such approximations (or one, in the rigid case) are correct. In this sense, what we do in this paper can be seen a way to build such an approximation \emph{from within} a calculus. A related approach is the one of \emph{polyadic approximations} \cite{mazza:pol}, whereas both linear and non-linear approximations are allowed. The type of approximation is closely related to the structural rules allowed. Our structural resource $\lambda$-calculus can be seen as a further refinement of that framework, where \emph{nested} structural rules are allowed.

While we presented our results without relying on abstract constructions, our work has been strongly influenced by the 2-dimensional categorical semantics of the $ \lambda$-calculus. In particular, the definition of structural resource $\lambda$-terms, the action of morphisms and their operational semantics come from a fine-grained analysis of the semantics of $ \lambda$-terms in the bicategory of \emph{generalized species of structures} \cite{fiore:esp, tao:gen, tao:prof, gal:prof, ol:intdist, ol:proof}.
\section{Conclusion}
In this paper, we introduced the structural resource $\lambda$-caluclus with its terminating and confluent rewriting system. Two reduction relations are present, namely the \emph{exponential} one, which performs linearization by structural rule elimination, and the \emph{linear} one, which instead performs linear $\beta$-reduction. We have shown how to embed qualitative type systems into our calculus and how to recover a classic result in the denotational semantics of linear logic, namely the \emph{extensional collapse}. 

We believe that the structural resource $\lambda$-calculus can be suitably extended to encode other qualitative systems of interest. As a first goal we will work on obtaining an encoding of
G\"odel's $\mathcal{T}$, then targeting dependent types and polymorphism. 
%It is well known that from quantiative precisely reflect the execution time~\cite{deCarvalho18, kes:tight} and space~\cite{dallago:space}. An intriguing direction consists in investigating whether exponential reduction can be exploited as a way t
%Building on these resutls, we will also apply our rewriting system to compute complexity bounds for typed programs in qualitative systems. Indeed, by embedding a qualitative program in our caluclus and normalizing it thorugh the exponential reduciton, we also \emph{compute} its quantitative intersection typing. 

We have shown how the reduction semantics of some qualitative systems can be factorized into exponential and linear steps. Since the normalization of linear reduction holds by a simple combinatorial argument, the proof of normalization for those qualitative systems depends on the well-foundedness of exponential reduction, which we proved by a suitable reducibility argument. Would it be possible to obtain a more combinatorial proof? A positive answer to this question could also shed some new light on the so called G\"{o}del's Koan \cite{goedel:koan}, that is about finding natural measures that decrease under reduction of typed $ \lambda$-terms.  

From the point of view of denotational semantics, we plan to investigate the relationship of our linearization with a 2-dimensional version of the extensional collapse. We speculate that the normalization function $  s \to \mathsf{nf}^e(s)$ could be the syntactic presentation of a natural isomorphism expressing a \emph{proof-relevant} version of the collapse, arising from a 2-dimensional orthogonality construction \cite{gal:thesis, gal:proof}. 
The interactive flavour of our rewriting system also suggests a strong connection with \emph{game semantics}. In that context, it would be interesting to relate our construction to the collapse of concurrent games \cite{clair:collapse}. From this, we could then hope to exploit the connection between generalized species and concurrent games established in \cite{clair:esp} to formalize the relationship between the two collapse results.

\bibliographystyle{plain}
\bibliography{main}
\appendix

We sketch some of the proofs of the main results. Proofs are divided thematically.

\subsection*{Typing}

\begin{proposition}[Uniquness of Type Derivations]
Let $ \pi $ be a derivation of $\gamma \vdash s : \ty $ and $ \pi'$ of  $ \gamma \vdash s : \ty' $. Then $ \ty = \ty'$ and $  \pi = \pi'$.
\end{proposition}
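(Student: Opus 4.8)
The plan is to argue by structural induction on the term $s$ (simultaneously on bags, since the system has the two judgment forms $\vdash$ and $\vdash_b$), exploiting the fact that the type assignment of Figure~\ref{fig:quasit} is \emph{syntax-directed}: the three term shapes $x_i$, $\la{x^f}s$, $s\vec{t}$ and the single bag shape $\seq{t_1,\dots,t_k}$ each match exactly one rule. Consequently, for a fixed context $\gamma$ and term $s$, the last rule applied in $\pi$ and in $\pi'$ is forced to be the same, and the problem reduces to showing that the premises of that rule are uniquely determined; the equalities $\ty=\ty'$ and $\pi=\pi'$ then follow by the \ih.

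The variable and abstraction cases are immediate. If $s=x_i$, the variable rule forces $\gamma$ to assign a singleton list to $x_i$, so $\ty$ is read off as its unique element and both derivations are the same axiom. If $s=\la{x^f}s'$, the annotation is part of the term, hence the morphism $f:\tyl\to\vec{b}$ is fixed; this determines both the source $\tyl$ of the resulting arrow type and the extension $\gamma,x:\vec{b}$ appearing in the premise (with $\vec{b}$ the codomain of $f$). Applying the \ih to $s'$ under this uniquely determined context yields uniqueness of the codomain, hence of $\tyl\multimap\ty$, and of the sub-derivation.

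The crux is the application and bag cases, where the conclusion's context is a tensor product $\gamma_0\otimes\gamma_1$ (respectively $\gamma_1\otimes\cdots\otimes\gamma_k$), and the main obstacle is that a single context could a priori be split in many ways. To resolve this I would first establish, by a routine simultaneous induction on the typing rules, a \emph{relevance invariant}: in any derivable judgment $\gamma\vdash s:\ty$, the length of the list that $\gamma$ assigns to a variable $x$ equals the number of free occurrences of $x$ in $s$ (and analogously for bags). Since the tensor product of contexts is pointwise list concatenation, so that $\gamma$ assigns to $x$ the list $\gamma_0(x)\oplus\gamma_1(x)$, the invariant pins the split point for each variable to the occurrence count of $x$ in $s'$, a quantity fixed by the syntax; thus $\gamma_0$ is the componentwise prefix and $\gamma_1$ the componentwise suffix of $\gamma$, and the decomposition is unique.

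Once the decomposition is forced, the \ih applied to the function $s'$ (under $\gamma_0$) and to the argument bag $\vec{t}$ (under $\gamma_1$) gives unique types $\tyl\multimap b$ and $\tyl$; the two instances of $\tyl$ necessarily agree because both $\pi$ and $\pi'$ are well-formed, so the conclusion type $b$ is determined and the two sub-derivations coincide, yielding $\pi=\pi'$. The bag case is identical, the $k$-fold split being fixed by the occurrence counts in $t_1,\dots,t_k$ and the component types $\ty_1,\dots,\ty_k$ by the \ih. I expect the relevance invariant to be the only genuinely technical ingredient; with it in hand, uniqueness is a direct propagation of the induction hypothesis through the unique applicable rule.
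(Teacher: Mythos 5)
Your proposal is correct and takes essentially the same route as the paper's own proof: induction on the structure of $s$ (with bags handled simultaneously), syntax-directedness disposing of the variable and abstraction cases, and relevance of contexts pinning down the splitting of $\gamma$ in the application and bag cases. The only cosmetic difference is that you isolate the occurrence-counting property as an explicit auxiliary lemma, whereas the paper directly invokes the slightly stronger relevance property (that the context lists record exactly the typings of the occurrences of each variable, in left-to-right order).
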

\begin{proof}
By induction on $ s$. The only interesting cases are the application and list terms. We prove the application case, the list one being completely analogous. Let $ s = p \vec{q} $ and $  \pi = $
\[         \begin{prooftree}
\hypo{  \gamma_0 \vdash p : \tyl \multimap \ty} \hypo{\gamma_1 \vdash \vec{q} : \tyl}
\infer2{      \gamma_0 \otimes \gamma_1 \vdash p \vec{q} : \ty }
\end{prooftree}        \]
and $ \pi' =$
\[  \begin{prooftree}
\hypo{  \gamma'_0 \vdash p : \tyl' \multimap \ty} \hypo{\gamma'_1 \vdash \vec{q} : \tyl}
\infer2{      \gamma_0 \otimes \gamma_1 \vdash p \vec{q} : \ty }
\end{prooftree}     \]
with $\gamma_0 \otimes \gamma_1 = \gamma'_0 \otimes \gamma'_1 $. We want to prove that $ \gamma_{i} =\gamma'_{i} $. Let $  \gamma_i = x_1 : \tyl_1,\dots, x_n : \tyl_n $ and $ \gamma'_i = x_1 : \tyl'_1, \dots, x_n : \tyl'_n $. The context is relevant, meaning that $ \tyl_i \oplus \tyl'_i$ contains exactly the typing of all occurrences of $x_i $ in $ p \vec{q} $, ordered from the leftmost to the rightmost. Hence the same is true for $ \gamma_i, \gamma'_i$ and then we can conclude.  
\end{proof}

\begin{proposition}[Compositionality]\label{lem:comp}
 Let $ \gamma \vdash s : \ty $ and $  \theta : \delta \to \gamma, \theta' : \delta' \to \delta , f : \ty \to b, g : b \to c$. The following statements hold. 
 \begin{align*}
  &  (\gamma^{[\mu^f_s]})^{[\mu_{[f]s}^g]} \vdash [g]([f]s) : c &  &=&   &\gamma^{[\mu_{s}^{gf}]} \vdash [gf] s : c .&
  \\[0.3em]
   &(\delta'^{[\nu^{\theta}_s]})^{[\nu_{ \actr{s}{\theta} }^{\nu^{\theta}_s \theta'}]} \vdash \actr{\actr{s}{ \theta }}{\nu^{\theta}_s \theta'} : \ty& &=&  &\delta'^{[ \nu^{\theta \theta'}_{s} ] }\vdash \actr{s}{\theta \theta'} : \ty.&   
 \end{align*}
 \end{proposition}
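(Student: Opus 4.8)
The plan is to prove the two equalities simultaneously, by mutual induction on the structure of the ($\eta$-long) resource term $s$, following the inductive clauses that define the two actions in Figure~\ref{fig:la} (covariant) and Figure~\ref{fig:ra} (contravariant). The induction has to be \emph{mutual} because the two actions are themselves mutually recursive: the variable-application clause of the contravariant action rewrites $\actr{(x\overline{q})}{\eta \otimes \theta}$ into a term containing \emph{both} a covariant action $[\overline{f}]$ and a contravariant action $\actr{\overline{q}}{\theta}$, so the covariant statement is needed to close the contravariant case, and conversely. Three ingredients carry the argument: associativity and unitality of composition in $\rest^\spadesuit$ and $\mathsf{Ctx}^\spadesuit$; the contravariant functoriality of list reindexing $(\vec{a}^{[\alpha]})^{[\beta]} = \vec{a}^{[\alpha\beta]}$ recorded in Section~\ref{sec:prelim}; and the naturality in $\vec{a}$ of the canonical ground morphisms $\alpha_{\vec{a}}$.

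The structural cases are direct. The variable clauses merely relabel the type and produce identity ground morphisms, so both equalities reduce to functoriality of composition ($[g]([f]x) = [gf]x$ on the covariant side, and $\actr{\actr{x}{\theta}}{\theta'}=\actr{x}{\theta\theta'}$ on the contravariant side). In the abstraction case the covariant action precomposes the label of the bound variable, so compositionality is again associativity of composition of label morphisms; the contravariant action instead propagates $\theta$ into the body, and here I would invoke the inductive hypothesis on the body and check that the induced ground morphism $\nu_x$ pushed onto the bound variable's label composes correctly, using that reindexing the label by $\nu_x$ is itself functorial. For applications of atomic type $o$ the covariant action is $[\idm_o]$ and acts trivially, so all the covariant content is concentrated in the bag and term-sequence clauses.

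The bag and term-sequence clauses are the crux. There a morphism carries a ground part $\alpha$ (a permutation, and in the cartesian case also contractions and weakenings) together with a nested part $\vec{f}$, and the action reindexes the list of subterms along $\alpha$ while acting covariantly by each component $f_j$ on the selected subterm $t_{\alpha(j)}$. To compose two such morphisms $\seq{\alpha;\vec{f}}$ and $\seq{\beta;\vec{g}}$ I would unfold the definition of composition in $\oc^\spadesuit(-)$, use $(\vec{a}^{[\alpha]})^{[\beta]} = \vec{a}^{[\alpha\beta]}$ to identify the doubly-reindexed list of subterms with the singly-reindexed one, and then apply the inductive hypothesis componentwise to rewrite each nested double action as the single action of the composite nested morphism. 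The term-level equalities then follow from the inductive hypotheses and associativity.

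The only genuinely delicate point --- and the step I expect to be the main obstacle --- is checking that the \emph{accumulated ground morphisms} on the two sides agree. On the iterated side the ground datum is of the shape $(\bigotimes \mu'_i)\,\beta^\star$ applied after (a reindexing of) $(\bigotimes \mu_j)\,\alpha^\star$, whereas the composite side produces a single ground morphism attached to $\seq{\alpha\beta;\dots}$. Reconciling them requires commuting the reindexing operations past the families of $\mu$'s, which is exactly where naturality of the canonical ground morphisms and the functoriality identity $(\vec{a}^{[\alpha]})^{[\beta]}=\vec{a}^{[\alpha\beta]}$ are used together. Once this bookkeeping of the $\mu$ and $\nu$ families is discharged for $\theta$ of the purely nested shape assumed in Figure~\ref{fig:ra}, the extension to an arbitrary $\theta$ via $\actr{\pi}{\theta}=\actr{\pi}{\mathsf{nest}(\theta)}$ and $\nu^\theta_s=\nu^{\mathsf{nest}(\theta)}_s\circ\mathsf{ground}(\theta)$ (Remark~\ref{rem:struct}) is routine, reducing to the nested case together with one further use of reindexing functoriality.
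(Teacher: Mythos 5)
Your overall plan --- induction on term structure organized by the clauses of Figures~\ref{fig:la} and~\ref{fig:ra}, closed by associativity of composition, the reindexing identity $(\vec{a}^{[\alpha]})^{[\beta]}=\vec{a}^{[\alpha\beta]}$, naturality of the canonical ground morphisms, and the reduction of an arbitrary $\theta$ to its nested fragment via Remark~\ref{rem:struct} --- is the same as the paper's, and your treatment of the individual cases is plausible. The problem is the way you couple the two equalities. The dependency is not symmetric: no clause of the covariant action in Figure~\ref{fig:la} ever invokes the contravariant action, so the covariant equality admits a completely self-contained induction; only the contravariant action calls the covariant one, through the variable-application clause $\actr{(x\overline{q})}{\eta\otimes\theta}=x\,[\overline{f}](\actr{\overline{q}}{\theta})$. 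Your claim that the converse dependency also holds is false, and the paper exploits exactly this asymmetry: it proves the covariant statement first, for all terms, and only then the contravariant one, \emph{exploiting the former result}.

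That sequencing is not a stylistic choice; it is what makes the induction well-founded, and this is where your proposal has a genuine gap. In the contravariant case $s=x\overline{q}$, after pushing the outer contravariant action past $[\overline{f}]$ (this needs the interchange law $[f](\ract{s}{\theta})=\ract{([f]s)}{\mu^{f}_{s}\theta}$, which your proposal never mentions but which the paper records as a separate observation) and applying the contravariant inductive hypothesis to $\overline{q}$, you still have to merge $[\overline{f'}]([\overline{f}](u))$ into $[\overline{f'}\,\overline{f}](u)$ for a term $u$ of the form $\actr{\overline{q}}{\theta\theta'}$. This $u$ is \emph{not} a structural subterm of $s$: actions reindex bags, and cartesian reindexing duplicates entries, so $u$ can be strictly larger than $\overline{q}$. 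A mutual hypothesis indexed by structural subterms therefore does not license this use of the covariant statement, and your induction does not close. Two repairs are available: the paper's, namely prove the covariant equality beforehand as a standalone lemma quantified over all terms; or replace structural induction by induction on a measure that neither action increases (height works, since neither action deepens a term), which you would then have to state and verify. As written, the proof does not go through.
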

 \begin{proof}
 We prove the interceding cases.
 \begin{enumerate}
 \item By induction on $ s $. If $ s = \la{\tysx^{\overline{h}}} p  $ with $ f = \overline{f} \multimap o $ and $ h = \overline{g} \multimap o $, we have that $ [g]([f] s) = \la{ \tysx^{ ((\overline{h}\overline{f}) \overline{g}}   } p $ and $ [gf] s = \la{\tysx^{ \overline{h} (\overline{fg})}} p $. We then conclude by associativity of morphism composition. 
 
 If $ s = \seqdots{s}{1}{k}  $ with $ f = \seq{  \alpha; f_1,\dots, f_l  } $ and $ g = \seq{\beta; g_{1},\dots, g_m } $ we have that $    [g] ([f] s ) = \seq{ [f_{\beta(1)} g_1]s_{\alpha (\beta (1))}    , \dots, [f_{\beta(m)} g_m]s_{\alpha(\beta (m)) }     }  $. We conclude again by definition of morphism composition and by applying the IH.
 \item By induction on $ s$, exploiting the former result. If $ s = x \overline{q}  $ then $ \theta = \seq{ \overline{f}  \multimap o } \otimes \zeta $ and $  \theta' = \seq{\overline{f'}\multimap o} \otimes \zeta'$. By IH we have that $  \overline{q} \{ \theta \} \{\nu^{\theta}_{\overline{q}} \theta'\}  =  \actr{\overline{q}}{\theta \theta'}$. By definition we have that 
 \[  \actr{\actr{s}{ \theta }}{\nu^{\theta}_s \theta'} = x [f'] ([f]\overline{q} \{ \theta \} \{\nu^{\theta}_{\overline{q}} \theta'\})     \]
 and
 \[\actr{s}{\theta \theta'} = [f'f] \actr{\overline{q}}{\theta \theta'}\]
 we then conclude by applying the IH and the former point of this lemma.
 \end{enumerate}
 \end{proof}

We also observe that covariant and contravariant action are interchangeable: $  [f](\ract{s}{\theta}) = \ract{([f]s)}{  \mu^{f}_s \theta  }  $.

\subsection*{Morphisms under reduction}

We study the behaviour of type morphism action under reduction.

Given a reduction step $ (\gamma \vdash s : \ty) \to_{\theta; f} ( \gamma' \vdash s' : \ty') $ and a morphism $ \theta' : \delta \to \gamma $, we would expect that we could infer some reduction step of the following shape: 
\[             (\delta^{[\nu^{\theta'}_s]}   \vdash \ract{s}{\theta'} : \ty )  \to_{ \eta, f  }  (\delta^{[ \nu^{\theta'\theta} ]} \vdash \ract{s'}{\theta \theta' ; \ty'}   )            \] 
for some ground morphism $ \eta $. However, it is easy to see that this is not the case, due to the contextual rule of reduction for $\lambda $-abstraction. We can still recover a convergence though: there exists $ t $ and appropriate morphisms $ \eta_1, \eta_2, i_1, i_2 $ s.t. \[ \ract{s}{\theta'} \to_{ \eta_1; i_1 } t \leftarrow_{\eta_2;i_2}  \ract{s}{\theta\theta'}  \] We formalize this result in the following lemma.

	%\begin{lemma}[Actions under exp]\label{lem:actred} Let $ ( \gamma \vdash s : \ty) \toexp_{\theta ; f} ( \gamma'  \vdash s' : \ty') $ and $  \zeta : \delta \to \gamma , h : \ty \to b$. 	Then there exist morphisms $ \zeta_1, \zeta_2 , i $,	\[ \zeta_1 : \delta^{[\nu_s^{\theta'}\mu^{h}_{\ract{s}{\theta'}}]} \to \delta' \qquad \zeta_2 : \delta^{[\nu_{s'}^{\theta\theta'}\mu^{hf}_{\ract{s'}{\theta\theta'}}]} \to \delta' \] \[  i : c \to b\]		 and a term $  \delta'  \vdash t : c$ s.t. $ [h]\ract{s}{\theta'} \toexp_{  \zeta_1; if } t  $ and $  [hf]\ract{s'}{\theta \theta'} \toexp_{\zeta_2; i} t  $ s.t. \[ \zeta_1 \nu_s^{\theta'}\mu^{h}_{\ract{s}{\theta'}} = \zeta_2 \nu_{s'}^{\theta\theta'}\mu^{hf}_{\ract{s'}{\theta \theta'}} .  \]  \end{lemma} 

	\begin{lemma}[Actions under exp]\label{lem:actred} Let $ ( \gamma, x: \tyl \vdash s : \ty) \toexp_{\theta, g; f} ( \gamma', x : \tyl' \vdash s' : \ty') $ and $  g' : \vec{b} \to \tyl , h : \ty \to b$. 
	
	Then there exist morphisms
	\[   \mu_1 : \vec{b}\thinspace^{[\nu_{s,x}^{id, g'}\mu^{h}_{\ract{s}{id,g'},x}]} \to  \vec{c} \qquad \qquad      \mu_2 : \vec{b}\thinspace^{[\nu_{s',x}^{id, gg'}\mu^{hf}_{\ract{s'}{id, gg'},x}]} \to \vec{c}\]
	\[ \zeta_1 : \gamma\thinspace^{[\nu_s^{id, g'}\mu^{h}_{\ract{s}{id,g'}}]} \to \delta \qquad \qquad \zeta_2 : \gamma'\thinspace^{[\nu_{s'}^{id, gg'}\mu^{hf}_{\ract{s'}{id, gg'}}]} \to \delta \] and $  i : c \to b $ and a term $  \delta, x : \vec{c} \vdash t : c$ s.t. \[[h]\ract{s}{id, g'} \toexp_{  \zeta_1, \mu_1; if } t \quad  \text{ and } \quad [hf]\ract{s'}{id, gg'} \toexp_{\zeta_2, \mu_2; i} t  \] with \[ (\zeta_1, \mu_1) \nu_s^{id, g'}\mu^{h}_{\ract{s}{id,g'}}  =  (\zeta_2, \mu_2) \nu_{s'}^{id, gg'}\mu^{hf}_{\ract{s'}{id, gg'}} \theta  \]
 \end{lemma}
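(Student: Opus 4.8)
The plan is to induct on the derivation of the exponential step $(\gamma, x : \tyl \vdash s : \ty) \toexp_{\theta, g; f} (\gamma', x : \tyl' \vdash s' : \ty')$, following the presentation of $\toexp$ in Figure~\ref{fig:colla} as the contextual closure of the ground abstraction step. Two algebraic tools will do most of the work: Proposition~\ref{lem:comp}, which collapses iterated covariant actions via $[g]([f]s) = [gf]s$ and iterated contravariant actions via $\ract{\ract{s}{\theta}}{\theta'} = \ract{s}{\theta\theta'}$ (up to the propagation ground morphisms $\mu, \nu$), and the interchange law $[f]\ract{s}{\theta} = \ract{[f]s}{\mu^{f}_s \theta}$. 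Using interchange I first put both terms of the statement, $[h]\ract{s}{\idm, g'}$ and $[hf]\ract{s'}{\idm, gg'}$, into a canonical form in which all covariant actions have been commuted past the contravariant ones; the common reduct $t$ and the witnessing morphisms $\zeta_1, \zeta_2, \mu_1, \mu_2, i$ are then obtained by reading off and composing the ground morphisms emitted by these actions against the reduction labels $\theta, g, f$.

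\textbf{Base case.} When the step is the ground abstraction rule, $s$ is itself the fired redex, say $s = \la{\etax^{\overline k}} p$ with the distinguished variable $x$ free in $p$. On the redex side, acting by $\ract{-}{\idm, g'}$ and $[h]$ leaves $s$ an abstraction whose label $\overline k$ is merely precomposed with the propagated ground morphism $\nu_x$, hence still non-ground; this abstraction is therefore still a redex and fires in a single step to a term $t$. On the reduct side, $s'$ already has ground label, but the reduction's type morphism $f = \nu_x \multimap o$ is genuinely non-ground, so the covariant action $[hf]$ reinstates a non-ground label on the abstraction of $s'$ (the abstraction clause of Figure~\ref{fig:la} composes $f$ into the label), producing once more a redex that fires in one step. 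Computing both single-step reducts from the abstraction clauses of Figures~\ref{fig:ra} and~\ref{fig:la} and merging the accumulated actions through Proposition~\ref{lem:comp} shows that they coincide with the same fully acted, fully fired body $t$; the type label is $i$ on the reduct branch and $if$ on the redex branch, the extra $f$ recording the as-yet-unfired structural rule.

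\textbf{Inductive cases.} For the contextual rules the redex lies in a proper subterm, so I apply the induction hypothesis to that subterm and transport the resulting convergence through the enclosing constructor. The application and bag/sequence cases are routine: the propagation morphisms distribute over the multiplicative merging of contexts by the naturality of the ground morphisms $\alpha_\tyl$ and the factorization of Remark~\ref{rem:struct}, so the embedded subterm reducts converge with the tensor of the subterm morphisms. The genuinely delicate case, and the main obstacle, is the contextual rule for $\lambda$-abstraction: there the reduction postcomposes the abstraction's label ($\overline{fg}$ in Figure~\ref{fig:colla}), whereas the contravariant action independently precomposes that same label with the propagated $\nu_x$ (Figure~\ref{fig:ra}). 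These two label manipulations are precisely what prevents a strict single-step commutation and forces the detour through a common $t$, and reconciling them requires carrying the exact shape of $\nu_x$ along the induction.

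\textbf{The coherence identity.} It remains to check the final equation $(\zeta_1, \mu_1)\,\nu_s^{\idm, g'}\mu^{h}_{\ract{s}{\idm, g'}} = (\zeta_2, \mu_2)\,\nu_{s'}^{\idm, gg'}\mu^{hf}_{\ract{s'}{\idm, gg'}}\,\theta$ equating the two ground morphisms that witness compatibility at $t$. After expanding the propagation morphisms with Proposition~\ref{lem:comp} and the interchange law, I expect this to reduce to the associativity and naturality equations for $\Of^{\spadesuit}$ from Section~\ref{sec:prelim}. I anticipate that essentially all of the difficulty of the lemma is concentrated in this bookkeeping of ground morphisms across the abstraction case, rather than in any conceptual step.
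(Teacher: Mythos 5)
Your proposal follows the same route as the paper's own proof: induction on the derivation of the exponential step, with the ground abstraction case resolved by firing one further step on each side and identifying the two one-step reducts via compositionality (Proposition~\ref{lem:comp}) together with the interchange of covariant and contravariant actions, and with the contextual cases handled by the induction hypothesis plus contextuality — including your correct identification of the contextual $\lambda$-abstraction rule as the precise reason strict commutation fails and a common reduct $t$ is needed. The key lemmas invoked, the shape of the base case (labels $i$ versus $if$, the extra $f$ recording the unfired structural rule), and the reduction of the final coherence equation to compositionality all coincide with the paper's argument, so nothing further is required.
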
 
 \begin{proof}
 By induction on the step $ s \toexp s'$. 
 
  Let  \[ \la{\tysx^{\tysf'}} s \toexp_{\nu^{\overline{f'}}, \nu^{\overline{f'}}_y; \nu^{\overline{f'}}_x \multimap o}  \la{x} \ract{s}{id,id, \overline{f'}}   \] with $ f = \nu^{\overline{f'}}_x \multimap o $ and $ \theta = \nu^{\overline{f'}} $. We have that $  h$ is of the shape $ \overline{h} \multimap o$, $ g =  \nu^{\overline{f'}}_y$ and \[ [h]\ract{s}{id, g'} = \la{ \tysx^{\overline{f'h}}  } \ract{s}{id, g', id}   \] Then \[  [hf]\ract{s'}{id, gg'} = \la{ \tysx^{\overline{\nu^{\overline{f'}}_x h}}  } \ract{s}{id, \nu'^{\overline{f'}}_y g', \overline{f'}}  \]
 
 By compositionality of the actions and by their interchange, we have the following reductions: 
\[\begin{tikzcd}
	&& {[hf]\ract{s'}{id, gg'}} \\
	\\
	{[h]\ract{s}{id, g'}} && {\lambda \overline{x} .s \{ id , g', \overline{f'h} \}} \\
	\\
	& {}
	\arrow["{\nu^{\nu^{\overline{f'}}\overline{h}}\nu^{g'} , (\nu^{\overline{f'}_x h}_x\nu^{g'}_{x}); \nu'^{\nu^{\overline{f'}}\overline{h} \nu_x^{g'}} \multimap o}", from=1-3, to=3-3]
	\arrow["{\nu^{\overline{f'h}} \nu^{g'}, \nu^{\overline{fh}}_x \nu^{g'}_y ; \nu^{\overline{f'h}}_x \nu^{g'}_x\multimap o}"', from=3-1, to=3-3]
\end{tikzcd}\] 
We set \[ t = {\lambda \overline{x} .s \{ id , g', \overline{f'h} \}} \]
\[  \zeta_1 =  \nu^{\overline{f'h}} \nu^{g'}  \qquad \zeta_2 =  \nu^{\nu^{\overline{f'}}\overline{h}}\nu^{g'}\]
\[              \mu_1 =             \nu^{\overline{fh}}_x \nu^{g'}_y \qquad \mu_2 =       \nu^{\overline{f'}_x h}_x\nu^{g'}_{x}    \]
We remark that in this case $ \mu^f_{s\{ id; g'\}}, \mu^{hf}_{\ract{s}{gg'}} $ are identities, since covariant action on lambda abstraction does not change the context. The equations are then satisfied by compositionality (Lemma \ref{lem:comp}).

Let $  s = y \overline{q} $ with $ y : \seq{ \overline{\ty} \multimap o} \vdash_v y : \overline{\ty } \multimap o $ and $ \delta, x : \tyl \vdash \overline{q} : \overline{\ty} $ Then $ s' = x \overline{q'} $ with $ \overline{q} \toexp_{\zeta; \overline{f}} \overline{q'} .$
 Since the term has an atomic type, we get $ h = f = id $. There are two possibilities.
 
  Let $  x = y$. Then $  g = \seq{ \overline{f} \multimap o } \oplus \zeta_{x} $,  $  g' = \seq{ \overline{h} \multimap o  } \oplus g'' $ for some morphism $ g''$ and $ \ract{s}{id, g'} = x [\overline{h}] \ract{\overline{q}}{g''} $. Also $ \ract{s'}{id, gg'}  = x [\overline{hf}] \ract{\overline{q'}}{ \zeta_x g'' }  $. By IH we get morphisms $ \eta_1, \eta_2, i $ and a sequence term $ \overline{t}$ s.t.
\[\begin{tikzcd}
	{ [\overline{h}]\overline{q} \{ id, g''\}} && { [hf]\overline{q'}\{ \zeta_x g'' \}} \\
	\\
	& \overline{t}
	\arrow["{\eta_1 ; \overline{if}}"', from=1-1, to=3-2]
	\arrow["{\eta_2, \overline{i}}", from=1-3, to=3-2]
\end{tikzcd}\]
 Then we can infer \[   x[\overline{h}]\overline{q} \{ id, g''\} \toexp_{ (\seq{\overline{if} \multimap o} \oplus \eta_1 );id} t = ( x\overline{t}) \] and \[ x[hf]\overline{q'}\{ \zeta_x g'' \} \toexp_{ (\seq{\overline{i} \multimap o} \oplus \eta_2 ); o} t = (x\overline{t}) \]
Then we set $ \zeta_1 , \mu_1 = \seq{\overline{if} \multimap o} \oplus \eta_1  $ and $ \zeta_2 , \mu_2 = \seq{\overline{f} \multimap o} \oplus \eta_2$.  The equations are satisfied by applying the IH and compositionality.

If $x \neq y $, we have again that $h = f = id  $. We have that $\ract{s}{id, g'} = y (\ract{\overline{q}}{id, g'}) $ and $ \ract{s'}{id, gg'} = y (\ract{\overline{q'}}{id, g'}).$ By IH we get morphisms $ \eta_1, \eta_2, i $ and a sequence term $ \overline{t}$ s.t.
\[\begin{tikzcd}
	{ \overline{q} \{ id, g'\}} && { \overline{q'}\{  id, gg' \}} \\
	\\
	& \overline{t}
	\arrow["{\eta_1 ; \overline{if}}"', from=1-1, to=3-2]
	\arrow["{\eta_2, \overline{i}}", from=1-3, to=3-2]
\end{tikzcd}\]

 Then we can infer that 
\[   y (\overline{q} \{ id, g'\})  \toexp_{\seq{\overline{if} \multimap o} \oplus \eta_1; o} t = ( y \overline{t})   \]
and
\[ y (\overline{q'} \{ id, gg'\})  \toexp_{\seq{\overline{i} \multimap o } \oplus \eta_2; o} t = y \overline{t} \]
Then we set $ \zeta_1 , \mu_1 = \seq{\overline{if} \multimap o} \oplus \eta_1  $ and $ \zeta_2 , \mu_2 = \seq{\overline{f} \multimap o} \oplus \eta_2$.  The equations are satisfied by applying the IH.

If $  s = \la{\overline{y}^{\overline{l}}} p  $ and $  s' =  \la{\overline{y}^{\overline{l'l}}} p $ with $ p \to_{\theta, g, l'; o} p' $, we have that $  f = id \multimap id , h = \overline{h} \multimap o $.

By definition of actions, 
\[       [h]\ract{s}{id, g} = \la{\overline{y}^{ \nu^{id,id,g'}_y l h }} \ract{p}{id,id,g'} \qquad [hf]\ract{s'}{id,gg'}   =      \la{\overline{y}^{ \nu^{id,id, gg'}_y l'l h }} \ract{p'}{id,id,gg'}             \]
 By IH there exists $ t, \zeta, \zeta', \mu, \mu'$ and a term $ t $ s.t.  
\[            \ract{p}{id, id, g'} \toexp_{\zeta, \mu, \zeta_{y}; o} t \quad \text{ and } \quad  \ract{p'}{id,id,gg'; o} \toexp_{\zeta', \mu', \zeta'_y; o} t                             \] then, by contextuality
\[             \la{\overline{y}^{\nu^{id,id,g'} l h}} \ract{p}{id, id, g'} \toexp_{\zeta, \mu ; o} \la{ \overline{y}^{\nu^{id,id, ,g'}_y \zeta_y l h} }  t \] \text{ and } \[ \quad \la{\overline{y}^{\nu^{id,id, gg'}_y l'l h}} \ract{p'}{id,id,gg'} \toexp_{\zeta', \mu', \zeta'_y} \la{\overline{y}^{\nu^{id,id, gg'}_y\zeta'_y l'l h } } t                                        \]
By the IH we know that 
\[ \nu^{id,id, ,g'}_y \zeta_y =  \overline{l'}\nu^{id,id, gg'}_y\zeta'_y l'  \]
we can then conclude that $\la{ \overline{y}^{\nu^{id,id, ,g'}_y \zeta_y l h} }  t = \la{\overline{y}^{\nu^{id,id, gg'}_y\zeta'_y l'l h } } t                                       . $

If $ s = (\la{\overline{y}^{\overline{f_1}}} p ) \overline{q} $ then $ f = h = id  $. 
We have two possible cases.

Let $   s \toexp_{\theta, g; id} s'  $ with $ s' = (\la{\overline{y}^{\overline{f_2f_1}}} p' ) \overline{q} $ and $  p \toexp_{\theta'; g'', f_2} p'  $. Then the result is a direct consequence of the IH.

Let $   s \toexp_{\theta, g; id} s'  $ with $ s' = (\la{\overline{y}^{\overline{f_2f_1}}} p' ) \overline{q'} $ with $ \overline{q} \toexp_{\theta' g, f_2} \overline{q'} $. 

By definition of contravariant action, we get 
\[      \ract{s}{id, g'} =   \ract{(\la{\overline{y}^{\overline{f_1}}} p )}{id, g'_1} \ract{\overline{q}}{id, g'_2}  \]
and
\[      \ract{s'}{id, gg'}  =    \ract{(\la{\overline{y}^{\overline{f_2f_1}}} p )}{id, g'_1} \ract{\overline{q'}}{id, gg'_2}  \]

By IH there exist a sequence term $ \overline{t}$ and morphisms $ \eta_1,\eta_2, \rho_1,\rho_2 , \overline{i'}$ s.t. 
\[    \ract{\overline{q}}{id, g'_2} \toexp_{\eta_1, \rho_1; \overline{i'f_2}} \overline{t}               \] 
and
\[  \ract{\overline{q'}}{id, g g'_2} \toexp_{\eta_2, \rho_2; \overline{i'}} \overline{t}              \]
hence we can conclude that 
\[         \ract{s}{id,g'}  \toexp_{id \otimes (\eta_1,\rho_1) ; o}   (\la{\overline{y}^{\overline{i'f_2f_1}}} p ) \overline{t}                 \]
and 
\[         \ract{s}{id,gg'}  \toexp_{id \otimes (eta_2,\rho_2) ; o}   (\la{\overline{y}^{\overline{i'f_2f_1}}} p ) \overline{t}                \]
we then set $ t =  (\la{\overline{y}^{\overline{i'f_2f_1}}} p ) \overline{t}      , \zeta_1 = id \otimes \eta_1, \zeta_2 = id \otimes \zeta_2, \mu_1 = id \oplus \rho_1, \mu_2 = id \oplus \rho_2 , i = id $. The coherence condition is satisfied as a direct consequence of the IH.

The list and sequence term cases are direct consequences of the IH.
 \end{proof}

%	\begin{lemma}[Actions under exp]\label{lem:actred} Let $ ( \gamma, x: \tyl \vdash s : \ty) \toexp_{\theta, g; f} ( \gamma', x : \tyl' \vdash s' : \ty') $ and $  g' : \vec{b} \to \tyl , h : \ty \to b$. The following statements hold.\begin{enumerate}\item There exist morphisms $ \zeta_1 : \gamma^{[\mu^{h}_s]} \to \delta', \zeta_2 : \gamma'^{[\mu^{hf}_{s'}]} \to \delta', i :c \to b  $ and a term $  \delta' \vdash t : c $ s.t. $ [h]s \to_{\zeta_1;i} t  $ and $ [hf]s' \toexp_{\zeta_2;i} t $ and $ \zeta_1 \mu^{h}_s = \zeta_2\mu^{hf}_{s'}$.\item There exist morphisms $ \zeta_1 : \gamma^{[\nu_s^{id; g'}]} \to \delta, \zeta_2 : \gamma^{[\nu_{s'}^{id; gg'}]} \to \delta $ and $  i : c \to \ty' $ and a term $  \delta, x : \vec{b} \vdash t : c$ s.t. $ \ract{s}{id; g'} \toexp_{  \zeta_1, id; if } t  $ and $  \ract{s'}{id; gg'} \toexp_{\zeta_2, id; i} t  $ s.t. $  \zeta_1 \nu_s^{id; g'} = \zeta_2 \nu^{id; gg'} \theta  $.\end{enumerate}\end{lemma}\begin{proof}The proof is by mutual induction on $s $. We prove the interesting cases.\begin{enumerate}

\begin{lemma}[Associativity of Substitution]\label{lem:subassoc} let $ \gamma, x : \tyl_1, y : \vec{b} \vdash s : \ty  $, $\delta_1, y : \tyl_2 \vdash \vec{t} : \vec{b} $ and $ \delta_2 \vdash \vec{q} : \tyl_1 \oplus \tyl_2 $. We have that \[ \subst{\subst{s}{x}{\vec{t}}}{y}{\vec{q}^{[\sigma_{s, \vec{t}}]}} = \subst{\subst{s}{y}{\vec{q}_1} }{x}{   (\subst{ \vec{t}}{y}{\vec{q}_2})^{[\sigma_{s,\vec{q}_1}]}  }    \]
with \[ \sigma_{\subst{s}{x}{\vec{t}}, \vec{q}} (\sigma_{s,\vec{t}} \otimes id ) = \sigma_{\subst{s}{y}{\vec{q}_1},\subst{ \vec{t}}{y}{\vec{q}_2}^{[\sigma_{s,\vec{q}_1}]} } ( \sigma_{s,\vec{q}_1} \otimes \sigma_{\vec{t}, \vec{q}_2})  \]
where $ \vec{q} = \vec{q}_1 \oplus \vec{q}_2 $, the decomposition being the one induced by the typing (Figure \ref{fig:linsub}).
\end{lemma}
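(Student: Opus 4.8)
The plan is to proceed by induction on the structure of $s$, following the inductive definition of linear substitution in Figure~\ref{fig:linsub} (mutually for terms and for bags of terms). The key conceptual point is that the type system is \emph{relevant}: every free occurrence of a variable is recorded, in order, in the typing context (this is exactly what drives the proof of Uniqueness of Derivations). Consequently both the substituted terms and the bookkeeping permutations $\sigma_{-,-}$ are \emph{forced} by the data, so the statement is really a coherence claim — the two orders of composing the reindexing permutations must agree. Throughout I would use freely the contravariant semigroup-action law $(\tyl^{[\alpha]})^{[\beta]} = \tyl^{[\alpha\beta]}$ and the naturality of the multiplicative merging permutation $\tau$ introduced in Figure~\ref{fig:linsub}.

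First I would dispatch the base cases $s=x$, $s=y$, and $s=z$ for a third variable $z$ distinct from $x,y$. In each of these the variable not equal to $s$ occurs zero times, so the corresponding bag is empty and the associated permutation is an identity; both the term equality and the permutation equation then collapse immediately to the variable clause of Figure~\ref{fig:linsub}. The abstraction case $s=\la{w^f}s'$ is handled by applying the induction hypothesis to $s'$, carrying the bound variable $w$ (chosen fresh, hence distinct from $x$ and $y$) along in the context. The only subtlety is that substitution updates the label $f$ to $\sigma' f$, where $\sigma'$ is the permutation the substitution induces on the type of $w$; the coherence equation for $s$ is then exactly that for $s'$, pre- and post-composed with the relevant instance of $\sigma'$, and follows by associativity of composition.

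The crux is the application case $s = \app{p}{\vec r}$ and, completely analogously, the bag case $\seq{s_1,\dots,s_k}$. Here the context splits multiplicatively, so the occurrences of $x$ (and of $y$) are distributed between $p$ and $\vec r$; accordingly $\vec t$ splits as $\vec t_p \oplus \vec t_r$ and each of $\vec q_1,\vec q_2$ splits into its $p$- and $r$-parts, along the decomposition induced by the typing. Applying the induction hypothesis to $p$ and to $\vec r$ yields the two substituted sub-terms with their individual coherence equations, and recombining them introduces the merging permutation $\tau$. I expect this to be the main obstacle: one must verify that the \emph{global} equation relating $\sigma_{\subst{s}{x}{\vec t},\vec q}$ and $\sigma_{\subst{s}{y}{\vec q_1},\,\subst{\vec t}{y}{\vec q_2}^{[\sigma_{s,\vec q_1}]}}$ is obtained by tensoring the two component equations and then conjugating by the appropriate instances of $\tau$. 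After repeated use of the semigroup-action law and of the naturality of $\tau$ to re-associate the reindexings, this reduces to a purely combinatorial identity asserting that both sides enumerate the merged occurrences in the same order, which holds by construction. Since no $\beta$- or exponential step is involved, the whole argument stays within the substitution calculus and is, modulo the permutation bookkeeping, routine.
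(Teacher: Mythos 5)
Your proposal matches the paper's proof essentially step for step: induction on the structure of $s$, immediate variable cases, an abstraction case reduced to the induction hypothesis plus the relabelling $\sigma' f$ of the bound variable, and the application/bag cases handled by splitting $\vec{t}$ and $\vec{q}$ along the multiplicative context decomposition, applying the induction hypothesis componentwise, and recombining the permutation equations via monoidality of the context tensor and naturality of $\tau$. The only cosmetic differences are that you explicitly include the (symmetric, and in the paper omitted) case $s = y$, and that you phrase the recombination step as conjugation by $\tau$ where the paper writes out the corresponding equations explicitly.
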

 \begin{proof}
 By induction on the structure of $s $. 
 
 If $ s = x $ then $ \vec{t} = \seq{t} \cdot \vec{t}' $ for some term $ t$ and list term $  \vec{t}$, with $ \subst{s}{x}{\vec{t}} = t  $. Hence
 \[ \subst{\subst{s}{x}{\vec{t}}}{y}{\vec{q}^{[\sigma_{s, \vec{t}}]}} = \subst{t}{y}{\vec{q}}\]
 and
 \[      \subst{\subst{s}{y}{\vec{q}_1} }{x}{   (\subst{ \vec{t}}{y}{\vec{q}_2})^{[\sigma_{s,\vec{q}_1}]}  } =  \subst{t}{y}{\vec{q}_2} = \subst{t}{y}{\vec{q}} \]
 since $ s$ does not contain any occurrence of the variable $y $.
 
 If $  s = z $ with $  y \neq x \neq y$ then $ \vec{q_2} = \vec{q} , \vec{q}_1 = \seq{} $ and \[ \subst{\subst{s}{x}{\vec{t}}}{y}{\vec{q}} = z = \subst{\subst{s}{y}{\vec{q}_1} }{x}{   (\subst{ \vec{t}}{y}{\vec{q}_2})^{[\sigma_{s,\vec{q}_1}]}  } .\]
 
 If $ s = \la{y^f}  p$ then 
 \[ \subst{\subst{s}{x}{\vec{t}}}{y}{\vec{q}^{[\sigma_{s, \vec{t}}]}} = \la{y^{ \sigma_{\subst{p}{x}{\vec{t}}, \vec{q}} (\sigma_{s,\vec{t}} \otimes id )}}\subst{\subst{p}{x}{\vec{t}}}{y}{\vec{q}^{[\sigma_{p, \vec{t}}]}} \]
 and
 \[  \subst{\subst{s}{y}{\vec{q}_1} }{x}{   (\subst{ \vec{t}}{y}{\vec{q}_2})^{[\sigma_{s,\vec{q}_1}]}  }  = \la{y^{\sigma_{\subst{p}{y}{\vec{q}_1},\subst{ \vec{t}}{y}{\vec{q}_2}^{[\sigma_{p,\vec{q}_1}]} } ( \sigma_{p,\vec{q}_1} \otimes \sigma_{\vec{t}, \vec{q}_2})}} \subst{\subst{p}{y}{\vec{q}_1} }{x}{   (\subst{ \vec{t}}{y}{\vec{q}_2})^{[\sigma_{p,\vec{q}_1}]}  }  . \]
We can then conclude by applying the IH.

If $  s = p \vec{u}$ then we assume the following typing
\[  \gamma_1, x : \vec{a}_1, y : \vec{b}_{1,1} \vdash p : \vec{c} \multimap b 
\qquad     \gamma_2, x : \tyl_2 , y : \vec{b}_{1,2} \vdash \vec{u} : \vec{c}   \]
\[    (\delta_1 \otimes \delta_2)  , y : \vec{b}_{2,1} \oplus \vec{b}_{2,2}   \vdash \vec{t}_1 \oplus \vec{t}_2 : \tyl_1 \oplus \tyl_2                            \]
\[  (\delta'_1 \otimes \delta'_2 \delta'_3 \otimes \delta'_4) \vdash \vec{q}_{1,1} \oplus \vec{q}_{1,2} \oplus \vec{q}_{2,1} \oplus \vec{q}_{2,2} : \vec{b}_{1,1} \oplus \vec{b}_{1,2} \oplus \vec{b}_{2,1} \oplus \vec{b}_{2,2} \]
with $ \vec{t}_1 \oplus \vec{t}_2 = \vec{t} $ and $ \vec{q}  = \vec{q}_{1,1} \oplus \vec{q}_{1,2} \oplus \vec{q}_{2,1} \oplus \vec{q}_{2,2}$. The decomposition of the list terms follow the decomposition of contexts induced by the application rule.

By definition
\[\subst{\subst{s}{x}{\vec{t}}}{y}{\vec{q}^{[\sigma_{s, \vec{t}}]}} = \subst{\subst{p}{x}{\vec{t}_1}}{y}{\vec{q}_1^{[\sigma_{p, \vec{t}_1}]}} \subst{\subst{\vec{u}}{x}{\vec{t}_2}}{y}{\vec{q}_2^{[\sigma_{\vec{u}, \vec{t}_2}]}}\]
and
\[  \subst{\subst{s}{y}{\vec{q}_1} }{x}{   (\subst{ \vec{t}}{y}{\vec{q}_2})^{[\sigma_{s,\vec{q}_1}]}  } =  \subst{\subst{p}{y}{\vec{q}_{1,1}} }{x}{   (\subst{ \vec{t}_1}{y}{\vec{q}_{1,2}})^{[\sigma_{p,\vec{q}_{1,1}}]}  } \subst{\subst{\vec{u}}{y}{\vec{q}_{2,1}} }{x}{   (\subst{ \vec{t}_2}{y}{\vec{q}_{2,2}})^{[\sigma_{\vec{u},\vec{q}'_1}]}  }\]
with $\vec{q}_i = \vec{q}_{i,1} \oplus \vec{q}_{i,2} .$ By the IH, we have that
\[    \subst{\subst{p}{x}{\vec{t}_1}}{y}{\vec{q}_1^{[\sigma_{p, \vec{t}_1}]}} =  \subst{\subst{p}{y}{\vec{q}_{1,1}} }{x}{   (\subst{ \vec{t}_1}{y}{\vec{q}_{1,2}})^{[\sigma_{p,\vec{q}_{1,1}}]}  }    \]
and
\[ \subst{\subst{\vec{u}}{x}{\vec{t}_2}}{y}{\vec{q}_2^{[\sigma_{\vec{u}, \vec{t}_2}]}} = \subst{\subst{\vec{u}}{y}{\vec{q}_{2,1}} }{x}{   (\subst{ \vec{t}_2}{y}{\vec{q}_{2,2}})^{[\sigma_{\vec{u},\vec{q}'_1}]}  }  \]
with \[ \sigma_{\subst{p}{x}{\vec{t}_1}, \vec{q}_1} (\sigma_{p,\vec{t}_1} \otimes id ) = \sigma_{\subst{p}{y}{\vec{q}_{1,1}},\subst{ \vec{t}_1}{y}{\vec{q}_{1,2}}^{[\sigma_{p,\vec{q}_{1,1}}]} } ( \sigma_{s,\vec{q}_{1,1}} \otimes id)  \]
and
\[ \sigma_{\subst{\vec{u}}{x}{\vec{t}_2}, \vec{q}_2} (\sigma_{\vec{u},\vec{t}_2} \otimes id ) = \sigma_{\subst{\vec{u}}{y}{\vec{q}_{1,2}},\subst{ \vec{t}_2}{y}{\vec{q}_{2,2}}^{[\sigma_{\vec{u},\vec{q}_{2,1}}]} } ( \sigma_{\vec{u},\vec{q}_{2,1}} \otimes id) . \]

We recall that the permutation induced by the application is defined exploiting the natural transformation:
\[  \tau_{ \gamma_1, \gamma_2, \delta_1, \delta_2 } : (\gamma_1 \otimes \gamma_2) \otimes (\delta_1 \otimes \delta_2) \to (\gamma_1 \otimes \delta_1) \otimes (\gamma_2 \otimes \delta_2)    \]

In particular, we have that 
\[ \sigma_{\subst{s}{x}{\vec{t}}, \vec{q}} =  \tau_{(\gamma_1 \otimes \delta_1) , (\gamma_2, \otimes \delta_2), \delta'_1, \delta'_2 } ( \sigma_{\subst{p}{x}{\vec{t}_1}, \vec{q}_1} \otimes \sigma_{\subst{\vec{u}}{x}{\vec{t}_2}, \vec{q}_2}  )   \qquad \sigma_{s,\vec{t}} =  \tau_{ \gamma_1,\gamma_2, \delta_1, \delta_2 } ( \sigma_{p,\vec{t}_1} \otimes \sigma_{\vec{u}, \vec{t}_2} ) \]
and 
\[ \sigma_{\subst{s}{y}{\vec{q}_1},\subst{ \vec{t}}{y}{\vec{q}_2}^{[\sigma_{s,\vec{q}_1}]} } =  \] \[ \tau_{ (\gamma_1 \otimes \delta'_{1,1}),  (\gamma_2 \otimes \delta'_{1,2}) , (\delta_1 \otimes \delta'_{2,1})^{[\sigma_{p,\vec{q}_{1,2}}]}, (\delta_1 \otimes \delta'_{2,2})^{[\sigma_{\vec{u},\vec{q}_{2,2}}]} } (\sigma_{\subst{p}{y}{\vec{q}_{1,1}},\subst{ \vec{t}_1}{y}{\vec{q}_{1,2}}^{[\sigma_{p,\vec{q}_{1,2}}]} } \otimes \sigma_{\subst{\vec{u}}{y}{\vec{q}_{2,1}},\subst{ \vec{t}_2}{y}{\vec{q}_{2,2}}^{[\sigma_{\vec{u},\vec{q}_{2,2}}]} } ) \]
\[ \]
\[ \sigma_{s,\vec{q}_1} = \tau_{ \gamma_1, \gamma_2,  \delta'_{1,1}, \delta'_{1,2} } ( \sigma_{p,\vec{q}_{1,1}} \otimes \sigma_{\vec{u},\vec{q}_{2,1}} ).\]
The result is then a consequence of the neutrality of $\tau $ and the monoidality of the tensor product of typing contexts, applying the IH.
 \end{proof}
 
 \begin{lemma}\label{lem:acto}
 Let $  \gamma, x : \tyl \vdash s : \ty $ and $ \delta \vdash \vec{t} : \tyl $ with $\theta : \gamma' \to \gamma $ and $ \theta' : \delta' \to \delta  $. we have that
 \[            \subst{s \{ \theta, id\}}{x}{[\nu^{\theta}_x] \ract{\vec{t}}{\theta'}} = \subst{s}{\vec{x}}{\vec{t}} \{\theta \otimes \theta' \}              \]
 \end{lemma}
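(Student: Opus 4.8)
The plan is to prove the identity by structural induction on $s$, reading off both sides from the definitions of linear substitution (Figure~\ref{fig:linsub}), the contravariant action (Figure~\ref{fig:ra}), and the covariant action (Figure~\ref{fig:la}), and then reconciling the ground morphisms that appear on the two sides. The conceptual content is that acting with $\theta$ on $s$ and then substituting $\vec{t}$ for $x$ agrees with first substituting and then acting, \emph{provided} the bag substituted for $x$ is first transformed by the covariant correction $[\nu^{\theta}_x]$: the ground morphism $\nu^{\theta}_x$ records exactly how the action of $\theta$ on the variables of $\gamma$ propagates to the type expected at the occurrences of $x$ (so that $x$ acquires type $\tyl^{[\nu^{\theta}_x]}$ in $\ract{s}{\theta,\idm}$), which on the right-hand side is produced automatically by acting with $\theta \otimes \theta'$ on the whole term $\subst{s}{x}{\vec{t}}$ after substitution.

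The base cases are direct. When $s = x$, the context $\gamma$ consists only of variables typed by $\seq{}$, so $\theta$ acts trivially and $\nu^{\theta}_x = \idm$; writing $\vec{t} = \seq{t}$, both sides reduce to $\ract{t}{\theta'}$. When $s = z$ with $z \neq x$, the bag $\vec{t}$ has empty type, linear substitution is the identity on $z$, and both sides equal $\ract{z}{\theta}$.

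For the inductive step I would unfold the three operations and apply the induction hypothesis to the immediate subterms. In the abstraction case $s = \la{y^{g}} p$ the body is handled by the induction hypothesis, after which matching the annotation on the bound variable $y$ --- which the action modifies by its propagation morphism and which substitution modifies by the permutation $\sigma'$ --- is settled by compositionality of the actions (Lemma~\ref{lem:comp}) together with the interchange law $[f](\ract{p}{\theta}) = \ract{([f]p)}{\mu^{f}_p \theta}$; the bag and sequence cases follow immediately by applying the induction hypothesis componentwise. The main obstacle is the application case $s = p\,\vec{q}$: here linear substitution splits $\vec{t}$ as $\vec{t}_0 \oplus \vec{t}_1$ following the multiplicative decomposition $\gamma = \gamma_0 \otimes \gamma_1$, $\tyl = \tyl_0 \oplus \tyl_1$, while the action $\theta$ and the correction $\nu^{\theta}_x$ must be split accordingly, and one must verify that the permutations $\sigma$ and $\tau$ introduced by substitution coincide with the ground morphisms introduced by the action. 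The sub-case where $x$ is the head of the application is the most delicate, since there the covariant propagation of the head annotation to the arguments interacts directly with the production of the $[\nu^{\theta}_x]$ correction; associativity of substitution (Lemma~\ref{lem:subassoc}) and the naturality of the $\tau$ permutations are precisely what make the two bookkeepings agree.
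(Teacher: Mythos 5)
Your proposal follows essentially the same route as the paper's proof: structural induction on $s$, unfolding linear substitution and the contravariant/covariant actions, settling the variable case directly, and closing the abstraction, application, and bag cases by the induction hypothesis on the decomposed subterms, with the ground-morphism bookkeeping reconciled by compositionality of actions. The one divergence is your appeal to associativity of substitution (Lemma~\ref{lem:subassoc}) for the head-variable sub-case of an application: no nested substitution ever arises in this statement, so that lemma is not needed (the head-variable situation is just the variable base case reached through the IH); the paper invokes Lemma~\ref{lem:subassoc} only later, in Lemma~\ref{lem:morpsubs}, where a $\beta$-redex is actually fired under substitution.
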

 \begin{proof}
 By induction on the structure of $s  $. We prove the interesting cases.
 
 If $ s = x $ then $\theta  $ is a sequence of terminal morphisms and  $ \ract{s}{\theta, id} = x$. Then the result is immediate since $\subst{s \{ \theta, id\}}{x}{[\nu^{\theta}_x]\vec{t}} = t $ where we assume that $ \vec{t} = \seq{t} \cdot \vec{t}' $ for some term $ t$ and bag $\vec{t}' $.
 
 If $  s = \la{\overline{y}^{\tysf}} s'  $ we have that 
 \[     \subst{s \{ \theta, id\}}{x}{[\nu^{\theta}_x] \ract{\vec{t}}{\theta'}} =  \la{y^{\nu^{\theta}_y}} \subst{s' \{ \theta,id, id\}}{x}{[\nu^{\theta}_x] \ract{\vec{t}}{\theta',id}}\]
 and
 \[\subst{s}{\vec{x}}{\vec{t}} \{\theta \otimes \theta' \}  = \la{y^{\nu^{(\theta \otimes \theta')}_y}} \subst{s'}{\vec{x}}{\vec{t}} \{(\theta,id) \otimes (\theta',id) \}  \]
 By the IH we have that
 \[ \subst{s' \{ \theta,id, id\}}{x}{[\nu^{\theta}_x] \ract{\vec{t}}{\theta',id}} = \subst{s'}{\vec{x}}{\vec{t}} \{(\theta,id) \otimes (\theta',id) \} \]
 we can then conclude.
 
 If $ s = p \vec{q} $ then  $ \gamma = \gamma_1 \otimes \gamma_2 $ and $ \tyl = \tyl_1 \oplus \tyl_2 $ with $ \gamma_1, x : \tyl_1 \vdash p : \vec{b} \multimap \ty $ and $ \gamma_2, x : \tyl_2 \vdash \vec{q} : \vec{b}  $ , for some contexts $\gamma_1, \gamma_2 $, intersection types $  \tyl_1, \tyl_2, \vec{b}$. By definition we have that 
 \[   \subst{s \{ \theta, id\}}{x}{[\nu^{\theta}_x] \ract{\vec{t}}{\theta'}} = \] \[ \subst{\ract{p}{\theta_1, id}}{x}{  [\nu^{\theta_1}_x]\ract{\vec{t}_1}{\theta'_1}  } \subst{\ract{\vec{q}}{\theta_2}}{x}{  [\nu^{\theta_2}_x]\ract{\vec{t}_}{\theta'_2}  }     \]
 where $ \theta = \theta_1 \otimes \theta_2 $ and $ \theta' = \theta'_2 \otimes \theta'_2 $, the decomposition being the unique one induced by the typing. We then conclude by applying the IH.
  \end{proof}
 
 \begin{lemma}\label{lem:morpsubs}
 Let $ \gamma, x : \tyl \vdash s : \ty $ and $  \delta\vdash_q \vec{q} : \tyl $. The following statements hold.
 \begin{enumerate}
 \item If $ s \to_{\theta, g;f} s'   $ then $ \subst{s}{x}{\vec{q}} \to^\ast_{ \theta';f} u \cong  \subst{\ract{s'}{id; g}}{x}{ [\nu'] \vec{q} } $ s.t. $ \sigma_{\ract{s'}{id; g}, [\nu']\vec{q}} (\nu \theta \otimes \nu'^{\star})  =  \theta' \sigma_{s, \vec{q}}     $.
 \item $ if \vec{q} \to_{\theta; \seq{id;\vec{f}}} \vec{q'} $ then $ \subst{s}{x}{\vec{q}} \to^\ast_{ \theta' ;f} u \cong \subst{ \ract{s}{\seq{id;\vec{f}}}} {x}{\vec{q}'} $ s.t. $  \sigma_{\ract{s}{\seq{id;\vec{f}}} \vec{q}'} (\nu \otimes \theta) = \theta' \sigma_{s, \vec{q}}   $.
 \end{enumerate}
 \end{lemma}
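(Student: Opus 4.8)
The plan is to prove the two statements by a simultaneous induction: statement (1) by induction on the derivation of the reduction step $s \to_{\theta, g; f} s'$, following the rules of Figure~\ref{fig:colla}, and statement (2) by induction on the structure of $s$, the two inductions feeding each other through the contextual cases. In both parts the identification of the reducts up to $\cong$ is comparatively routine; the work lies in propagating the permutations generated by linear substitution and the ground morphisms generated by the actions of type morphisms, and in verifying the two coherence equations on these composites.

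For statement (1), the base cases are the two ground steps. In the exponential ground step the redex is a top-level abstraction $\la{\etax^{\overline{h}}} s_0$ whose non-identity label is fired, retyping the free context (hence the type of $x$) by the propagated ground morphism; I would push the substitution into the body using the defining clause of Figure~\ref{fig:linsub}, then use Lemma~\ref{lem:acto} to commute the contravariant action $\ract{(-)}{id;g}$ past the substitution, absorbing that propagation into the retyping $[\nu']\vec{q}$ of the bag on the right-hand side, and reconcile the resulting composites with Proposition~\ref{lem:comp} together with the interchange law $[f](\ract{s}{\theta}) = \ract{([f]s)}{\mu^f_s\theta}$. In the linear ground step the redex is itself a $\beta$-redex, so reduction is a substitution and the whole clause collapses to the associativity of linear substitution (Lemma~\ref{lem:subassoc}); the equation $\sigma_{\ract{s'}{id;g},[\nu']\vec{q}}(\nu\theta \otimes (\nu')^{\star}) = \theta'\sigma_{s,\vec{q}}$ is then precisely the $\sigma$-bookkeeping that lemma produces. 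Each contextual-closure rule is handled by pushing the substitution into the active subterm, applying the induction hypothesis there, and reassembling; this is exactly where the multi-step $\to^{\ast}$ is needed, since the morphism action on a passive subterm created by the step must itself be normalized, and Lemma~\ref{lem:actred} supplies the required convergence.

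For statement (2), I would induct on $s$, distributing the single bag reduction $\vec{q}\to_{\theta;\seq{id;\vec{f}}}\vec{q'}$ across the occurrences of $x$. In the variable case the reduction is transported directly onto the substituted copy; in the abstraction case both the substitution and the action commute past the binder, again by Lemma~\ref{lem:acto}; and in the application and list cases the bag and the context split multiplicatively, the induction hypothesis applies componentwise, and the pieces recombine. The contravariant action $\ract{s}{\seq{id;\vec{f}}}$ appearing on the left of the isomorphism records exactly the retyping of the occurrences of $x$ forced by $\vec{f}$, which is what matches the retyped reduct $\vec{q'}$ substituted on the right, and the equation $\sigma_{\ract{s}{\seq{id;\vec{f}}}\vec{q}'}(\nu\otimes\theta)=\theta'\sigma_{s,\vec{q}}$ follows by the same reconciliation.

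The main obstacle will be the permutation and ground-morphism bookkeeping in the application cases of both inductions. Because linear substitution is not symmetric, occurrences of free variables are reordered and generate the permutations $\sigma$, while the actions contribute the ground morphisms $\nu,\nu'$; checking the two coherence equations therefore amounts to threading the naturality of the reorganizing permutation $\tau$ of Figure~\ref{fig:linsub} through these composites and matching it against Proposition~\ref{lem:comp}. I expect this reconciliation of $\sigma$'s, $\nu$'s and $\tau$, rather than the identification of the reducts themselves, to carry essentially all of the technical weight of the proof.
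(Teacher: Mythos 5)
Your proposal follows essentially the same route as the paper's own proof: part (1) by induction on the reduction step, dispatching the linear ground step via associativity of substitution (Lemma~\ref{lem:subassoc}) and the exponential ground step via the commutation of actions with substitution (Lemma~\ref{lem:acto}), and part (2) by structural induction on $s$. The supplementary ingredients you cite for the contextual cases and the morphism bookkeeping (Lemma~\ref{lem:actred}, Proposition~\ref{lem:comp}, the interchange law) are consistent with what the paper's sketch leaves implicit.
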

\begin{proof}
\begin{enumerate}
\item By induction on $ s \to_{\theta, g;f} s' $. Let $  s = (\la{\overline{x}}p) \overline{q}  $ and $ s' = \subst{p}{\tysx}{\overline{q}} $. We conclude by applying Lemma   \ref{lem:subassoc}. 

Let $ s = \la{\overline{y}^{\overline{f}}} p $ and $s' = \la{\overline{y}} \ract{p}{id, \overline{f}} $. Then $ \subst{s}{x}{\vec{t}}  =   \la{ \overline{y}^{\sigma \overline{f}}} \subst{p}{x}{\vec{t}} $ and $ \subst{\ract{s'}{id; g}}{x}{ [\nu'] \vec{q} } = \la{\overline{y}^{\sigma'}} \subst{\ract{p}{id, \overline{f}}}{x}{[\nu_x^{\overline{f}}]\vec{t}}  $. we then have that $  \subst{s}{x}{\vec{t}} \to \ract{\la{\overline{y}} \subst{p}{x}{\vec{t}}}{  id, \sigma' \overline{f}  } $. We then conclude by Lemma \ref{lem:acto}.
\item By induction on the structure of $ s$.
\end{enumerate}
\end{proof}

\begin{remark}\label{rem:linsubj}
We observe that given a linear step $ s \toli_{\theta;f} s' $ the morphism $ \theta $ is a ground permutation and $f = id $. Hence, linear reduction almost satisfy subject reduction, up to a permutation of the typing context.
\end{remark}

\begin{lemma}\label{lem:actredlin}
Let $ (\gamma \vdash s : \ty ) \toli_{\sigma; id}(\gamma' \vdash s' : \ty) $ and $ \theta' : \delta \to \gamma $, $ f' : \ty \to b $. The following statements hold.
\begin{enumerate}
\item There exists a morphism $\zeta $ s.t. $   (\gamma^{[\mu^f_s]} \vdash [f]s : b ) \toli_{\zeta; id}(\gamma'^{\mu^{f}_{s'}} \vdash [f]s' : b)  $ and $ \zeta \mu^f_s= \mu^f_{s'} \theta $.
\item There exists a morphism $\zeta $ s.t.  $                    (\delta^{[\nu^{\theta}_s]} \vdash \ract{s}{\theta} : \ty ) \toli_{\zeta; id}(\delta^{[\nu^{\sigma\theta}_{s'}]} \vdash \ract{s'}{\sigma \theta} : \ty)          $ and $\zeta \nu^{\theta}_s = \nu^{\sigma\theta}_{s'} $.
\end{enumerate}
\end{lemma}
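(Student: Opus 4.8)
The plan is to prove both items by induction on the derivation of the linear step $(\gamma \vdash s : \ty) \toli_{\sigma;id}(\gamma' \vdash s' : \ty)$, treating the two statements with the same case analysis. Throughout I would invoke Remark~\ref{rem:linsubj}: the context part $\sigma$ of a linear label is always a ground permutation and the type part is the identity, so the reducts retain the types $\ty$ and $b$, the morphisms $\zeta$ I must produce are again ground permutations, and $\sigma\theta : \delta \to \gamma'$ is well typed. With this in hand, item~(1) is the easy one. In $\eta$-long form a term of non-atomic (arrow) type is always an abstraction, and a linear step between arrow-typed terms is a contextual closure under $\lambda$; by the covariant abstraction clause of Figure~\ref{fig:la} the action $[f]$ merely post-composes the bound-variable morphism, leaving the body and the context untouched, so $\mu^f_s = \mu^f_{s'} = \id$. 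Hence $[f]s \toli [f]s'$ follows by re-applying the abstraction rule of Figure~\ref{fig:colla} to the body's reduction, and the coherence $\zeta\mu^f_s = \mu^f_{s'}\sigma$ reduces to $\sigma = \sigma$.

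The substantive part is item~(2). The base case is the linear ground step $s = (\la{\overline{x}^\tau} p)\,\overline{t} \toli \subst{p}{\overline{x}}{[\tau]\overline{t}} = s'$. Here I would decompose $\theta = \theta_0 \otimes \theta_1$ along the application, so that by the redex clause of Figure~\ref{fig:ra} the term $\ract{s}{\theta}$ is again a redex whose linear contraction is a substitution. The heart of the argument is to identify this contraction with $\ract{s'}{\sigma\theta}$ up to a ground permutation, which is exactly the content of Lemma~\ref{lem:acto} (the commutation of the contravariant action with linear substitution), applied with the body $p$, the argument $\overline{t}$, and the morphisms $\theta_0,\theta_1$. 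The required coherence $\zeta\nu^{\theta}_s = \nu^{\sigma\theta}_{s'}$ is then obtained by matching the substitution permutations $\sigma_{-,-}$ generated on the two sides, using the permutation identity of Lemma~\ref{lem:subassoc} together with the naturality of the ground morphisms from Section~\ref{sec:prelim}.

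For the remaining contextual-closure cases of item~(2) — the two application rules, the sequence rule, and the bag rule — the action $\ract{\cdot}{\theta}$ distributes over the corresponding constructor according to Figure~\ref{fig:ra}, so the induction hypothesis applies to the active subterm and yields both the inner linear step and its coherence square; re-applying the same contextual rule of Figure~\ref{fig:colla} produces $\ract{s}{\theta} \toli \ract{s'}{\sigma\theta}$, and the global coherence is assembled from the inner one through the compositionality of actions (Proposition~\ref{lem:comp}) and the monoidality of the tensor product of contexts, exactly as in the exponential analogue Lemma~\ref{lem:actred}. The main obstacle is thus localized to the base case of item~(2), and within it to the bookkeeping of ground permutations rather than the existence of the reduction: one must verify that the substitution permutation produced when contracting $\ract{s}{\theta}$ agrees, after reassociation, with the one obtained by forming $\ract{s'}{\sigma\theta}$. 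Once Lemma~\ref{lem:acto} and the permutation identity of Lemma~\ref{lem:subassoc} are applied, this coherence follows formally, and all other cases are routine by comparison.
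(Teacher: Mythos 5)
Your treatment of item (1) is fine, and your identification of Lemma~\ref{lem:acto} as the key identity for the base case of item (2) is the right instinct. But the proposal has a genuine gap, and it sits exactly where you claim there is none: you assert that in the base case ``$\ract{s}{\theta}$ is again a redex whose linear contraction is a substitution'', and that only ``the bookkeeping of ground permutations rather than the existence of the reduction'' remains. In the cartesian calculus this is false. By the abstraction clause of Figure~\ref{fig:ra}, acting on the redex $(\la{\overline{x}^{\tau}}p)\,\overline{t}$ turns the label $\tau$ into $\nu_x\tau$, where $\nu_x$ is the ground component at $\overline{x}$ of the action on the body; $\nu_x$ need not be a permutation, because a component of $\theta$ at a free head variable can covariantly duplicate (or erase) arguments containing occurrences of $\overline{x}$. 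Concretely, take $s=(\la{y^{\idm}}z\seq{y})\seq{w}\toli z\seq{w}=s'$ and let $\theta$ assign $\seq{c_o\multimap o}$ to $z$ and the identity to $w$. Then $\ract{s}{\theta}=(\la{y^{c_o}}z\seq{y,y})\seq{w}$, whose label $c_o$ is a contraction, not a permutation, so the linear ground step cannot fire: $\ract{s}{\theta}$ has no linear reduct at all, and reaching $\ract{s'}{\sigma\theta}=z\seq{w,w}$ requires an exponential step followed by a linear one, namely $(\la{y^{c_o}}z\seq{y,y})\seq{w}\toexp(\la{y}z\seq{y,y})\seq{w,w}\toli z\seq{w,w}$. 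The same phenomenon also breaks the contextual cases you call routine: in the head-variable application clause of Figure~\ref{fig:ra} the covariant action $[\overline{f}]$ can duplicate a bag containing the linear redex, after which no single linear step can reach $\ract{s'}{\sigma\theta}$.

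So the obstacle is not permutation bookkeeping but the existence of the claimed step, and it cannot be repaired inside your proof: the statement is false as written for general cartesian $\theta$ (it holds when every component of $\theta$ is permutation-induced, but that restriction is useless for the Commutation Theorem, where the lemma is invoked with $\theta=(\idm,\overline{f})$ and $\overline{f}$ an arbitrary eliminated label, typically a contraction). This is precisely the phenomenon the paper itself points out in the preamble to Lemma~\ref{lem:actred}, which is why that exponential analogue is stated as a convergence of two reduction sequences rather than as a one-step transport; note also that the paper states the present lemma without any proof. A defensible version of item (2) must be weakened accordingly --- for instance to $\ract{s}{\theta}\toexp^{\ast}\toli^{\ast}u$ with $u$ isomorphic to $\ract{s'}{\sigma\theta}$ and coherent ground morphisms --- and your induction, with Lemma~\ref{lem:acto} in the base case, would then have to produce and propagate that exponential prefix through the contextual cases.
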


	\subsection*{Reducibility}

\begin{lemma}
$\mathsf{SN} $ is saturated.
\end{lemma}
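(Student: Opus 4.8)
The plan is to unfold the saturation condition for $X=\mathsf{SN}$ and to show directly that, under the stated hypotheses, the term $(\la{\overline x^{\overline g}}s)\overline q$ admits no infinite $\toexp$-sequence. It is convenient to first pass to the slightly stronger hypothesis in which the condition ranges over all $\toexp^{\ast}$-reducts $\overline q\toexp^{\ast}\overline{q'}$ (with accumulated label $\overline f$) rather than only one-step ones: its reflexive instance ($\overline f=\idm$) yields $\actr{s}{\idm;\overline g}\in\mathsf{SN}$, and this multi-step form is exactly the one discharged in the reducibility lemma, where $\actr{s}{\theta}\in\mathsf{Red}\subseteq\mathsf{SN}$ is available for \emph{every} $\theta$ through the mutual induction.

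First I would record two facts about the actions of morphisms. By inspection of the contravariant action (Figure~\ref{fig:ra}), $\actr{s}{\overline g}$ is obtained from $s$ only by reindexing — permuting, duplicating, or leaving untouched — the inputs of the occurrences of the bound variables, and never by erasing a subterm; hence every $\toexp$-redex of $s$ persists (in one or more copies) in $\actr{s}{\overline g}$, so that the contravariant action \emph{reflects} strong normalization and $\actr{s}{\overline g}\in\mathsf{SN}$ gives $s\in\mathsf{SN}$. I write $\mathsf{h}(t)$ for the finite maximal $\toexp$-length of an SN term. Dually, the covariant action by a ground morphism only duplicates and permutes entries, so it \emph{preserves} $\mathsf{SN}$; in particular $[\nu_x]\overline{q'}\in\mathsf{SN}$ whenever $\overline{q'}\in\mathsf{SN}$.

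Next I would classify the one-step $\toexp$-reducts of a term of the shape $(\la{\overline x^{\overline g}}s)\overline q$ using the contextual rules of Figure~\ref{fig:colla}. There are exactly three kinds: an \emph{argument step} $\overline q\toexp\overline{q'}$, which leaves the body term $s$ untouched and merely composes a morphism onto the label, giving $(\la{\overline x^{\overline{gf}}}s)\overline{q'}$; a \emph{body step} $s\toexp s'$, which leaves the argument untouched (the function-reduction rule fires with identity type-label $\overline b\multimap o$, so $[\idm]\overline q=\overline q$) and only composes a morphism onto the label on the body side; and the \emph{head firing}, available when the label is not the identity, which resolves the whole label into the body via the contravariant action and duplicates the argument accordingly, producing $(\la{\overline x^{\idm}}\actr{s}{\overline g})\,[\nu_x]\overline q$ with an identity label. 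Note that head firing yields precisely the resolved body $\actr{s}{\overline g}$, which is SN by the reflexive hypothesis, while $[\nu_x]\overline q\in\mathsf{SN}$ by the preservation fact above.

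The hard part, and the main obstacle, is that head firing both duplicates the argument and, since an inner body firing may duplicate an occurrence of $\overline x$ and thereby re-create a non-identity head label, can be \emph{re-enabled} later; consequently neither the term size nor a naive sum of SN-lengths decreases along a reduction. The resolution I would pursue is to observe that every fired body arising along a reduction is a $\toexp^{\ast}$-reduct of some $\actr{s}{\overline{gf}}$: argument steps accumulate labels coming from reductions of $\overline q$, body steps are simulated inside the resolved body by the reflection fact, and the two are glued together by compositionality (Proposition~\ref{lem:comp}) and the commutation of the two actions (so that each fired body coincides, up to these identities, with a reduct of a single $\actr{s}{\overline{gf}}$). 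Hence every fired body is SN by the multi-step hypothesis, and the number of copies into which $\overline q$ is ever split is bounded by the finite reduction behaviour of the resolved body. Assembling these bounds should give a well-founded measure — lexicographically, the maximal reduction length of the relevant resolved body $\actr{s}{\overline{gf}}$ together with $\mathsf{h}(\overline q)$ scaled by the (bounded) duplication factor — that strictly decreases at every step, whence no infinite $\toexp$-sequence exists and $(\la{\overline x^{\overline g}}s)\overline q\in\mathsf{SN}$. I expect the delicate point to be exactly this bookkeeping step that identifies each fired body with a reduct of one fixed $\actr{s}{\overline{gf}}$, since it is what tames the otherwise unbounded duplication and is the reason the combinatorial content is carried by the reducibility hypothesis rather than by a plain size argument.
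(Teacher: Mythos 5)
Your skeleton --- reflection of $\mathsf{SN}$ along the contravariant action, preservation under covariant ground actions, and the three-way classification of the reducts of $(\la{\overline{x}^{\overline{g}}}s)\overline{q}$ into argument steps, body steps and head firing --- is the same as the paper's, and your reflexive/multi-step reading of the hypothesis matches the way the lemma is actually invoked in the reducibility argument. There are, however, two genuine gaps. The first is your justification of reflection. You claim the contravariant action proceeds ``only by reindexing \ldots never by erasing a subterm'', so that every $\toexp$-redex of $s$ persists in $\ract{s}{\idm;\overline{g}}$. In the cartesian calculus this is false: labels may contain nested weakenings (the paper's own example is $\la{x^{\mathsf{T}_o}}z$), and in the variable-application case of Figure~\ref{fig:ra} the contravariant action applies a covariant action $[\overline{h}]$ to the arguments of each affected occurrence of the variable; when the index map underlying $\overline{h}$ is not surjective, whole argument subterms --- and any redexes inside them --- are deleted. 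Hence reflection cannot be read off the definition of the action by inspection; it is precisely the content of Lemma~\ref{lem:actred}, proved by induction on the reduction step, and that lemma is what the paper cites at the corresponding point of its proof.

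The second gap is that your termination argument is never actually carried out. You correctly isolate the difficulty that the paper's one-line conclusion passes over --- head firing duplicates the argument via $[\nu_x]$, and later body steps can restore a non-identity label, so no naive size or height measure decreases --- but your resolution remains declarative: the identification of every fired body with a $\toexp^{\ast}$-reduct of one fixed $\ract{s}{\idm;\overline{gf}}$, the bound on the duplication factor, and the lexicographic measure are all left at the level of ``should give'' and ``I expect''. That bookkeeping is where the mathematical content of the lemma lies, so as written the proposal establishes only the case analysis. For comparison, the paper closes the argument without any measure: from the hypothesis it deduces $s\in\mathsf{SN}$ by reflection, concludes $\la{\overline{x}^{\overline{g}}}s\in\mathsf{SN}$, and then disposes of the interaction between the two components by appealing to the hypothesis $\ract{s}{\idm;\overline{gf}}\in\mathsf{SN}$ for the labels $\overline{f}$ generated by reductions of $\overline{q}$, with Lemma~\ref{lem:actred} again mediating between reducts of $s$ and reducts of its images under the action. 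Until you either prove your two facts from Lemma~\ref{lem:actred} (rather than from the false no-erasure claim) and complete the measure, or adopt the paper's shorter route, the lemma is not established.
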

\begin{proof}
We have to prove that given $  \ract{s}{id; \overline{fg}} \in \mathsf{SN}   $ with $ \overline{q} \in \mathsf{SN} $ and $ \overline{q} \to_{\theta; \overline{f}} \overline{q'} $ then $ (\la{\tysx^{\tysg}} s) \overline{q} \in \mathsf{SN} $. First, we observe that if $ \ract{s}{id; \overline{fg}} \in \mathsf{SN}  $, then, in particular $ s \in \mathsf{SN} $. Otherwise, by Lemma \ref{lem:actred}, from an infinite chain for $s $ we could build one for $ \ract{s}{id; \overline{fg}} $. From this we can infer that $ \la{ \tysx^{\tysg} } s \in \mathsf{SN} $. We observe that the only other possible reductions for $ (\la{\tysx^{\tysg}} s) \overline{q}$ must come from $\overline{q} $. By hypothesis, we can then conclude.
\end{proof}
\begin{lemma}\label{lem:redusn}
We have that $  \mathsf{Red}(\ty) \subseteq \mathsf{SN} .  $
\end{lemma}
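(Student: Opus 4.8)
The plan is to prove the stronger statement by a single induction on the structure of the (possibly composite) type, establishing simultaneously two properties: (CR1) that $\mathsf{Red}(\ty)\subseteq\mathsf{SN}$, together with the componentwise analogues for bags $\mathsf{Red}(\tyl)$ and term-sequences $\mathsf{Red}(\overline{\ty})$ (where a bag or sequence is declared to be in $\mathsf{SN}$ when all of its components are); and (CR0) that each of $\mathsf{Red}(\ty)$, $\mathsf{Red}(\tyl)$, $\mathsf{Red}(\overline{\ty})$ is \emph{nonempty}, a canonical inhabitant being the $\eta$-long form of a fresh variable. Property (CR0) is exactly what makes the arrow case of (CR1) go through, since it guarantees the existence of a reducible argument to feed to a reducible function.

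First I would dispatch the easy cases. For the atomic type $\mathsf{Red}(o)=\mathsf{SN}$ by definition, so (CR1) is immediate, and for (CR0) the $\eta$-long variable $x$ (with empty argument sequence) is $\toexp$-normal, hence in $\mathsf{SN}$. For a bag $\tyl=\seqdots{\ty}{1}{k}$ and a sequence $\overline{\ty}=(\tyl_1\cdots\tyl_n)$ reducibility is defined componentwise, so both (CR1) and (CR0) follow at once from the induction hypotheses applied to the components $\ty_i$ and $\tyl_j$.

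The only substantial case is the arrow type $\overline{\ty}\multimap o$. For (CR1), take $s\in\mathsf{Red}(\overline{\ty}\multimap o)$; by (CR0) at the smaller type $\overline{\ty}$ there is some $\overline{q}\in\mathsf{Red}(\overline{\ty})$, whence $s\,\overline{q}\in\mathsf{Red}(o)=\mathsf{SN}$. Using the contextual rule for application in Figure~\ref{fig:colla}, every $\toexp$-step out of $s$ lifts to a $\toexp$-step out of $s\,\overline{q}$ (the argument may be acted upon by the recorded morphism, but a reduction step remains), so an infinite reduction from $s$ would yield one from $s\,\overline{q}$, contradicting $s\,\overline{q}\in\mathsf{SN}$; hence $s\in\mathsf{SN}$. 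For (CR0) the canonical inhabitant is the $\eta$-expansion $\xi=\la{(x_1^{f_1}\cdots x_n^{f_n})}(x\,\overline{r})$ of a fresh variable $x$, where the $\overline{r}$ are the $\eta$-long forms supplied by the induction hypothesis (CR0) at the component types. To check $\xi\in\mathsf{Red}(\overline{\ty}\multimap o)$ I would apply it to an arbitrary $\overline{q}\in\mathsf{Red}(\overline{\ty})$ and verify $\xi\,\overline{q}\in\mathsf{SN}$: the head of $\xi$ carries no genuine contraction (the $\eta$-expansion uses each bound slot linearly, so the $f_i$ are identities up to permutation), so $\xi\,\overline{q}$ has no exponential redex at its head, and all of its proper subterms are in $\mathsf{SN}$ by (CR1) applied to the components of $\overline{q}$ and to the $\overline{r}$; hence every $\toexp$-reduction is confined to those subterms and terminates.

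The hard part will be the arrow case of (CR0): one must verify that the $\eta$-expansion is correctly typed in $\eta$-long form and that feeding it a reducible $\overline{q}$ genuinely produces no exponential head redex, which hinges on the labels $f_i$ being contraction-free. Should a contraction nonetheless be unavoidable in the chosen canonical inhabitant, the anti-reduction needed to re-close $\mathsf{SN}$ under the created redex is precisely Proposition~\ref{prop:sn_sat} (saturation of $\mathsf{SN}$), with Lemma~\ref{lem:actred} controlling how the type-morphism labels propagate through the step; together these force the redex to be $\toexp$-strongly normalizing whenever its reduct is. The remaining bookkeeping — tracking contexts and ground morphisms along the lifted reductions — is routine.
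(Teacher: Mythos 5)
Your inclusion argument (your CR1) is exactly the paper's: the same induction on the structure of the type, with the arrow case concluding $s\in\mathsf{SN}$ from $s\,\overline{q}\in\mathsf{Red}(o)=\mathsf{SN}$ by lifting any infinite $\toexp$-chain of $s$ to one of $s\,\overline{q}$ via the contextual application rule, and the list/sequence cases being componentwise. You are also right that this step silently needs $\mathsf{Red}(\overline{\ty})\neq\emptyset$: the paper's two-line proof never establishes inhabitation, so adding your (CR0) to the induction is a legitimate strengthening rather than a detour.

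The genuine gap is in your (CR0) witness at arrow type. The claim that ``every $\toexp$-reduction out of $\xi\,\overline{q}$ is confined to the subterms'' is false in this calculus: by the contextual rule for application in Figure~\ref{fig:colla}, a step $\overline{q}\to_{\theta;\overline{f}}\overline{q}'$ in argument position rewrites the whole term to $([\overline{f}\multimap o]\xi)\,\overline{q}'$, and by the abstraction rule of Figure~\ref{fig:la} the covariant action composes $\overline{f}$ onto the binder labels, so $[\overline{f}\multimap o]\xi=\la{\overline{x}^{\overline{f}}}(x\,\overline{r})$. As soon as $\overline{f}$ is not an identity, the ground exponential step on this abstraction is enabled, and firing it pushes the contravariant action into the body, where it can re-label the inner $\eta$-expansions and enable further ground steps; so fresh exponential redexes do appear. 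Your fallback does not close this as stated: saturation (Proposition~\ref{prop:sn_sat}) would require $\ract{(x\,\overline{r})}{\idm;\overline{f}}\in\mathsf{SN}$ for every such $\overline{f}$, i.e.\ that morphism actions \emph{preserve} $\mathsf{SN}$; but Lemma~\ref{lem:actred} only yields the opposite direction (an infinite chain from $s$ induces one from $\ract{s}{\theta}$), and ``actions preserve reducibility'' is precisely the lemma the paper proves \emph{after} this one, whose proof itself uses $\mathsf{Red}(\ty)\subseteq\mathsf{SN}$ --- invoking it here would be circular. The repair is simple and avoids all of this: take a bare ($\eta$-long) variable $x$ as the inhabitant of $\mathsf{Red}(\overline{\ty}\multimap o)$. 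The covariant action on a head variable merely re-types it ($[\overline{f}\multimap o]\,x = x$, first rule of Figure~\ref{fig:la}) and never creates a redex, so every $\toexp$-step from $x\,\overline{q}$ is a step inside $\overline{q}$, and $x\,\overline{q}\in\mathsf{SN}$ follows directly from the componentwise induction hypothesis that the components of $\overline{q}$ lie in $\mathsf{SN}$.
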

\begin{proof}
By induction on the structure of $\ty $. If $ \ty = o$ then $\mathsf{Red}(\ty) = \mathsf{SN} $

If $ \ty = \tysa \multimap o $ then \[ \mathsf{Red}(\ty) = \{   s \mid \text{ for all } \overline{q} \in \mathsf{Red}(\tysa), s \overline{q} \in \mathsf{Red}(o)   \}  \]
Hence, since $ s \overline{q} \in \mathsf{Red}(o) = \mathsf{SN} $, in particular $ s \in \mathsf{SN} $. The other cases are direct corollary of the IH.
\end{proof}

\begin{lemma} 
 Let $ \gamma \vdash s : \ty $ and $ \theta : \delta \to \gamma, f : \ty \to b . $ The following statements hold. 
 \begin{enumerate}
  \item $  [f]s \in \mathsf{Red}(b) . $
 \item $    \ract{s}{\theta} \in \mathsf{Red}(\ty) .   $
 \end{enumerate} 
\end{lemma}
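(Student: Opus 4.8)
The plan is to prove both statements simultaneously by mutual induction on the structure of the $\eta$-long term $s$, treating the covariant action $[f]s$ and the contravariant action $\ract{s}{\theta}$ in parallel and extending the two claims to bags and term sequences through the corresponding clauses of $\mathsf{Red}$. The three ingredients driving the argument are: the inclusion $\mathsf{Red}(\ty)\subseteq\mathsf{SN}$ (Lemma~\ref{lem:redusn}), which lets me pass from reducibility of arguments to their strong normalizability; the saturation of $\mathsf{SN}$ (Proposition~\ref{prop:sn_sat}); and the compositionality and interchange laws for the actions (Proposition~\ref{lem:comp} and the observation following it), which let me collapse an iterated action into a single action on a structurally smaller term so that the induction hypothesis applies.

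The head-variable case $s = x\overline{q}$ is the base case. Here $\ty = o$, so in part~(1) necessarily $f = \idm_o$ and $[f]s = s$, while in part~(2) the contravariant action rule of Figure~\ref{fig:ra} yields a term of the same head-variable shape, namely $x[\overline{f}](\ract{\overline{q}}{\theta'})$. In either case the head is a variable, so the only exponential redexes lie inside the argument sequence; by the induction hypothesis (parts~(1) and~(2) applied to the components of $\overline{q}$, combined through the interchange law) the resulting sequence is reducible, hence in $\mathsf{SN}$ by Lemma~\ref{lem:redusn}, and therefore so is the whole term. This gives membership in $\mathsf{Red}(o) = \mathsf{SN}$.

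The abstraction and redex cases are the heart of the proof and both go through saturation. For an abstraction $s = \la{\overline{x}^{\overline{f}}}s'$ of type $\overline{\ty}\multimap o$, the action again produces an abstraction (precomposing the label in the covariant case, and prefixing the induced ground morphism $\nu_x$ in the contravariant case, with body an action on $s'$). To show this lies in $\mathsf{Red}(\overline{b}\multimap o)$ I fix $\overline{q}\in\mathsf{Red}(\overline{b})$ and must establish $(\text{action of }s)\,\overline{q}\in\mathsf{SN}$. Since linear firing is excluded from $\toexp$, this application is exactly of the shape to which Proposition~\ref{prop:sn_sat} applies: $\overline{q}\in\mathsf{SN}$ by Lemma~\ref{lem:redusn}, and for every reduct $\overline{q}\to_{\theta;\overline{g}}\overline{q'}$ the saturation condition asks for $\ract{(\text{body})}{\idm;(\text{label})\,\overline{g}}\in\mathsf{SN}$. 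By compositionality this iterated contravariant action equals a single action $\ract{s'}{\theta''}$ on the smaller body $s'$, so the induction hypothesis for part~(2) delivers $\ract{s'}{\theta''}\in\mathsf{Red}(o)=\mathsf{SN}$. The redex case $s = (\la{\overline{x}^{\overline{f}}}s')\overline{q}$ is handled the same way: the action distributes over the application (Figure~\ref{fig:ra}), turning the head into an abstraction-action and the argument into $\ract{\overline{q}}{\theta_1}$, which is reducible by the induction hypothesis, so that saturation together with the induction hypothesis on $s'$ closes the case.

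The main obstacle is the morphism bookkeeping in the abstraction case: one must verify that the composite label $(\nu_x\overline{f})\,\overline{g}$ appearing in the saturation condition, together with the outer context morphism $\theta$ (or $\mu^f_s$), really assembles into a single well-typed context morphism $\theta''$ on the body $s'$, so that the compositionality law of Proposition~\ref{lem:comp} applies and the induction hypothesis can be invoked. This is precisely where the ground morphisms $\nu$ recorded by the actions must be tracked carefully, and where the interchange of covariant and contravariant actions is used to align the two halves of the mutual induction. Once this bookkeeping is in place the remaining verifications are routine, and Adequacy follows by specializing to $\theta = \idm_\gamma$ and $f = \idm_\ty$.
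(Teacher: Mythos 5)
Your proposal is correct and takes essentially the same approach as the paper's own proof: both proceed by mutual induction on the structure of the term, using $\mathsf{Red}(\ty)\subseteq\mathsf{SN}$ in the head-variable case, saturation of $\mathsf{SN}$ together with compositionality of the actions (to collapse the composed labels into a single action on the body, so the induction hypothesis applies) in the abstraction case, and the induction hypotheses on head and argument in the redex case. The only negligible difference is that you close the redex case by invoking saturation directly, whereas the paper applies the defining clause of $\mathsf{Red}$ at arrow type to the identity-acted head and argument; the two amount to the same argument.
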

\begin{proof}
We give the proof for the interesting cases.
\begin{enumerate}
\item By induction on $ s$. If $  s  = x \overline{q} $, with  $  x : \seq{\tysa \multimap o} \vdash_v x : \tysa \multimap o $ and $  \delta \vdash \overline{q} : \overline{\ty} $ then $ f = id_o $ and we shall prove that $ x \overline{q} \in \mathsf{Red}(o) $. By IH we have that $  [\tysf]\overline{q} \in \mathsf{Red}(\overline{b}) $ for any $ \tysf : \tysa \to \overline{b} $. Hence, in particular, $  [\tysa] \overline{q} \in \mathsf{Red}(\overline{\ty})  $ and then by Lemma\ref{lem:redusn}, $\overline{q} \in \mathsf{SN}  $. Then we can conclude, since if $\overline{q}  $ is strongly normalizing, then also $ x \overline{q}$ is. 

Let $ s = \la{\tysx^{\tysg}} p $, with $ \tysg : \overline{a} \to \overline{c}  $ and $ \gamma, \tysx : \overline{c} \vdash p : o $. Then $  f  = \tysf \multimap o $ for some sequence of list morphisms $ \overline{f} $ and we need to prove that $  \la{\tysx^{ \overline{gf}  }} p \in \mathsf{Red}( \overline{b} \multimap o ) $, \textit{i.e.}, for any $  \overline{q} \in \mathsf{Red}(\overline{b}) $, we have that $ (\la{\tysx^{ \overline{gf}  }} p) \overline{q} \in \mathsf{Red}(o) $. By hypothesis we have that $ \ract{s}{id; \theta} \in \mathsf{Red}(o) $ for any morphism $\theta $ and by Lemma \ref{lem:redusn} we have that $  \overline{q} \in \mathsf{SN} $. Hence, by saturation, we can conclude.

Let $ s  = ( \la{\tysx^{\tysg}} p ) \overline{q} $ with $ \gamma, x : \overline{b} \vdash p : o $, $ \delta \vdash \overline{q} : \overline{a}$ and $ \tysg : \overline{\ty} \to \overline{b}$. Then $ f  = id_o $ and we need to prove that $ ( \la{\tysx^{\tysf}} p ) \overline{q}  \in \mathsf{SN}$. By IH we have that $ [\overline{\ty} \multimap o] \la{\tysx^{\tysg}} p = \la{\tysx^{\tysg}} p \in \mathsf{Red}(\overline{\ty} \multimap o) $ and $ [\overline{\ty}] \overline{q} = \overline{q} \in \mathsf{Red}(\overline{\ty}) $. Then, by definition, $s \in \mathsf{Red}(o) $.
\item  By induction on $ s$, exploiting the former point of this lemma. If $     s  = x \overline{q}  $ with  $  x : \seq{\tysa \multimap o} \vdash_v x : \tysa \multimap o $ and $  \delta \vdash \overline{q} : \overline{\ty} $ we have that $ \theta = \seq{},\dots, \seq{\overline{f} \multimap o}, \dots, \seq{} $ with $\tysf : \overline{\ty} \to \overline{b} $. By hypothesis we have that $  [\overline{f}] \overline{q} \in \mathsf{Red}(\overline{b})  $. By Lemma \ref{lem:redusn}, we have that  $ [\overline{f}] \overline{q}  \in \mathsf{SN}$. Then we can conclude, since if $\overline{q}  $ is strongly normalizing, then also $ x \overline{q}$ is. 
\end{enumerate}
\end{proof}

\subsection*{Isomorphism and Commutation}

\begin{lemma}\label{lem:isomone}
Let $ \gamma \vdash s : \ty $ and $  \theta : \delta \to \gamma$, $f : \ty \to b $. The following statements hold.
\begin{enumerate}
 \item If $  s\cong s'$ then $ [f]s \cong [f] s' $. 
\item If $s \cong_{\sigma} s'$ then $  \ract{s}{\theta} \cong \ract{s'}{\sigma\theta} $.
\end{enumerate}
\end{lemma}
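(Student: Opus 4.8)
The plan is to establish both statements by a single simultaneous induction on the derivation of $s \cong s'$, proving (1) and (2) together. Mutual induction is unavoidable: by Figure~\ref{fig:ra} the contravariant action on a variable--application unfolds as $\ract{(x\overline{q})}{\eta \otimes \theta} = x\,[\overline{f}](\ract{\overline{q}}{\theta})$, so establishing (2) on such a term needs (2) on the argument $\overline{q}$ together with (1) to transport the covariant action $[\overline{f}]$ across the resulting isomorphism. Throughout I will use compositionality of the actions (Proposition~\ref{lem:comp}), the interchange law $[f](\ract{s}{\theta}) = \ract{([f]s)}{\mu^{f}_{s}\theta}$, and the naturality of the ground morphisms $\alpha_{\tyl}$; the decisive consequence of naturality is that a ground \emph{permutation} transports through any correction $\nu$ or $\mu$ to another permutation, never creating a contraction. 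As a preliminary I would record a \emph{permutation lemma}: composing the parameter of a contravariant action with a ground permutation, on either side, changes the resulting term only up to a context permutation and hence up to $\cong$; this is proved by a separate induction on the term using only Proposition~\ref{lem:comp}.

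The routine cases are as follows. For (1), any term of atomic type $o$ (the applications $x\overline{q}$ and $(\la{\overline{x}^{\overline{f}}}s)\overline{q}$) forces $f = \idm_o$, so the covariant action is the identity and $[f]s = s \cong s' = [f]s'$; a variable value is retyped uniformly; an abstraction $\la{\overline{x}^{\overline{g}}}p$ is untouched in its body, the action only postcomposing its label, so the very premise $p \cong p'$ re-derives the isomorphism through the abstraction rule of Figure~\ref{fig:iso}, using associativity $\overline{(\sigma g)f} = \overline{\sigma(gf)}$; bags and sequences distribute (reindexed by the ground permutation $\alpha$ in the bag case) and close under the componentwise inductive hypothesis. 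For (2), a variable value is immediate since the permutation on a singleton declaration is the identity, whence $\sigma\theta = \theta$; a variable--application splits $\theta$ along the context and---because the head's declaration is a singleton and so carries a trivial permutation---closes by applying (2) to $\overline{q}$ and then (1) to push $[\overline{f}]$ through. The permutation rule of Figure~\ref{fig:iso}, where $s' = s$ and $\sigma$ permutes only the free context, is an instance of the permutation lemma; for (1) it follows likewise from naturality of $\mu^{f}_{s}$.

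The main obstacle is the abstraction \emph{congruence} case of (2). Here $s = \la{\overline{y}^{\overline{g}}}p$ and, from a premise $p \cong_{\sigma,\tau} p'$ carrying a \emph{nontrivial} permutation $\tau$ of the bound-variable type, $s' = \la{\overline{y}^{\tau\overline{g}}}p'$. Unfolding the contravariant actions (Figure~\ref{fig:ra}) gives bodies $\ract{p}{\theta,\idm}$ and $\ract{p'}{\sigma\theta,\idm}$, whereas the inductive hypothesis on $p$, applied with the extension $(\theta,\idm)$, yields $\ract{p}{\theta,\idm} \cong \ract{p'}{\sigma\theta,\tau}$---the wrong right-hand side. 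Reconciling $\ract{p'}{\sigma\theta,\tau}$ with $\ract{p'}{\sigma\theta,\idm}$ is the crux: since $(\sigma\theta,\tau) = (\sigma\theta,\idm)\circ(\idm,\tau)$ with $(\idm,\tau)$ a ground permutation, Proposition~\ref{lem:comp} rewrites $\ract{p'}{\sigma\theta,\tau} = \ract{\ract{p'}{\sigma\theta,\idm}}{\,\nu\,(\idm,\tau)}$, and by naturality the transported morphism $\nu\,(\idm,\tau)$ is again a permutation---so the contraction content produced by $\theta$ appears identically on both sides and cancels. The permutation lemma then gives $\ract{p'}{\sigma\theta,\tau} \cong \ract{p'}{\sigma\theta,\idm}$, which composed with the inductive isomorphism supplies the required premise; a final check that the updated labels $\nu_{y}\overline{g}$ and $\nu'_{y}\tau\overline{g}$ agree up to this permutation---again by compositionality and naturality---closes the case, and with it the induction.
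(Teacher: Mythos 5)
Your proposal is correct and takes essentially the same route as the paper, whose entire proof of this lemma is the single line ``By induction on $s$'': a simultaneous structural induction on the two statements, relying on compositionality of the actions. Your plan merely makes explicit what that one-liner leaves implicit --- the mutual dependence of (1) and (2) arising from the variable-head application case of the contravariant action, and the absorption of the bound-variable permutation in the abstraction case via the permutation rule of Figure~\ref{fig:iso}.
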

\begin{proof}
By induction on $s $. 
\end{proof}

\begin{lemma}
Let $ s\cong s' $ and $   s \toexp p   $ (resp. $ s \toli p$). Then there exists a term $ p' $ s.t. $ s' \toexp p'  $ (resp. $s' \toli p' $) and $ p \cong p' $.
\end{lemma}
\begin{proof}
By induction on $s \toexp p $ (resp. $ s \toli t $). We prove the interesting cases. Let $  s = \la{\tysx^{\tysf}} t  $ and $  p = \la{\tysx} \ract{t}{id , \tysf} $. We have that $  s \cong s'$ then $ s' = \la{\tysx^{\sigma \tysf}} t' $ with $ t \cong t' $ and $ \sigma$ permutation. Then we apply Lemma \ref{lem:isomone} and conclude. Let $ s = ( \la{\tysx^{\tysf}}t) \overline{q} $ and $ p = (\la{\tysx^{\overline{fg}}}t) \overline{q'}  $ with $ \overline{q} \toexp_{\theta; \overline{g}} \overline{q'} $. Since $ s \cong s' $ then either $   s ' = (\la{\tysx^{\sigma\tysf}}t') \overline{q} $ with $ t \cong t' $ or $ s' = (\la{\tysx^{\tysf}}t) \overline{q''} $ with $ \overline{q} \cong \overline{q''} $. In both cases we conclude by applying the IH and then contextuality of the reduction.
 \end{proof}

\begin{lemma}
Let $ \gamma, x : \tyl \vdash s : b $ and $ \delta\vdash \vec{t} : \tyl $ with $   \subst{s}{x}{\vec{t}} \toexp u    $. Then there exist $ p, \vec{q} $ and morphism $  g $ s.t. $ (\la{x} s) \vec{t}  \toexp^\ast   (\la{x^g}  p) \vec{q}  $ and  $ \subst{\ract{p}{g}}{x}{[\nu^g]\vec{q}} \cong u $.
 \end{lemma}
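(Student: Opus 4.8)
The plan is to trace the fired exponential redex in the step $\subst{s}{x}{\vec{t}} \toexp u$ back through the linear substitution. Since substitution is \emph{linear} (Figure~\ref{fig:linsub}), each occurrence of $x$ in $s$ is replaced by exactly one term of the bag $\vec{t}$, so every binder surviving in $\subst{s}{x}{\vec{t}}$ comes either from a binder of $s$ distinct from $x$, or from strictly inside one of the terms of $\vec{t}$. The fired abstraction $\la{\overline{y}^{\overline{h}}} r$ (with $\overline{h}$ not an identity) therefore falls into exactly one of these two cases. It is convenient to organize the argument as an induction on the structure of $s$, the clauses mirroring those of substitution; the base case $s = x$ forces the redex into the singleton bag and is the degenerate instance of the second case below.

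In the first case the fired redex projects to an exponential step $s \toexp_{\theta, g; f} s'$, where $g$ is exactly the component of the reduction label acting on the bound variable $x$. I would then reduce \emph{under} the binder $\la{x}$, using the contextual closure rules of Figure~\ref{fig:colla}: since the label originally carried by $x$ is the identity, it composes with $g$ to give $(\la{x} s)\vec{t} \toexp (\la{x^{g}} s')\vec{t}$. Here one checks that the input type of the abstraction is preserved by this step, so the covariant action forced on the argument by the application rule is trivial and the bag $\vec{t}$ is left untouched; thus $\vec{q} = \vec{t}$ and $p = s'$. It then remains to verify $\subst{\ract{s'}{g}}{x}{[\nu^{g}]\vec{t}} \cong u$, which is precisely Lemma~\ref{lem:morpsubs}(1): that lemma yields $\subst{s}{x}{\vec{t}} \toexp^{\ast} w \cong \subst{\ract{s'}{id; g}}{x}{[\nu']\vec{t}}$ together with the coherence equation relating the permutations $\sigma_{s, \vec{t}}$ and $\sigma_{\ract{s'}{g}, [\nu']\vec{t}}$, and since the single step we fired is the one singled out by that reduction, $u$ is identified with $w$ up to the stated isomorphism.

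In the second case the redex lies inside $\vec{t}$, giving a step $\vec{t} \toexp \vec{t}'$. Now I would reduce inside the \emph{argument} bag of the redex; the application rule of Figure~\ref{fig:colla} forces the covariant action $[\overline{f} \multimap o]$ on the function part, which by the clause of Figure~\ref{fig:la} for abstractions turns $\la{x} s$ into $\la{x^{g}} s$ with $g$ read off the label of the bag step. Firing the resulting redex $(\la{x^{g}} s)\vec{t}'$ and appealing to Lemma~\ref{lem:morpsubs}(2) then delivers $\subst{\ract{s}{g}}{x}{[\nu^{g}]\vec{t}'} \cong u$, the desired conclusion with $p = s$ and $\vec{q} = \vec{t}'$. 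In the base case $s = x$ one has $\subst{x}{x}{\seq{t}} = t \toexp u$, and reducing $t$ inside the singleton bag of $(\la{x} x)\seq{t}$ closes the goal at once.

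The reduction picture above is straightforward; the real work lies entirely in the bookkeeping of the type morphisms decorating every step. Because exponential reduction is not type-preserving and linear substitution itself injects the permutations $\sigma_{s, \vec{t}}$, the two sides of the conclusion cannot be equal, and the isomorphism $\cong$ is stated precisely to absorb these permutations. Discharging the label-matching therefore reduces to the commutation identities between substitution and the actions of morphisms: Lemma~\ref{lem:acto} (substitution commutes with the contravariant action), Lemma~\ref{lem:subassoc} (associativity of substitution, needed to rearrange the nested substitutions produced when a binder inside $s$ is fired), and the coherence equations packaged inside Lemma~\ref{lem:morpsubs}. I expect the main obstacle to be arranging these so that the single permutation witnessing $\cong$ is consistent across both cases and across the recursive calls, rather than any difficulty in locating the redex.
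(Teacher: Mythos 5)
Your proof skeleton (induction on $s$, tracing the fired redex back through the linear substitution) matches the paper's, and your Case 2 is essentially sound (though note that no redex is ever ``fired'': the lemma only asks for the reconstructed configuration $(\la{x^g}s)\vec{t}'$, and firing it would be a linear step, not an exponential one). The genuine gap is the opening claim of Case 1: you assert that a fired binder inherited from $s$ ``projects to an exponential step $s \toexp_{\theta,g;f} s'$''. This does not follow from the fact that every binder of $\subst{s}{x}{\vec{t}}$ comes from $s$ or from $\vec{t}$. By the abstraction clause of Figure~\ref{fig:linsub}, substitution replaces a binder label $f$ by $\sigma' f$, where $\sigma'$ is a permutation generated by the substitution itself; nothing in your argument excludes the configuration $f = \idm$ with $\sigma'$ non-trivial, in which case $\subst{s}{x}{\vec{t}}$ has an exponential redex at that binder while $s$ has none. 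Then there is no step of $s$ to feed into Lemma~\ref{lem:morpsubs}(1), and Case 1 collapses. This is not a pedantic worry in this calculus: the paper devotes a remark to the fact that the permutations injected by substitution create exponential redexes (this is the very reason the two reductions commute only weakly, and why the present lemma is stated with $\toexp^\ast$, which allows the reconstructing reduction to be \emph{empty}). Your closing paragraph treats all substitution permutations as endpoint bookkeeping to be absorbed by $\cong$; the problem is that they also threaten the \emph{existence} of the projected redex, i.e.\ the case analysis itself.

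The paper's proof is structured precisely so as not to need your projection claim. In the case $s = \la{\tysy^{\tysf}}s'$ it writes $\subst{s}{x}{\vec{t}} = \la{\tysy^{\sigma\tysf}}\subst{s'}{x}{\vec{t}}$, and when the root redex fires it reconstructs $s \toexp^\ast \la{\tysy}\ract{s'}{\idm;\tysf}$ --- one step if $\tysf$ is not an identity, zero steps if it is --- and bridges the residual discrepancy with Lemma~\ref{lem:isomone}, which yields $\ract{s'}{\idm;\tysf} \cong \ract{s'}{\idm;\sigma\tysf}$; internal steps are then handled by the induction hypothesis and contextuality, with no appeal to Lemma~\ref{lem:morpsubs} at all. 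To repair your argument you would either have to state and prove a lemma the paper does not have (that the components of $\sigma_{s,\vec{t}}$ acting on variables bound in $s$, hence fresh for $\vec{t}$, are identities, so binder labels of $s$ survive substitution unchanged), or split Case 1 as the paper does, allowing an empty reconstruction and discharging the permutation through the isomorphism rather than through Lemma~\ref{lem:morpsubs}. Your idea of closing the cases with Lemma~\ref{lem:morpsubs}(1)--(2) instead of redoing the induction is an attractive economy (it also silently needs that the reduct produced by that lemma coincides with the given $u$, which holds by determinism of firing a fixed redex), but it only applies once a projected step is known to exist.
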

\begin{proof} We prove the interesting cases. By induction on $s $.

 If $ s = x \overline{v} $, we have that $ \vec{t} = \seq{t} \cdot \vec{t}' $ and $ \subst{s}{x}{\vec{t}} = t \subst{\overline{v}}{x}{\vec{t}'} $. We reason by cases on the step of exponential reduction. Let $ t \toexp_{\theta; f \multimap o} t' $. Then $\subst{s}{x}{\vec{t}} \toexp t [f]\subst{\overline{v}}{x}{\vec{t}'} $. Then we set $ \vec{q} = \seq{t'} \cdot \vec{t}' $, $ p = x$ and $ g = f $. We can conclude since
\[        (\la{x } x \overline{v}) \seq{t} \cdot \vec{t}' \toexp (\la{x^f} x \overline{v}) \seq{t'} \cdot \vec{t}'          .     \]
Otherwise, we have that $ \subst{\overline{q}}{x}{\vec{t}'} \toexp u $. In that case, we apply the IH and conclude by contextuality.

Let $ s = \la{\tysy^{\tysf}} s' $, with $ \tysy \cap \tysx = \emptyset   $. Then $  \subst{s}{x}{\vec{t}}  =  \la{ \tysy^{\sigma\tysf}} \subst{s'}{x}{\vec{t}}  $. There are two possible cases. Let \[\subst{s}{x}{\vec{t}} \toexp  \la{ \tysy} \ract{\subst{s'}{x}{\vec{t}}}{id; \sigma \tysf}\]
Then we can conclude, since $  s \toexp^\ast  \la{\tysy} \ract{s'}{id; \tysf} $ and by Lemma \ref{lem:isomone} we have that $\ract{s'}{id; \tysf} \cong \ract{s'}{id; \sigma\tysf}  $. Otherwise, $           \subst{s}{x}{\vec{t}} \toexp \la{\tysy^{\tysg\tysf}} s''  $ with $ \subst{s'}{x}{\vec{t}} \toexp_{\theta', \tysg ; o}  s''  $. BY IH we have a morphism $  \tysh $ and terms $ p, \vec{q} $ s.t. $    s'' \cong \subst{\ract{p}{id;h}}{x}{[\nu^h] \vec{q}}        $ and $    (\la{x} s')\vec{t} \toexp^\ast (\la{x^h} p)\vec{q}    $. We can then conclude by contextuality.
\end{proof}

\begin{theorem}[Commutation]\label{theo:commutationap}
Let $  s \toli^{\ast} t \toexp^\ast t'  $ Then there exist terms $  u, u' $ s.t. $ s \toexp^\ast u \toli^\ast u'  $ with $ t' \cong u'  $. 
\end{theorem}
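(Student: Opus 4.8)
The plan is to prove the commutation statement
\[
s \toli^{\ast} t \toexp^{\ast} t' \quad\Longrightarrow\quad s \toexp^{\ast} u \toli^{\ast} u' \text{ with } t' \cong u'
\]
by reducing it to a single-step commutation diamond and then iterating. First I would establish the \emph{elementary commutation lemma}: if $s \toli p$ (one linear step) and $s \toexp q$ (one exponential step), then there exist $u$ and a term $u'$ with $p \toexp^{\ast} u'$ and $q \toli^{\ast} u$ such that $u \cong u'$. This elementary square is where the real work lies, and it must be proved by induction on the structure of the linear redex contracted in $s \toli p$, using the immediately preceding lemma (the ``substitution lifting'' lemma, stating that if $\subst{s}{x}{\vec{t}} \toexp u$ then $(\la{x}s)\vec{t} \toexp^{\ast} (\la{x^{g}}p)\vec{q}$ with $\subst{\ract{p}{g}}{x}{[\nu^{g}]\vec{q}} \cong u$) to handle the case where the exponential step occurs \emph{inside} the argument or body of a linear redex that has just fired.

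Having the elementary square, the strategy is a standard tiling argument. I would first lift the single-step diamond to allow an arbitrary exponential \emph{sequence} against a single linear step: given $s \toli p$ and $s \toexp^{\ast} q$, repeated application of the elementary lemma, combined with Lemma (isomorphism is preserved by reduction) which guarantees that $\cong$ can be pushed forward along both $\toexp$ and $\toli$ steps, yields $p \toexp^{\ast} u'$ and $q \toli^{\ast} u$ with $u \cong u'$. The role of the isomorphism preservation lemma is essential here: because the elementary square only closes \emph{up to} $\cong$, each tile I lay introduces a discrepancy that must be transported across the next tile, and that lemma is exactly what lets me do so without the errors accumulating into anything worse than a single final $\cong$. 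Then I would iterate once more, now in the linear direction, to obtain the full statement for $s \toli^{\ast} t$ against $t \toexp^{\ast} t'$: I peel off the linear steps $s = s_0 \toli s_1 \toli \cdots \toli s_n = t$ one at a time, at each stage using the already-established ``single linear step versus exponential sequence'' result to commute the exponential block past the current linear step, obtaining a new exponential block and a residual linear step, and collecting the linear steps on the right.

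The main obstacle I anticipate is the bookkeeping of the \emph{type-morphism labels} and the \emph{ground permutations} generated during the process, rather than the combinatorial shape of the tiling. Because we are in a labelled reduction system, each $\toexp$ and $\toli$ step carries a context morphism $\theta$ and a type morphism $f$, and the linear steps in particular generate ground permutations of the typing context (as noted in Remark on linear subject reduction, $\theta$ is a ground permutation and $f = \idm$). The isomorphism $\cong$ is precisely designed to absorb these permutations, so the crucial invariant to maintain throughout the tiling is that the only obstruction to on-the-nose confluence is a permutation recorded by a $\cong$. Verifying that the labels on the two sides of each tile compose consistently — that the witnessing permutation from the elementary square is compatible with the labels produced by the substitution-lifting lemma and by compositionality (the Compositionality Proposition) of the actions — is the delicate part, and is exactly what the coherence equations in Lemma (actions under exp) and Lemma (actions under lin) are there to supply. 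Once these label-coherence conditions are checked for the single tile, the iteration is routine, and the final $\cong$ between $t'$ and $u'$ is obtained by composing the per-tile isomorphisms, which is well-defined since $\cong$ is an equivalence generated by permutations and composition of permutations is again a permutation.
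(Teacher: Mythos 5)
Your skeleton has the wrong elementary square, and the mismatch is fatal. The theorem is a \emph{postponement} statement about a \emph{sequence}: from $s \toli t \toexp t'$ (a linear step followed, in series, by an exponential step) one must produce $s \toexp^\ast u \toli^\ast u'$ with $t' \cong u'$. Your elementary lemma instead closes a \emph{peak}: it assumes a linear step $s \toli p$ and an exponential step $s \toexp q$ both issuing from the \emph{same} term $s$. In this calculus the two shapes are not interchangeable, because linear steps \emph{create} exponential redexes; the paper's own remark gives the example $\la{z^{\seq{o,\seq{o}\multimap o}}}(\la{x}\la{y}y\seq{x})\seq{z,z}$, an exponential normal form whose linear reduct $\la{z^{\sigma}}z\seq{z}$ contains a fresh exponential redex. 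Consequently, at your final tiling stage, when you try to commute the exponential block $t \toexp^\ast t'$ past the last linear step $s_{n-1} \toli s_n = t$, that block may consist of steps firing redexes that simply do not exist in $s_{n-1}$: there is no peak at $s_{n-1}$, your ``single linear step versus exponential sequence'' result never applies, and the iteration cannot start. Peeling from the other end hits the same wall: one is always left with a linear step \emph{followed by} an exponential block, never two coinitial reductions. Your peak square is a fine statement in itself --- it is essentially the local-confluence diagram, which the paper proves separately using Lemma \ref{lem:morpsubs} --- but joinability of peaks does not yield postponement.

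The repair is to make the elementary step itself sequence-shaped: if $s \toli t \toexp t'$ then $s \toexp^\ast u \toli^\ast u'$ with $t' \cong u'$. This is exactly the paper's proof, by induction on $s \toli^\ast t$: the root case $s = (\la{\overline{x}}p)\overline{q}$, $t = \subst{p}{\overline{x}}{\overline{q}}$ is settled by the substitution-lifting lemma you quote --- note that this lemma speaks of exponential steps \emph{out of} $\subst{p}{\overline{x}}{\overline{q}}$, i.e.\ steps occurring \emph{after} the linear step has fired, which is precisely the creation situation and precisely the sequence shape --- while the contextual abstraction case uses Lemma \ref{lem:actredlin}, and the isomorphism-preservation lemma transports $\cong$ across the remaining steps. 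Your own wording betrays the confusion: you say the substitution-lifting lemma handles ``the case where the exponential step occurs inside the argument or body of a linear redex that has just fired'', but in your peak configuration nothing has fired yet; exponential steps inside the body or argument of a \emph{not-yet-fired} redex are the business of Lemma \ref{lem:morpsubs}, whereas steps after firing are what the substitution-lifting lemma is for. Rewire the proof around the sequence-shaped elementary step and your outer iteration, together with your (correct) remarks about transporting labels and permutations through $\cong$, goes through essentially unchanged.
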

\begin{proof}
By induction on $s \toli^{\ast} t $. We prove the interesting cases.

If $ s = (\la{\tysx } p) \overline{q}$ and $ t = \subst{p}{\overline{x}}{\overline{q}} $, we apply the former lemma and conclude. 

If $ s= \la{\tysx^{\tysf}} p $ then $ t = \la{\tysx^{ \sigma \tysf }} p'$ with $  p \to_{\theta; \sigma; o} p' $ for some permutation $ \theta$ and $ \sigma$. There are two possible cases. Let $ t' = \la{\tysx} \ract{p'}{id; \sigma \tysf}  $. Then $  s \toexp  \la{\tysx} \ract{p}{id; f} $ and we conclude by Lemma \ref{lem:actredlin} since $ \ract{p}{id; f}  \toli \ract{p'}{id; \sigma f} $. Hence $ u' = \la{x^\sigma} \ract{p'}{id; \sigma f} $. 
\end{proof}

\subsection*{Confluence}

\begin{theorem}
The structural reduction is locally confluent.
\end{theorem}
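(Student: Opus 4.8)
The plan is to prove local confluence by the standard critical-pair analysis, arguing by induction on the structure of the source term $s$ and leaning heavily on the appendix lemmas governing the interaction between reduction, linear substitution, and the covariant/contravariant actions of type morphisms. Given two single steps $t_1 \leftarrow s \to t_2$, I would first dispose of the \emph{non-overlapping} cases: when the two contracted redexes lie in disjoint subterms (distinct components of a bag or sequence, or the function part versus the argument part of an application), the two steps commute directly, the closing reductions being furnished by contextual closure. The only delicate point is the label bookkeeping; but the required equation $h_1\cdot l_1 = h_2\cdot l_2$ on composite morphisms follows from the naturality of the ground morphisms, the monoidality of the tensor product of typing contexts, and compositionality of the actions (Lemma~\ref{lem:comp}).

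The \emph{nested} cases, in which one redex sits strictly inside the active part of the other, are exactly what Lemmas~\ref{lem:actred}, \ref{lem:actredlin}, and~\ref{lem:morpsubs} were built to handle. For example, if $s = \la{\tysx^{\overline{f}}} p$ with $\overline{f}$ not an identity, the root step is exponential and carries the contravariant action into the body; closing the diagram against a step fired inside $p$ reduces to joining $\ract{p}{id,\overline{f}}$ and $\ract{p'}{id,\overline{f}}$ whenever $p \to p'$. Lemma~\ref{lem:actred} (inner exponential step) and Lemma~\ref{lem:actredlin} (inner linear step) supply precisely such a common reduct, together with the coherence condition on composite morphisms that matches the labels. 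When the outer step is linear and the inner redex lies in an argument that gets substituted, Lemma~\ref{lem:morpsubs} describes how substitution commutes, up to the isomorphism $\cong$ of Figure~\ref{fig:iso}, with the action propagated onto the arguments. Since $\cong_\sigma$ is realized by a permutation $\sigma$ that is itself a context-morphism label, closing diagrams \emph{up to} $\cong$ is compatible with the labelled notion of confluence, and Lemma~\ref{lem:isomone} together with the preservation of $\cong$ under reduction lets me transport the closing reductions across $\cong$.

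The single genuine \emph{root} critical pair is the overlap at an applied abstraction whose binder label is a non-identity permutation, $s = (\la{\tysx^{\tau}} p)\,\overline{t}$: such a term is simultaneously a linear redex (the label $\tau$ is a permutation) and contains the exponential redex $\la{\tysx^{\tau}} p$ (the label is not the identity). Firing the linear step yields $\subst{p}{\tysx}{[\tau]\overline{t}}$, whereas firing the exponential step first removes $\tau$ from the binder and, after the induced linear step, produces $\subst{\ract{p}{id,\tau}}{\tysx}{[\tau]\overline{t}}$. These coincide, up to $\cong$ and up to the permutation recorded in the label, by Lemma~\ref{lem:acto}, which states exactly that pushing a morphism into a body via the contravariant action commutes with linear substitution; the associated permutations are reconciled through Lemma~\ref{lem:subassoc} and compositionality. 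I expect this critical pair, and more generally the proof-relevant matching of type-morphism labels throughout — guaranteeing that every permutation generated by substitution or by an action is accounted for, modulo $\cong$, in the composite labels of the two branches — to be the main obstacle: the combinatorial overlap structure is simple, but the morphism bookkeeping is where essentially all of the appendix lemmas are consumed. Once every case is closed, local confluence is established, and general confluence (Theorem~\ref{th:confluence}) follows from Newman's Lemma together with strong normalization of structural reduction (Theorem~\ref{th:snstruct}).
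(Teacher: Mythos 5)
Your plan is essentially the paper's own proof: induction on one of the two steps with a case analysis on the relative positions of the redexes, direct commuting squares for disjoint redexes, Lemma~\ref{lem:actred} (resp.\ Lemma~\ref{lem:actredlin}) against a root exponential step, Lemma~\ref{lem:morpsubs} against a root linear step, label coherence from compositionality (Lemma~\ref{lem:comp}), and Newman's Lemma plus Theorem~\ref{th:snstruct} to obtain Theorem~\ref{th:confluence}. The one point where you go beyond the paper is the root critical pair $(\la{\tysx^{\tau}}p)\,\overline{t}$ with $\tau$ a non-identity permutation, which is indeed simultaneously a linear redex and the source of a function-position exponential step. The paper's sketch never confronts this overlap: its root-linear case asserts that the second step ``must be induced by either a step of [the body] or a step of [the argument]'' --- true only when the binder label is the identity, as its notation tacitly assumes --- and its function-position-exponential case likewise only considers second steps inside the body or inside the argument, never the root $\beta$-step. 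Your resolution via Lemma~\ref{lem:acto} (reconciled through Lemma~\ref{lem:subassoc} and compositionality) is the right tool, and is the same lemma the paper itself consumes inside the proof of Lemma~\ref{lem:morpsubs}; the only caveat is that Lemma~\ref{lem:acto} as stated acts by the identity on the bound variable and by $\theta$ on the ambient context, so you need its symmetric instance (action on the bound variable, identity elsewhere), a variant the paper never states but clearly intends. You are also more candid than the paper about the diagrams closing only up to $\cong$: the theorem asks for an exact common reduct while Lemma~\ref{lem:morpsubs} delivers one up to isomorphism, so your observation that $\cong_\sigma$ is realized by a permutation label and is preserved under reduction (Lemma~\ref{lem:isomone}) is precisely what is needed to make either version of the argument close.
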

\begin{proof}
Given $ s \to^{\theta_1; f_1} t_1 $ and $ s \to^{\theta_2; f_2} t_2 $ we have to prove that there exists a term $  t $ and morphisms $  \theta'_1, \theta'_2, f'_1, f'_2 $ s.t. $     t_1 \to^{\theta'_1; f'_1} t    $ and $   t_2 \to^{\theta'_2; f'_2} t  $, with $ \theta'_2  \theta_2 = \theta'_1 \theta_1   $ and $ f_1 f'_1 = f_2f'_2 $.

We proceed by induction on $ s \to^{\theta_1;f_1} t_1 $. Let $  s = \la{\overline{x}^{\overline{f}}} p $ and $  t_1 = \la{\overline{x}}  \ract{p}{id; \overline{f}}$ with $ \theta_1 = \nu^{id,\overline{f}} , f_1 = \nu_{x}^{id,\overline{f}}$ and $s \to_{\nu^{id, \overline{f}}; \nu^{id,\overline{f}}_x \multimap o} t_1  $. Then we have that $ t_2 = \la{\overline{x}^{\overline{gf}}} p' $ with $ p \to^{\theta_2, g; o} p' $ and then $f_2 = id $. We perform the step $   t_2 \to^{\nu^{id, \overline{gf}}; \nu^{id, \overline{gf}}_x \multimap o   } \la{x} \ract{p'}{id;\overline{gf}}      $. Now we need to close the following diagram:
\[\begin{tikzcd}
	\la{\overline{x}^{ \overline{f}}} p && \la{\overline{x}}  \ract{p}{\nu^{id, \tysf}; \overline{f}} \\
	\\
	\la{\overline{x}^{\overline{gf}}} p'  && \la{x} \ract{p'}{\nu^{\overline{fg}};\overline{gf}} 
	\arrow["{\nu^{id, \overline{f}}; \nu^{id, \overline{f}}_x \multimap o}", from=1-1, to=1-3]
	\arrow["{\theta_2; id}"', from=1-1, to=3-1]
	\arrow["{\nu^{id, \overline{fg}}; \nu^{id, \overline{gf}}_x \multimap o}"', from=3-1, to=3-3]
\end{tikzcd}\]
\end{proof}
We can do so by applying Lemma \ref{lem:actred}: there exist a term $  t $, morphisms $\zeta_1,h_1, \zeta_2  $ and ground morphisms $ \mu_1, \mu_2$ s.t. \[  \ract{p}{id; \overline{f}} \to^{ \zeta_1, \mu_1 ; id} t \qquad \qquad  \ract{p'}{id;\overline{gf}} \to^{\zeta_2, \mu_2; id} t \] \[\la{x} \ract{p}{id; \overline{f}} \to^{\zeta_1; id}  \la{x^{\mu_1}} t \to_{ id; \mu_1 \multimap o  } \la{x} t   \] \[ \la{x} \ract{p'}{id;\overline{gf}} \to^{\zeta_2; id}  \la{x^{\mu_2}} t \to_{ id; {\mu_2}  \multimap o  } \la{x} t  \]
We can then conclude since, by Lemma \ref{lem:actred} again, $  \zeta_1 \nu^{id, \overline{f}}_s = \zeta_2 \nu^{ id, \overline{gf}}_{s'} \theta_2 $ and $ {\mu_1} \nu^{\overline{f}}_x = {\mu_2} \nu^{\overline{fg}}_x $.

Let $ s = (\la{\overline{x}} s) \overline{q}  $ and $ t_1 =  \subst{s}{x}{\overline{q}}  $. We remark that the step $ s \to t_2 $ must be induced by either a step of $s $ or a step of $ \overline{q} $. Then we conclude by applying Lemma \ref{lem:morpsubs}.

Let $ s = \la{\overline{x}^{\overline{f}}} p $ and $ t_1 = \la{ \overline{x}^{\overline{gf}}} p' $ with $ p \to^{\theta;g; o } p'  $. Let $ t_2 = \la{ \overline{x}^{\overline{hf}} } p'' $ with $  p \to^{\theta', h; o } p''    $. By the IH we have that there exists $ t' $ s.t. $   p' \to^{\zeta_1,g_1; o} t'  $ and $ p'' \to^{\zeta_2,g_2; o} t'$ with $  \zeta \theta = \zeta' \theta'$ and $ g_1 g = g_2 h $. We then conclude by contextuality of the reduction.

Let $  s = ( \la{\overline{x}^{\overline{f}}} p ) \overline{q}$ and $ t_1 =  (  \la{\overline{x}^{\overline{gf}}} p') \overline{q} $ with $  p \to^{\theta, g; o} p'    $. there are three possible cases. 

 If $  p \to^{\theta', \overline{g'}; o} p'' $ with $ t_1 =  (\la{\overline{x}^{\overline{fg'}}} p'') \overline{q} $ we reason by cases on the second step. If its again a step for $ \la{\overline{x}^{\overline{f}}} p $ we conclude either by applying the IH or by definition of base exponential step. Otherwise
 $     \overline{q} \to^{\theta'; \overline{h}}  \overline{q}'$. In that case, we conclude in this way:
\[\begin{tikzcd}
	{( \la{\overline{x}^{\overline{f}}} p ) \overline{q}} && {( \la{\overline{x}^{\overline{fh}}} p ) \overline{q'}} \\
	\\
	{(\la{\overline{x}^{\overline{gf}}} p') \overline{q}} && {(\la{\overline{x}^{\overline{gfh}}} p') \overline{q'}}
	\arrow["{id \otimes \theta'; id}", from=1-1, to=1-3]
	\arrow["{\theta \otimes id; id}"', from=1-1, to=3-1]
	\arrow["{\theta \otimes id; id}", from=1-3, to=3-3]
	\arrow["{id  \otimes \theta'; id}"', from=3-1, to=3-3]
\end{tikzcd}\] 

If $   \la{\overline{x}^{\overline{f}}} p  \to \la{\overline{x}}  \ract{p}{id, \overline{f}} $. Then $   t_1 =  (\la{\overline{x}}  \ract{p}{id, \overline{f}} )[\nu^{id,\overline{f}}_x] \overline{q} $. There are two possible cases for the second step.
If $ t_2 = (\la{\overline{x}^{\overline{g}}} t') \overline{q}  $ with $ p \to_{\zeta, \overline{g}; o} t'$, we conclude by applying Lemma \ref{lem:actred}. If $ t_2 = (\la{\overline{x}^{\overline{fg}}} p ) \overline{q} $ with $ \overline{q} \to_{\theta; \overline{g}} \overline{q} '$, we apply again Lemma \ref{lem:actred} and conclude.

 \subsection*{Embedding of simple types}
 
 Given $ \Gamma \vdash M : A $, we denote as $ \mathsf{cart}^{\overline{n}^M}_\Gamma $ the canonical ground morphism \[ \mathsf{cart}^{\overline{n}^M}_\Gamma : \intt{\Gamma} \to \intt{\Gamma}^{\overline{n}^M}\]
 which consists of the sequence $ \mathsf{cart}_{A_1}^{n^M_1},\dots, \mathsf{cart}_{A_n}^{n^M_n} $ with $  \Gamma = A_1,\dots, A_k$.

\begin{lemma}\label{lemma:subemsi}
Let $ \Gamma , x : A \vdash M : B $ and $ \Gamma \vdash N : A $. We have that 
\[  \subst{\qt{M}}{x}{ \seq{\qt{N}^{n^{M}_x}}  } = \qt{ \subst{N}{x}{M}} \]
 \end{lemma}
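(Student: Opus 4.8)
The plan is to argue by induction on the structure of the simply-typed term $M$, matching the two sides clause by clause through the inductive definition of the coarse approximation (Figure \ref{fig:coarse}) and that of linear substitution (Figure \ref{fig:linsub}); here I write $\subst{M}{x}{N}$ for the capture-avoiding substitution of $N$ for $x$ in $M$, so that the target is the typed term $\qt{\subst{M}{x}{N}}$. The single ingredient that makes the statement true on the nose, rather than merely up to $\cong$, is the fact recorded just after the definition of $\qt{-}$ that coarse approximations are \emph{strongly uniform}: every bag occurring in $\qt{M}$ or $\qt{N}$ is a list $\seq{t}^k$ of copies of one term, and every intersection type is a uniform list $\seq{a}^k$. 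This is exactly what forces the permutations produced by linear substitution to vanish, and tracking those permutations will be the main obstacle.

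For the base cases, if $M=x$ then $n^M_x=1$, so the substituted bag is the singleton $\seq{\qt{N}}$ and the variable clause of Figure \ref{fig:linsub} gives $\subst{x}{x}{\seq{\qt{N}}}=\qt{N}=\qt{\subst{M}{x}{N}}$; if $M=y$ with $y\neq x$ then $n^M_x=0$, the bag is empty, $y$ has no free occurrence of $x$, and both sides are $\qt{y}$. For $M=\la{y}M_0$, chosen with $y$ fresh and $y\notin\fv{N}$, Figure \ref{fig:coarse} gives $\qt{M}=\la{y^{\mathsf{cart}^m_{A}}}\qt{M_0}$ with $m=n^{M_0}_y$, and binding $y$ does not affect $x$, so $n^M_x=n^{M_0}_x$. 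Applying the abstraction clause of Figure \ref{fig:linsub} yields $\la{y^{\sigma'\mathsf{cart}^m_{A}}}\,\subst{\qt{M_0}}{x}{\seq{\qt{N}^{n^{M_0}_x}}}$, where $\sigma'$ permutes the uniform list $\seq{\intt{A}}^m$ typing $y$. The decisive computation is that post-composing the $m$-ary contraction with any permutation of its $m$ identical outputs returns the same ground morphism, i.e. $\sigma'\circ\mathsf{cart}^m_{A}=\sigma'\circ c^m_{\intt{A}}=c^m_{\intt{A}}=\mathsf{cart}^m_{A}$. The induction hypothesis rewrites the body as $\qt{\subst{M_0}{x}{N}}$, and since substituting $N$ for $x$ touches no occurrence of $y$ we have $n^{M_0}_y=n^{\subst{M_0}{x}{N}}_y$, so the label $\mathsf{cart}^m_{A}$ is precisely the one demanded by $\qt{\subst{M}{x}{N}}=\la{y^{\mathsf{cart}^m_{A}}}\qt{\subst{M_0}{x}{N}}$.

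For the application case $M=M_1M_2$ we have $\qt{M}=\qt{M_1}\seq{\qt{M_2}}$ and $n^M_x=n^{M_1}_x+n^{M_2}_x$. The application and list clauses of Figure \ref{fig:linsub} split $\seq{\qt{N}^{n^M_x}}$ along the decomposition of the context into the part feeding $\qt{M_1}$ and the part feeding the singleton $\seq{\qt{M_2}}$; because all entries are the identical term $\qt{N}$, this split is forced to be $\seq{\qt{N}^{n^{M_1}_x}}$ and $\seq{\qt{N}^{n^{M_2}_x}}$, while the accompanying permutation $\tau$ reorders uniform lists in the merged context and hence leaves the typed term unchanged. Two applications of the induction hypothesis, together with the list clause turning $\subst{\seq{\qt{M_2}}}{x}{\seq{\qt{N}^{n^{M_2}_x}}}$ into $\seq{\subst{\qt{M_2}}{x}{\seq{\qt{N}^{n^{M_2}_x}}}}$, give $\qt{\subst{M_1}{x}{N}}\seq{\qt{\subst{M_2}{x}{N}}}=\qt{\subst{M}{x}{N}}$. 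The genuine work throughout is this bookkeeping of the permutations $\sigma'$, $\tau$ (and the $\sigma_i$ of the list clause) introduced by linear substitution: the plan is to verify, case by case, that strong uniformity collapses each of them to an identity on the relevant typed data, the representative fact being $\sigma\circ c^m=c^m$ for the contractions labelling abstractions. Once this is in place, the two inductive definitions coincide and the equation follows.
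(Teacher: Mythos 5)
Your proposal is correct and follows essentially the same route as the paper's proof: structural induction on $M$, unfolding the definitions of the coarse approximation and of linear substitution clause by clause, with the key observation that the permutations generated by substitution are absorbed by the contraction labels, i.e. $\sigma\,\mathsf{cart}^{n}_{\intt{A}} = \mathsf{cart}^{n}_{\intt{A}}$ for any permutation $\sigma$, thanks to the uniformity of coarse approximations. Your treatment is in fact slightly more explicit than the paper's (the $y\neq x$ base case and the forced splitting of the uniform bag in the application case are spelled out rather than left implicit), but the argument is the same.
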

 \begin{proof}
 By induction on $ M $.
 
  If $  M = x $ then $ \qt{ M} = x $ with $   x : \seq{\intt{A}} \vdash x : \intt{A}  $. Hence $ n^{M}_x = 1 $ and we can conclude since $ \subst{M}{x}{N} = N $.
 
 If $  M = \la{y} M $ with $ \Gamma, x : A, y :  B \vdash M' : C$, we have that $  \qt{M} = \la{y}^{\mathsf{cart}_{\intt{B}}^{n}} \qt{M'}  $ with $ \intt{\Gamma}^{\overline{n}^M}, x : \intt{A}^{m^M}, y : \intt{B}^{n^M} \vdash \qt{M'} : \intt{C} $ ad $ \mathsf{cart}_{\intt{B}}^{n} : \intt{B} \to \intt{B}^{n^M} $. By definition $ \subst{M}{x}{N} = \la{y} \subst{M'}{x}{N} $ and by IH we have that $ \subst{\qt{M'}}{x}{\seq{\qt{M}}^{m^{M}}} = \qt{ \subst{M}{x}{N}} $. We can then conclude by applying the IH and observing that for any simple type $ A$ and permutation $ \sigma $, we have that $  \sigma \mathsf{cart}^{n}_{\intt{A}} = \mathsf{cart}^{n}_{\intt{A}} $.
 
 If $ M = PQ $ with $ \Gamma , x : A \vdash M : B \Rightarrow C $ and $ \Gamma \vdash N : B$, we have that $ \qt{M} = \qt{\Gamma}^{\overline{n}^M}, x : \seq{\intt{A}}^{ m^P + m^Q  } \otimes \qt{\Gamma}^{\overline{n}^{N}} \vdash \qt{M} \seq{\qt{N}} : \intt{C} $. We have that $\subst{M}{x}{N} = \subst{P}{x}{N} \subst{Q}{x}{N} $. By IH we get that $   \subst{\qt{P}}{x}{\seq{\qt{N}}^{m^{P}}} = \qt{\subst{P}{x}{N}} $ and that $ \subst{\qt{Q}}{x}{\seq{\qt{N}}^{m^{Q}}} = \qt{\subst{Q}{x}{N}} $. We conclude again then by applying the IH and observing that   for any simple type $ A$ and permutation $ \sigma $, we have that $  \sigma \mathsf{cart}^{n}_{\intt{A}} = \mathsf{cart}^{n}_{\intt{A}} $.
 \end{proof}

We recall that simple types enjoy subject reduction: if $\Gamma \vdash M : A $ and $ M \to_\beta N $ then $  \Gamma \vdash N : A $. We can then type reduction steps as follows: $ \Gamma \vdash (M \to_\beta N) : A $.

 \begin{theorem}
Let $ \Gamma \vdash (M \to_\beta N) : A  $ then there exists a  cartesian resource term $ t$, ground morphisms $ \nu, \nu'$  s.t. \[   ( \intt{\Gamma}^{n_M} \vdash \qt{M} : \intt{A}) \toexp_{\nu; id} ( \delta \vdash t : \intt{A}) \toli_{\nu'; id} (  \intt{\Gamma}^{n_N} \vdash \qt{N} : \intt{A} ) \]
with $   (\nu'\nu) \mathsf{cart}_\Gamma^{n_M} = \mathsf{cart}_\Gamma^{n_N} .  $
\end{theorem}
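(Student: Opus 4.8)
The plan is to proceed by induction on (the derivation of) the step $M \to_\beta N$, that is, on the position of the contracted redex inside $M$. The guiding observation is that, since the simply typed $\lambda$-calculus enjoys subject reduction, $\qt{M}$ and $\qt{N}$ carry the \emph{same} type $\intt{A}$, and more generally every subterm reduced by the induction hypothesis keeps its type. Consequently, as the labels $\nu;id$ and $\nu';id$ in the statement already suggest, the type-changing component of both the exponential and the linear step will be the identity at every stage: the type change $\overline{f} \multimap o$ produced by an exponential step firing an inner binder is always \emph{absorbed} by the corresponding covariant action $[\overline{f}]$ on the sibling argument, exactly as prescribed by the application rules of Figure~\ref{fig:colla}. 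This is what will let the contextual-closure steps land \emph{precisely} on $\qt{N}$ rather than on a merely coherent term.

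For the base case $M = (\lambda x.P)Q$, $N = \subst{P}{x}{Q}$, I would compute $\qt{M} = (\la{x^{\mathsf{cart}^m_A}} \qt{P})\seq{\qt{Q}}$ from Figure~\ref{fig:coarse}, where $m = n^P_x$ is the number of occurrences of $x$ recorded in $\qt{P}$ and $\mathsf{cart}^m_A : \seq{\intt{A}} \to \seq{\intt{A}}^m$ is the canonical contraction decorating the binder. A single exponential step (the ground step firing the binder, lifted through the application) unfolds this contraction, turning the binder into the identity; by the application closure rule it simultaneously duplicates the argument, $[\mathsf{cart}^m_A]\seq{\qt{Q}} = \seq{\qt{Q}}^m$, while the function's type change $\mathsf{cart}^m_A \multimap \intt{B}$ is cancelled, yielding a label $\nu;id$ whose context component combines the ground morphism of the binder step with the replication morphism $\mu^{\mathsf{cart}^m_A}_{\seq{\qt{Q}}}$ acting on the free variables of $\qt{Q}$, and producing the term $t = (\la{x} \qt{P})\seq{\qt{Q}}^m$. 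A single linear step then fires this now-linear redex, giving $\subst{\qt{P}}{x}{\seq{\qt{Q}}^m}$ with label $\nu';id$ (the type component is the identity by Remark~\ref{rem:linsubj}). Finally, since $m = n^P_x$, the substitution Lemma~\ref{lemma:subemsi} gives $\subst{\qt{P}}{x}{\seq{\qt{Q}}^m} = \qt{\subst{P}{x}{Q}} = \qt{N}$, closing the case.

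For the inductive cases one reduces inside an abstraction ($M = \lambda y.M_0$), inside the function of an application ($M = M_0\,P$), or inside its argument ($M = P\,M_0$), applying the induction hypothesis to $M_0 \to_\beta N_0$ and then the matching contextual-closure rule of Figure~\ref{fig:colla} (lifting through the singleton bag in the argument case). Here the simplification noted above does the work: because the hypothesis supplies steps with identity type-component, the induced actions on the non-reduced sibling --- $[\overline{f}]$ on the argument in the function case, $[\overline{f} \multimap o]$ on the function in the argument case --- are all identities, so the sibling is left untouched and the reduct is literally $\qt{N_0}\seq{\qt{P}}$, $\qt{P}\seq{\qt{N_0}}$, or $\la{y^{g}}\qt{N_0}$, i.e.\ $\qt{N}$. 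To match Figure~\ref{fig:colla}, which is phrased on $\eta$-long forms, one first passes to $\eta$-long form; since the translation commutes with $\eta$-expansion, this is harmless.

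The genuinely delicate part, and where I expect the real work to lie, is the morphism bookkeeping, i.e.\ verifying $(\nu'\nu)\,\mathsf{cart}^{n_M}_\Gamma = \mathsf{cart}^{n_N}_\Gamma$. This equation records the change in the multiplicities of the \emph{free} variables caused by duplicating (or erasing) $Q$: in the base case the exponential contribution carries the ground morphism $\mu^{\mathsf{cart}^m_A}_{\seq{\qt{Q}}}$ replicating the free variables of $\qt{Q}$, while the linear contribution is the substitution permutation $\sigma_{\qt{P},\seq{\qt{Q}}^m}$ of Figure~\ref{fig:linsub}. I would assemble the identity from the compositionality of the covariant and contravariant actions (Proposition~\ref{lem:comp}) together with the naturality of the family $\mathsf{cart}$ and the absorption law $\sigma\,\mathsf{cart}^n = \mathsf{cart}^n$ for permutations $\sigma$ (the same fact driving Lemma~\ref{lemma:subemsi}); in the inductive cases it follows by composing the two equations returned by the hypothesis, using Lemmas~\ref{lem:actred}, \ref{lem:actredlin} and \ref{lem:morpsubs} to reconcile the ground morphisms generated when the induced actions and substitutions are pushed past one another.
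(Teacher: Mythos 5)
Your proposal is correct and follows essentially the same route as the paper's proof: induction on the reduction step, with the base case handled by one exponential step unfolding $\mathsf{cart}^m_A$ (duplicating the argument bag via the covariant action in the application closure rule) followed by one linear step, concluded by Lemma~\ref{lemma:subemsi}, and the contextual cases handled by the induction hypothesis plus the matching closure rules, where the identity type-components keep the siblings untouched. The only additions beyond the paper's argument are presentational (the explicit remark on $\eta$-long forms and the more detailed plan for the morphism bookkeeping, which the paper dispatches via the IH and the absorption law $\sigma\,\mathsf{cart}^n = \mathsf{cart}^n$ already used inside Lemma~\ref{lemma:subemsi}).
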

\begin{proof}
By induction on the reduction step $ M \to_{\beta} N $. If $ M = (\la{x^A} P)Q $ and $ N = \subst{P}{x}{Q}  $ then $  \qt{(\la{x^A} P)Q    }  = (\la{ x^{\mathsf{cart}_A^n} } \qt{P}) \seq{\qt{Q}}$. We have that 
\[    (\la{ x^{\mathsf{cart}_A^n} } \qt{P}) \seq{\qt{Q}} \toexp (\la{ x^{id} } \qt{P}) \seq{\qt{Q}}^{[\mathsf{cart}_A^m]}   \]
and $ \seq{\qt{Q}}^{[\mathsf{cart}_A^m]} = \overbrace{\seq{ \qt{Q}, \dots, \qt{Q} }}^{m \text{ times}} $ then we have that 
\[ (\la{ x^{id} } \qt{P}) \seq{\qt{Q}}^{[\mathsf{cart}_A^m]}  \toli \subst{\qt{P}}{x}{\seq{ \qt{Q}, \dots, \qt{Q} }}\]
we can then conclude applying Lemma \ref{lemma:subemsi}: $\subst{\qt{P}}{x}{\seq{ \qt{Q}, \dots, \qt{Q} }} = \qt{\subst{P}{x}{Q}} $.

let $ M = \la{x} P $ with $ P \to_\beta P' $. By IH we have that 
\[  {\qt{P}} \toexp_{\nu, \nu_x; id} {t} \toli_{nu', \nu'_x; id} {\qt{P'}}      \]
then by definition of contextual closure for abstraction, we have that
\[ {\la{  x^{\mathsf{cart}^{n^P}_A}  } \qt{P}} \toexp_{ \nu; id } {\la{ x^{ \nu_x \mathsf{cart}^{n^P}_A}  } t} \toli {\la{ x^{ (\nu'_x \nu_x ) \mathsf{cart}^{n^P}_A}} \qt{P'} }\]
we can then conclude by applying the IH, since we get  $ {(\nu'_x \nu_x ) \mathsf{cart}^{n^P}_A} = {\mathsf{cart}_A^{n^{P'}}}$.

Let $ M = PQ $ with $ P \to_\beta Q $. By definition, we have that 
\[       \qt{  \Gamma \vdash PQ : A } = \intt{\Gamma}^{\overline{n}^P} \otimes \intt{\Gamma   }^{\overline{n}^Q} \vdash \qt{P}\seq{\qt{Q}} : \intt{B}      \]
By IH, there exists $ t' $ and $ \nu'_1, \nu'_2 $ s.t.
\[       \qt{P} \toexp_{\nu'_1; id} t' \toli_{\nu'_2; id} \qt{ P'  }                    \]
  By definition of exponential and linear reductions we have
  \[    \qt{P}\seq{\qt{Q}} \toexp_{\nu'_1 \otimes id; id}  t' \seq{\qt{Q}} \toli_{\nu'_2 \otimes id; id}  \qt{P'}\seq{Q}        \]
  we then set $ t = t' \seq{Q} $.
  
  The other application case is completely symmetric. The coherence condition is satisfied as a direct application of the IH.

\end{proof}

\begin{lemma}
If all bags appearing in $ \ty  $ are singletons, then there exists a simple type $ A$
 s.t. $ \intt{A} = \ty .$\end{lemma}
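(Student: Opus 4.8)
The plan is to proceed by structural induction on the resource type $\ty$, following the grammar $\ty ::= o \mid \tyl \multimap b$ of $\rest$. The statement is essentially the assertion that the translation $\intt{-}$ is \emph{surjective} onto those resource types whose negative bags are all singletons: by definition $\intt{-}$ sends each simple type to such a resource type, and the lemma says every such type arises this way. So I expect to recover the simple type $A$ simply by reading the two defining clauses $\intt{\ast} = o$ and $\intt{A \Rightarrow B} = \seq{\intt{A}} \multimap \intt{B}$ backwards.

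First I would make precise the hypothesis that ``all bags appearing in $\ty$ are singletons'', namely that every list $\tyl$ occurring as the domain of an arrow anywhere inside $\ty$ has length one. The key preliminary observation is that this condition is inherited by every resource-type subexpression of $\ty$, which is exactly what licenses the use of the induction hypothesis on smaller types. In the base case $\ty = o$ I take $A \eqdef \ast$, so that $\intt{\ast} = o = \ty$. In the inductive case $\ty = \tyl \multimap b$, the singleton hypothesis applied to the outermost bag $\tyl$ forces $\tyl = \seq{\ty'}$ for a (unique) resource type $\ty'$, and both $\ty'$ and $b$ again satisfy the singleton condition. By the induction hypothesis there exist simple types $A'$ and $B$ with $\intt{A'} = \ty'$ and $\intt{B} = b$; setting $A \eqdef A' \Rightarrow B$ then gives $\intt{A} = \seq{\intt{A'}} \multimap \intt{B} = \seq{\ty'} \multimap b = \ty$, as desired.

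There is no substantive obstacle here: the only point requiring care is verifying that the singleton condition genuinely propagates to the subterms $\ty'$ and $b$, so that the induction hypothesis applies to each of them. Once that is in place, both cases reduce to a direct pattern match against the defining clauses of $\intt{-}$, and the argument closes immediately.
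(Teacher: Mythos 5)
Your proof is correct and follows essentially the same route as the paper's: a structural induction on $\ty$ where the base case takes $A = \ast$ and the arrow case reads the clause $\intt{A' \Rightarrow B} = \seq{\intt{A'}} \multimap \intt{B}$ backwards, using the singleton hypothesis to write the outer bag as $\seq{\ty'}$. Your explicit remark that the singleton condition propagates to the subexpressions $\ty'$ and $b$ (which is what licenses the induction hypothesis) is a point the paper leaves implicit, but the argument is the same.
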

 \begin{proof}
 By induction on the structure of $ \ty$. If $ \ty = o$ the result is immediate. 
 If $  \ty = \seq{\ty'} \multimap o $, by IH we have that there exists $ A'$ and $ B   $ s.t. $ \intt{A} = \ty' $ and $\intt{B} = b $. By definition of coarse approximation of simple types we set $  A= A' \Rightarrow B$ and then conclude that $ \intt{A' \Rightarrow B } = \seq{\intt{A'}} \multimap \intt{B}  $.
 \end{proof}

\begin{proposition}
$  \gamma \vdash s : \ty$ is qualitative iff there exists a simply typed term $  \Gamma \vdash M : A $ s.t. $  \qt{M} = s $.
\end{proposition}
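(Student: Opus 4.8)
The plan is to prove both implications, relying on the preceding lemma (every resource type all of whose bags are singletons is of the form $\intt{A}$ for a unique simple type $A$) and on the observation, recorded just before the statement, that $\qt{M}$ is always strongly uniform. For the implication ($\Leftarrow$), suppose $s = \qt{M}$ for some $\Gamma \vdash M : A$. I would argue by a straightforward induction on $M$, or simply by inspecting Figure \ref{fig:coarse}, that the produced judgment $\intt{\Gamma}^{\overline{n}} \vdash \qt{M} : \intt{A}$ is qualitative: the context $\intt{\Gamma}^{\overline{n}}$ assigns to each variable a list of copies of a single type and is therefore strongly uniform; the type $\intt{A}$ has only singleton bags, since $\intt{\ast} = o$ and $\intt{A \Rightarrow B} = \seq{\intt{A}} \multimap \intt{B}$; and every argument bag created by the embedding is the singleton $\seq{\qt{N}}$. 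Together with strong uniformity of $\qt{M}$ this is exactly the definition of qualitative.

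For the implication ($\Rightarrow$), I would proceed by induction on the structure of the type derivation of the qualitative judgment $\gamma \vdash s : \ty$, which is unique by uniqueness of type derivations. In the variable case $s = x_i$, the type $\ty$ has only singleton bags, so the previous lemma yields $A$ with $\intt{A} = \ty$; choosing arbitrary simple types for the erased variables (those carrying the empty bag $\seq{}$) produces a simply-typed variable whose image under $\qt{-}$ is $\gamma \vdash x_i : \ty$. In the abstraction case $s = \la{x^\alpha} s'$, qualitativeness forces $\ty = \seq{b} \multimap \ty'$ with $b$ singleton-bagged, so $b = \intt{A}$; the body judgment $\gamma, x : \vec{c} \vdash s' : \ty'$ is again qualitative (its context stays strongly uniform because $\vec{c}$ is a list of copies of $b$, and its bags remain singletons), so the induction hypothesis gives $\Gamma, x : A \vdash M'$ with $\qt{M'} = (\gamma, x : \vec{c} \vdash s' : \ty')$. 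The crucial point is that the ground morphism $\alpha : \seq{b} \to \vec{c}$ is forced to coincide with $\mathsf{cart}^{\lvert\vec{c}\rvert}_A$: because its domain is a singleton, the underlying map in $\Of^c$ is the unique map $[\lvert\vec{c}\rvert] \to [1]$, so $\alpha$ is uniquely determined, and $\qt{\la{x} M'} = \la{x^{\mathsf{cart}^{\lvert\vec{c}\rvert}_A}} \qt{M'} = s$. In the application case $s = s'\seq{t}$ (the bag is a singleton by qualitativeness), I would split $\gamma = \gamma_0 \otimes \gamma_1$ as dictated by the derivation; since a sublist of copies of a type is again copies of that same type, both $\gamma_0$ and $\gamma_1$ are strongly uniform and assign to each shared variable lists over the same underlying type. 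Both premises are thus qualitative, the induction hypothesis produces simply-typed $P$ and $N$, and injectivity of $\intt{-}$ lets me reconcile their contexts into a single $\Gamma$, yielding $M = PN$ with $\qt{M} = \qt{P}\seq{\qt{N}} = s$.

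The main obstacle I anticipate is purely the bookkeeping of contexts rather than any conceptual difficulty: I must check that erased variables (empty bags, corresponding to the weakening $\mathsf{cart}^0_A = \mathsf{T}_{\intt{A}}$) can be assigned consistent simple types across the two premises of an application, and that the underlying type recovered for each variable is independent of the side on which it occurs. This is where I would lean on injectivity of $\intt{-}$ and on the fact that strong uniformity of $\gamma$ is preserved by the multiplicative splitting of contexts. The genuinely load-bearing step, however, is the uniqueness of the ground morphism out of a singleton list, which is what pins the abstraction annotation $\alpha$ down to $\mathsf{cart}^m_A$ and makes the reconstruction match the embedding exactly.
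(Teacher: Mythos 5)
Your proof is correct and follows essentially the same route as the paper's: the paper likewise argues by induction on the structure of $s$ (equivalently, on its unique type derivation), dispatching the variable case with the preceding lemma that singleton-bag types lie in the image of $\intt{-}$, and the abstraction and application cases by the induction hypothesis. The extra details you supply --- in particular that a ground morphism out of a singleton list is forced to be $\mathsf{cart}^{m}_{A}$, and the reconciliation of contexts across the multiplicative split in the application case --- are precisely what the paper's terse ``direct consequences of the IH'' leaves implicit.
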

\begin{proof}
By induction on the structure of $ s $. The variable case follows from the former lemma, the other cases are direct consequences of the IH.
\end{proof}

\subsection*{Embedding of Intersection Types}

Given $ \Gamma \vdash M : A $, we denote as $ \mathsf{cart}^{\overline{n}^M}_\Gamma $ the canonical ground morphism \[ \mathsf{cart}^{\overline{n}^M}_\Gamma : \intt{\Gamma} \to \intt{\Gamma}^{\overline{n}^M}\]

Given a list $ \seqdots{\ty}{1}{k} $ and integers $ n_1,\dots, n_k $, we set $ \seqdots{\ty}{1}{k}^{\seqdots{n}{1}{k}} = \seq{\ty_1^{n_1},\dots, t_k^{n_k} }  $. We recall that idempotent intersection types enjoys subject substitution: if $  \Gamma, x : (A_1 \cap \cdots \cap A_k) \vdash M : B$ and $ \Gamma \vdash N : (A_1 \cap \cdots \cap A_k) $ then $ \Gamma \vdash \subst{M}{x}{N} : B . $
 
\begin{lemma}\label{lemma:subemint}
Let $ \Gamma , x : \tilde{A} = (A_1 \cap \cdots \cap A_k) \vdash M : B $ and $ \Gamma \vdash N : A_i $ for $  i \in [k]$. We have that 
\[  \subst{\qt{ M}^{\Gamma, \tilde{A}}_B}{x}{ (\qt{  N }_{\tilde{A}}^\Gamma)^{\vec{n}_{\tilde{A}}^M}   } \cong_{\sigma_{\qt{ M}^{\Gamma, \tilde{A}}_B,(\qt{  N }_{\tilde{A}}^\Gamma)^{\vec{n}_{\tilde{A}}^M}   }} \qt{ \subst{N}{x}{M} }_B^\Gamma . \]
 \end{lemma}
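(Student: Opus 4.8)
The plan is to prove the statement by induction on the structure of $M$, closely following the proof of the simply-typed substitution lemma (Lemma~\ref{lemma:subemsi}), but relaxing the conclusion from an equality to an isomorphism $\cong$. The conceptual reason the equality of Lemma~\ref{lemma:subemsi} must weaken to a coherence here is that, in the idempotent setting, the variable $x$ is used at several \emph{distinct} components $A_1,\dots,A_k$ of $\tilde A$, so the bag $(\qt{N}_{\tilde A}^{\Gamma})^{\vec n_{\tilde A}^M}$ contains the coarse approximations $\qt{N}_{A_j}$ at different types. These are pairwise \emph{coherent} (all being approximations of the same $N$, by the characterization that $s,s'\lhd M$ iff $s$ and $s'$ are coherent), yet they are in general syntactically different resource terms. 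Linear substitution (Figure~\ref{fig:linsub}) distributes these copies over the occurrences of $x$ in $\qt{M}$ in the left-to-right order dictated by the term, threading a permutation through the context, whereas $\qt{\subst{M}{x}{N}}$ reassembles the same copies in the order prescribed by the translation of Figure~\ref{fig:coarseint}. The two therefore agree only up to a permutation of coherent subterms, which is exactly the content of $\cong_\sigma$ for the substitution-induced permutation $\sigma$.

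First I would treat the base cases. If $M=x$, the variable rule selects a single component $A_j\in\tilde A$, so $\vec n_{\tilde A}^{M}$ is the indicator of $j$ and the bag reduces to the singleton $\seq{\qt{N}_{A_j}}$; linear substitution then returns $\qt{N}_{A_j}=\qt{\subst{M}{x}{N}}$, with witnessing permutation the identity. If $M=y\neq x$ the statement is immediate, since $x$ does not occur. For the abstraction case $M=\la{y} M'$, I would push the substitution under the binder, apply the induction hypothesis to $M'$, and handle the annotation on $y$: as in Lemma~\ref{lemma:subemsi}, the permutation emitted by the inner substitution is absorbed by the contraction morphism, using that for any simple type and permutation $\sigma$ one has $\sigma\,\mathsf{cart}^{n}_{\intt A}=\mathsf{cart}^{n}_{\intt A}$, while the remaining reordering of copies is carried by $\cong$ via Lemma~\ref{lem:isomone}. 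For the application case $M=PQ$, the context and the argument bag split multiplicatively; I would apply the induction hypothesis to $P$ and to $Q$ separately and recombine, checking that the composite permutation produced by the application rule of Figure~\ref{fig:linsub} (built from the canonical $\tau$) matches the product of the two witnesses, using the commutation between the action of morphisms and substitution recorded in Lemma~\ref{lem:acto}. The intersection (bag) rule is handled componentwise by the same argument.

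The main obstacle will be the precise bookkeeping of permutations: I must show that the permutation emitted by linear substitution is \emph{exactly} the witness of the asserted coherence, not merely \emph{some} permutation making the two terms coherent. This requires tracking, through the induction, how the left-to-right enumeration of the occurrences of $x$ in $\qt{M}$ relates to the enumeration of the copies of $\qt{N}_{A_j}$ inside the bag fixed by the translation, and verifying that the permutation-absorption identities (for $\mathsf{cart}$ in the abstraction case and for $\tau$ in the application case) line up with the definition of $\cong$ in Figure~\ref{fig:iso}. I expect the abstraction case, where a permutation is simultaneously introduced by the binder annotation and threaded by the substitution, to concentrate the difficulty; Lemma~\ref{lem:isomone} (stability of $\cong$ under the covariant and contravariant actions) together with the compositionality of those actions (Proposition~\ref{lem:comp}) will be the tools that let this bookkeeping close.
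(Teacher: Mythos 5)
Your proposal follows essentially the same route as the paper's proof: induction on the structure of $M$, with identical treatment of the variable, abstraction, application, and intersection cases, relaxing the equality of the simply-typed Lemma~\ref{lemma:subemsi} to the isomorphism $\cong$ witnessed by the substitution-induced permutation. One minor imprecision: in the idempotent setting the absorption identity $\sigma\,\mathsf{cart}^{n}_{\intt{A}}=\mathsf{cart}^{n}_{\intt{A}}$ need not hold literally (the components $A_i$ may be distinct types), but, as you yourself note, the residual reordering is instead internalized by the abstraction rule of $\cong$ in Figure~\ref{fig:iso}, which is exactly how the paper closes that case.
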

 \begin{proof}
 By induction on the structure of $M $. If $ M = x $ then $  \subst{M}{x}{N} = N  $. We have that 
 \[       \begin{prooftree}
 \hypo{  A \in \tilde{A} }
 \infer1{  x_1: \tilde{A}_1 ,\dots, x : \tilde{A}, \dots, x_n : \tilde{A}_n \vdash x : A    }
 \end{prooftree}      \]
 and $ \Gamma \vdash N : B$ for all $ B \in \tilde{B} $. In particular then, $\Gamma \vdash N : A $. We have that 
 \[  \qt{       \begin{prooftree}
 \hypo{  A \in \tilde{A} }
 \infer1{  x_1: \tilde{A}_1 ,\dots, x : \tilde{A}, \dots, x_n : \tilde{A}_n \vdash x : A    }
 \end{prooftree}        } = \begin{prooftree} 
 \infer0{ x : \seq{ \intt{A} } \vdash x : \intt{A}  }
 \end{prooftree}  \]
 We can then conclude, since $   \qt{  \subst{M}{x}{N}    }^\Gamma_B  = \qt{ N}^\Gamma_A   $. 
 
 If $ M = y $ with $ y \neq x $, then the result is immediate by definition of coarse approximation.
 
 Let $  M = \la{y} P $, then $  \subst{M}{x}{N} = \la{y} \subst{P}{x}{N}  $. We have that
 \[    \begin{prooftree}  
 \hypo{ \Gamma, x : \tilde{A}, y : \tilde{B} \vdash P : B }
 \infer1{    \Gamma, x: \tilde{A} \vdash \la{y} P : \tilde{B} \Rightarrow B        }
 \end{prooftree}                               \]
 
  by IH we have that \[ \subst{\qt{ P  }_{B}^{\Gamma, \tilde{A}, \tilde{B}} }{x}{(\qt{ N}_{\tilde{A}^\Gamma})^{\vec{n}^P_{\tilde{A}}}}  \cong \qt{ \subst{P}{x}{N}  }^{\Gamma, \tilde{B}}_{B}   \]
  
 Hence , in particular, we have that $ (\intt{\Gamma}^{\overline{n}^M} \otimes \intt{\Gamma}^{\overline{n}^N})^{[\sigma]} = \intt{ \Gamma  }^{\overline{n}^{\subst{M}{x}{n}}}$ where $ \sigma$ is the appropriate permutation induced by linear substitution. We conclude then by applying the IH, since \[\la{ x^{ \sigma \mathsf{cart}_{\title{B}}^{\vec{n}}  } } \subst{\qt{ P}^{\Gamma, \tilde{A}, \tilde{B}}_B}{x}{(\qt{  N}^{\Gamma, \tilde{B}}_{\tilde{A}})^{\vec{n}^P_x}}  \cong \la{ x^{\mathsf{cart}_{\title{B}}^{\vec{n}}  } } \qt{ \subst{P}{x}{N} }^{\Gamma, \tilde{B}}_B . \]
 
 If $ M = PQ $ then $  \subst{M}{x}{N} = \subst{P}{x}{N}\subst{Q}{x}{N} $. By definition we have that 
 \[   \qt{M}^{\Gamma, \tilde{A}}_B = \qt{P}^{\Gamma, \tilde{A}}_{\tilde{B } \Rightarrow B} \qt{Q}^{\Gamma}_{\tilde{B}}     \]
 for some intersection type $ \tilde{B } $. By definition of linear substitution: 
 \[    \subst{\qt{M}^{\Gamma, \tilde{A}}_B}{x}{ \qt{N}^{\vec{n}^{M}_{\tilde{B}}}  } = \subst{\qt{P}^{\Gamma, \tilde{A}}_{\tilde{B } \Rightarrow B}}{x}{  \qt{N}^{\vec{n}^{M}_{\tilde{B}}}_1 }  \subst{\qt{Q}^{\Gamma}_{\tilde{B}}}{x}{\qt{N}^{\vec{n}^{M}_{\tilde{B}}}_2} . \]
  We observe that $ \vec{n}^M_{\tilde{B}} = \vec{n}_1 \oplus \vec{n_2} $ with 
\[     \qt{N}^{\vec{n}^{M}_{\tilde{B}}}_1 = \qt{N}^{\vec{n}_1} \qquad \qt{N}^{\vec{n}^{M}_{\tilde{B}}}_2 =   \qt{N}^{\vec{n}_2}     \]  
   We can then apply the IH and get: 
 \[  \subst{\qt{P}^{\Gamma, \tilde{A}}_{\tilde{B } \Rightarrow B}}{x}{  \qt{N}^{\vec{n}^{M}_{\tilde{B}}}_1 } \cong_{\sigma_1} \qt{ \subst{P}{x}{N} }_{\tilde{B} \Rightarrow B}^\Gamma \]
 \[ \subst{\qt{Q}^{\Gamma}_{\tilde{B}}}{x}{\qt{N}^{\vec{n}^{M}_{\tilde{B}}}_2} \cong_{\sigma_2} \qt{\subst{Q}{x}{N}}^\Gamma_{\tilde{B}}  .\]  
 Hence, we conclude by definition of resource term isomorphism, since
 \[        \subst{\qt{P}^{\Gamma, \tilde{A}}_{\tilde{B } \Rightarrow B}}{x}{  \qt{N}^{\vec{n}^{M}_{\tilde{B}}}_1 }\subst{\qt{Q}^{\Gamma}_{\tilde{B}}}{x}{\qt{N}^{\vec{n}^{M}_{\tilde{B}}}_2} \cong_{(\sigma_1 \otimes \sigma_2) \tau}  \qt{ \subst{P}{x}{N} }_{\tilde{B} \Rightarrow B}^\Gamma   \qt{\subst{Q}{x}{N}}^\Gamma_{\tilde{B}}      \]
 where $\tau $ is the appropriate permutation induced by linear substitution.
 
 The intersection type case follows the same structure of the application one.
 \end{proof}

 We recall that also idempotent intersection types enjoys subject reduction: if $\Gamma \vdash M : A $ and $  M \to_\beta N$ then $ \Gamma \vdash N : A $. We can then again type reduction steps as follows: $ \Gamma \vdash (M \to_\beta N) : A $.

 \begin{theorem}
Let $ \Gamma \vdash (M \to_\beta N) : A  $ then there exists a  cartesian resource terms $ t_1, t_2$, ground morphisms $ \nu_1, \nu_2$  s.t. \[   ( \intt{\Gamma}^{n_M} \vdash \qt{M} : \intt{A}) \toexp_{\nu; id} ( \delta_1 \vdash t_1 : \intt{A}) \toli_{\nu'; id} (  \delta_2 \vdash t_2 : \intt{A} ) \]
with $ t_2 \cong \qt{\Gamma \vdash N : A}  $ and $   (\sigma \nu'\nu) \mathsf{cart}_\Gamma^{n_M} = \mathsf{cart}_\Gamma^{n_N} .  $
\end{theorem}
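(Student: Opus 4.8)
The plan is to proceed by induction on the $\beta$-reduction step $M \to_\beta N$, following exactly the strategy of the preceding theorem for simple types, with the crucial difference that the substitution lemma available here (Lemma~\ref{lemma:subemint}) identifies the coarse approximation of a reduct only \emph{up to} the isomorphism relation $\cong$. This is precisely why the conclusion weakens the on-the-nose equality $t_2 = \qt{N}$ of the simply-typed statement to $t_2 \cong \qt{N}$, and why a permutation $\sigma$ is carried in the coherence equation $(\sigma\nu'\nu)\mathsf{cart}^{n_M}_\Gamma = \mathsf{cart}^{n_N}_\Gamma$.

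First I would treat the base case, where $M = (\la{x} P)Q$ with $\Gamma, x : \tilde{A} \vdash P : B$, $\Gamma \vdash Q : \tilde{A}$, and $N = \subst{P}{x}{Q}$. By Figure~\ref{fig:coarseint} one has $\qt{M} = (\la{x^{\bigoplus\mathsf{cart}^{n_{\sigma(i)}}_{A_{\sigma(i)}}}}\qt{P})\seq{\qt{Q}}$, where the label on the bound variable is a genuinely nontrivial cartesian morphism encoding the contractions recorded by the multiplicities $\vec{n}^M$. Firing the exponential redex at this abstraction turns the label into an identity while propagating the contraction morphism onto the argument bag, so that $\qt{M} \toexp (\la{x} [\,\cdot\,]\qt{P})\,\overline{q}$, where $\overline{q}$ is obtained from $\seq{\qt{Q}}$ by the prescribed duplications; this is the intersection-type analogue of the duplication step in the simply-typed proof. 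The resulting redex is now linear, so a single linear step performs the substitution, $(\la{x} [\,\cdot\,]\qt{P})\,\overline{q} \toli \subst{[\,\cdot\,]\qt{P}}{x}{\overline{q}}$. I would then invoke Lemma~\ref{lemma:subemint} to conclude that this linear substitution is $\cong \qt{N}$, reading off both the witnessing permutation $\sigma$ and the coherence identity from the statement of that lemma, together with the invariance of the ground morphisms $\mathsf{cart}$ under the relevant permutations, exactly as in the simply-typed case.

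Next I would handle the two contextual cases. For $M = \la{x} P$ with $P \to_\beta P'$, the induction hypothesis yields a factorization $\qt{P} \toexp t_1 \toli t_2 \cong \qt{P'}$ with its coherence identity; wrapping everything inside the abstraction context via the contextual-closure rules of Figure~\ref{fig:colla} gives the desired factorization for $\qt{M}$, and the isomorphism lifts because $\cong$ is a congruence for abstraction (Figure~\ref{fig:iso}). For $M = PQ$ with a reduction in either $P$ or $Q$, the argument is symmetric: apply the induction hypothesis to the reduced subterm and close under the application context, again propagating $\cong$ through the congruence rule for application. In both cases the bookkeeping of the ground morphisms $\nu, \nu'$ and of the permutation $\sigma$ is inherited from the induction hypothesis, using compositionality (Proposition~\ref{lem:comp}) and the naturality of $\mathsf{cart}^{\overline{n}}_\Gamma$.

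The hard part will be the base case: unlike the simply-typed setting, where linear substitution reproduces $\qt{N}$ exactly, here the multiplicities attached to the distinct components of the intersection type $\tilde{A}$ can be permuted by linear substitution, so one only recovers $\qt{N}$ after a context permutation. Controlling this permutation---verifying that the $\sigma$ produced by Lemma~\ref{lemma:subemint} is precisely the one making $(\sigma\nu'\nu)\mathsf{cart}^{n_M}_\Gamma = \mathsf{cart}^{n_N}_\Gamma$ hold, and that it propagates coherently through the inductive contextual cases via the congruence property of $\cong$---is the delicate point, and is exactly where working up to the isomorphism relation, rather than strict equality, becomes essential.
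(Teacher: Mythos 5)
Your proposal follows the same route as the paper's proof: induction on the $\beta$-step, with the base case handled by one exponential step that discharges the $\bigoplus\mathsf{cart}$ label (duplicating the argument bag $\seq{\qt{Q}_{A_1},\dots,\qt{Q}_{A_k}}$ componentwise according to the multiplicities $n_i^Q$), one linear step performing the substitution, and then Lemma~\ref{lemma:subemint} to identify the result with $\qt{N}$ up to $\cong_\sigma$ together with the coherence equation on the $\mathsf{cart}$ morphisms; the contextual cases are likewise dispatched by the induction hypothesis and the congruence of $\cong$. This matches the paper's argument essentially step for step, including the correct diagnosis of why the intersection-type case only yields $t_2 \cong \qt{N}$ rather than equality.
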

\begin{proof}
By induction on the reduction step $ M \to_\beta N  $. Let $ M = (\la{x}P)Q $ and $ N = \subst{M}{x}{N} $. By definition we have that 
\[   \qt{M}^\Gamma_A = \intt{ \Gamma }^{\overline{n}^{ \la{x}P}} \otimes                 \intt{\Gamma}^{\overline{n}^Q} \vdash (\la{x^{\bigoplus \mathsf{cart}_{A_i}^{n_i^P}}} \qt{P  }^{\Gamma}_{(A_1 \cap \cdots \cap A_k)} ) \seq{\qt{Q}_{A_1}^\Gamma, \dots, \qt{Q}_{A_k}^\Gamma} \]
By definition of exponential reduction, we have that 
\[    {\qt{M}^\Gamma_A} \toexp_{ id \otimes ((\mathsf{cart}_{A_1}^{n_1}),\dots, (\mathsf{cart}_{A_k}^{n_k}) ); id}  {t_1 = (  (\la{x} \qt{P  }^{\Gamma}_{(A_1 \cap \cdots \cap A_k)} ) \seq{(\qt{Q}_{A_1}^\Gamma)^{n_1^Q}, \dots, (\qt{Q}_{A_k}^\Gamma)^{n_k^Q}}}  )\]
then we have that, by definition of linear reduction
\[    t_1 \toli_{\sigma; id} t_2 = ( \subst{\qt{P  }^{\Gamma}_{(A_1 \cap \cdots \cap A_k)}}{x}{\seq{(\qt{Q}_{A_1}^\Gamma)^{n_1^Q}, \dots, (\qt{Q}_{A_k}^\Gamma)^{n_k^Q}}} )      \]
By Lemma \ref{lemma:subemint} we can conclude that
\[  t_2 \cong_\sigma \qt{ N  }_A^\Gamma\]
with $  \sigma (id \otimes ((\mathsf{cart}_{A_1}^{n_1}),\dots, (\mathsf{cart}_{A_k}^{n_k}) )) = \mathsf{cart}_\Gamma^{\overline{n}^N} $.

Let $ M = \la{x} P $ and $ N = \la{x} P' $ with $  P \to P'$.

By IH we have that there exist $t'_1,t'_2, \nu'_1, \nu'_2 , \sigma'$ s.t. 
\[ \qt{P}_A^{\Gamma, \tilde{A}} \toexp_{\nu'_1; id} t'_1 \toli_{\nu'_2; id} t'_2 \]
with $ t'_2 \cong_{\sigma'} \qt{N}_A^{\Gamma, \tilde{A}} . $
Hence
\[ \qt{\la{x} P}_{\tilde{A} \Rightarrow A}^\Gamma = \la{x}^{\mathsf{cart}_{\tilde{A}}^{\vec{n}}}    \]

Then
The application case is a direct consequence of the IH.
\end{proof}

\begin{remark}\label{rem:unifprof}
We observe that if $ \gamma \vdash s : \ty = \qt{\Gamma \vdash M : A}  $ then $  s \lhd M$ and $ \gamma \lhd \Gamma, \ty \lhd A $. Moreover, if $ s \coh s' $ and $ s \lhd M , s' \lhd M'$ then $ M = M' $.
\end{remark}

%\begin{lemma}
%If $  \ty \coh \ty$ then there exists an intersection type $A $ s.t. $  \ty = \intt{A}$.
%\end{lemma}
%\begin{proof}
%By induction on the structure of $\ty $. If $ \ty = o$ then the result is immediate. If $  \ty = \tyl \multimap b $  then $\tyl \coh \tyl $ and $b \coh b $. We apply the IH and conclude. If $\ty = \seqdots{\ty}{1}{k} $, by definition of coherence $ \ty_i \coh \ty_i $. Then, by IH, there exist $ A_i $ s.t. $ \intt{A_i} = \ty_i$. Then we set $  A = (A_1 \cap \cdots \cap A_k)$.
%\end{proof}

\begin{proposition}
$\gamma \vdash s : \ty $ is a uniform term iff there exists an idempotent intersection typing  $ \Gamma \vdash M : A $ s.t. $ \qt{\Gamma \vdash M : A} = \gamma \vdash s : \ty   $.
\end{proposition}
\begin{proof}
$ (\Rightarrow) $ by induction on the structure of $s $, exploiting Remark \ref{rem:unifprof}. The list case is the only non trivial one. Let $ s = \seqdots{s}{1}{k} $ with 
\[     \begin{prooftree}
\hypo{ \gamma_1 \vdash s_1 : \ty_1 \ \cdots \ \gamma_k \vdash s_k : \ty_k }
\infer1{   \gamma_1 \otimes \cdots \otimes \gamma_k \vdash \seqdots{s}{1}{k} : \seqdots{\ty}{1}{k}     }
\end{prooftree}   \]
We have that $  s_i $ is uniform. Then, by IH, there exists $ M_i $ s.t. $ \qt{M_i} = s_i $. Since $ s $ is uniform, we have that $ s_i \coh s_j $ for all $ i,j$. Then we can conclude by Remark \ref{rem:unifprof}.

$(\Leftarrow)$ By induction on the structure of $M $. The interesting case is the introduction of the intersection type:
\[   \begin{prooftree}
\hypo{ \Gamma \vdash M : A_1 \ \cdots \ \Gamma \vdash M : A_k          }
\infer1{    \Gamma \vdash M : (A_1 \cap \cdots \cap A_k)   }
  \end{prooftree}        \]
  Then, by IH, $ \qt{\Gamma \vdash M : A_i} $ is uniform. We observe that if $ s,s' \lhd M $ then $ s \coh s' .$ We can then conclude that $ \seqdots{s}{1}{k}$ is uniform.
 \end{proof}

%\begin{IEEEkeywords}
%component, formatting, style, styling, insert.
%\end{IEEEkeywords}

%\clearpage
%\onecolumn
%\appendix

\end{document}